\newif\ifanonymous
\newif\ifdraft
\newcommand*\OK{\ding{51}}
\newcommand*\NO{\ding{55}}
\definecolor{lightblue}{rgb}{0.68, 0.85, 0.9}
\tikzset{
	block/.style={
		draw, 
		rounded corners,
		text width=15em,
		text centered,
		minimum height=5ex,
		fill=lightblue
	},
	line/.style={
		draw, 
		-Stealth
	},
}
\tikzset{ 
	roundnode/.style={circle, draw=green!60, fill=green!5, minimum size=5ex},
}
\tikzset{
squarednode/.style={rectangle, draw=red!60, fill=red!5, minimum size=5ex},
}
\tikzset{
	block1/.style={
		draw, 
		rounded corners,
		text width=16.5em,
		text centered,
		minimum height=5ex,
		fill=lightblue
	},
	line/.style={
		draw, 
		-Stealth
	},
}
\tikzstyle{process} = [rectangle, minimum width=2cm, minimum height=1cm, text centered, text width=2cm, draw=black, fill=white]
\tikzstyle{process1} = [rectangle, minimum width=1cm, minimum height=1cm, text centered, text width=1cm, draw=black, fill=white]
\tikzstyle{arrow} = [thick,->,>=stealth]
\lstdefinelanguage
[x64]{Assembler}     
{morekeywords={cbnz,ldr, cmp, ldp, eq, ge, mul, nop, add, lsr, eor, and, M, Z, N, V, ldtm, mov, ret,b, bl, ble, bne,%
		udiv, bics, ldrb, ror, sxtw, for, do, end,  procedure, or, adrp,  %
		xzr, wzr,x0,X1,x1,X2,x2,X3,x3,w4,x4,x5,x6,x7,x8,x9, w0, w1, w2, w3,w4,w5,w6,w7,w8,w9,%
		x10,x11,x12,x13,x14,x15,x16,x17,x18,x19,%
		x20,x21,x22,x23,x24,x25,x26,x27,x28,x29,%
		x30,x31,sp},
	alsoletter={x\#}}[strings,comments,keywords] %
\definecolor{mGreen}{rgb}{0,0.6,0}
\definecolor{mGray}{rgb}{0.5,0.5,0.5}
\definecolor{mPurple}{rgb}{0.58,0,0.82}
\definecolor{backgroundColour}{rgb}{0.95,0.95,0.92}
\def\lst@numbersymbol{}
	\let\lst@PlaceNumber\@empty
\def\lst@PlaceNumber{\llap{\normalfont
				\lst@numberstyle{\thelstnumber\lst@numbersymbol}\kern\lst@numbersep}}\\%
\def\lst@PlaceNumber{\rlap{\normalfont
				\kern\linewidth \kern\lst@numbersep
				\lst@numberstyle{\lst@numbersymbol\thelstnumber}}}%
\renewcommand*\thelstnumber{\the\numexpr 4*\c@lstnumber\relax}
\lstdefinestyle{asmstyle}{
	language=[x64]Assembler,
	basicstyle=\fontsize{7.5}{9}\ttfamily,
	keywordstyle=\bfseries\color{darkgray},
	breaklines=true,
	mathescape=true,
	keepspaces=true,
	showspaces=false,
	showstringspaces=false,    
	xleftmargin={0.75cm},
	firstnumber = 0,
	numbers=left,%
	numberstyle={\footnotesize \color{mGreen}},
	numbersep=5pt, 
}
\lstdefinestyle{birstyle}{
	language=[x64]Assembler,
	basicstyle=\fontsize{7.5}{9}\ttfamily,
	keywordstyle=\bfseries\color{mGray},
	breaklines=true,
	mathescape=true,
	keepspaces=true,
	showspaces=false,
	showstringspaces=false,    
	xleftmargin={0.75cm},
	morekeywords={Address, Imm64, jmp, var, Const, Type_Imm, BLE_Exp, Exp_Den, assign, halt}, 
}
\lstdefinestyle{cstyle}{
	commentstyle=\color{mGreen},
	keywordstyle=\bfseries\color{mGreen},
	numberstyle=\tiny\color{mGray},
	stringstyle=\color{mPurple},
	breakatwhitespace=false,         
	breaklines=true,                 
	captionpos=b,                    
	keepspaces=true,                 
	showspaces=false,    
	numbers=none,          
	showstringspaces=false,
	showtabs=false,                  
	tabsize=2,
	xleftmargin={0.2cm},
	emph={int,char,double,float,unsigned, const, long},
	emphstyle={\bfseries\color{purple}},
	language=C
}
\lstdefinestyle{mlstyle}{
	language=caml,
	columns=[c]fixed,
	basicstyle=\small\ttfamily,
	keywordstyle=\bfseries,
	upquote=true,
	breaklines=true,
	showstringspaces=false,
	stringstyle=\color{blue},
	xleftmargin={0.1cm},
	aboveskip=1ex,
	literate={'"'}{\textquotesingle "\textquotesingle}3,
	morekeywords={new, out, event, assume}
}
\DeclareRobustCommand{\shortto}{%
	\mathpalette\short@to\relax%
}
\newcommand{\short@to}[2]{%
	\mkern2mu
	\clipbox{{.5\width} 0 0 0}{$\m@th#1\vphantom{+}{\rightarrow}$}%
}
\DeclareRobustCommand{\shortdash}{%
	\mathpalette\short@dash\relax%
}
\newcommand{\short@dash}[2]{%
	\mkern2mu
	\clipbox{0 0 {.6\width} 0}{$\m@th#1\vphantom{+}{\rightarrow}$}%
}
\lstdefinelanguage{mlang} {morekeywords={system,type,var,process,do,observable,configuration,if,then,else,array,init,while,function,repeat,fun,for,return,ret}, 
	sensitive=false,
	morestring=[b]", }
\let\origthelstnumber\thelstnumber
\newcommand*\Suppressnumber{%
	\lst@AddToHook{OnNewLine}{%
		\let\thelstnumber\relax%
	}%
}
\newcommand*\Reactivatenumber[1]{%
	\setcounter{lstnumber}{\numexpr#1-1\relax}
	\lst@AddToHook{OnNewLine}{%
		\let\thelstnumber\origthelstnumber%
		\refstepcounter{lstnumber}
	}%
}
\newtheorem{property}{Property}
\let\orgautoref\autoref
\renewcommand{\autoref}
        {\def\equationautorefname{Eq.}%
        	\def\figureautorefname{Fig.}%
        	\def\subfigureautorefname{Fig.}%
         \def\Itemautorefname{Item}%
         \def\tableautorefname{Table}%
         \def\algorithmautorefname{Algorithm}%
         \def\paragraphautorefname{Paragraph}%
         \def\sectionautorefname{Sec.}%
         \def\subsectionautorefname{Sec.}%
         \def\subsubsectionautorefname{Sec.}%
         \def\chapterautorefname{Chapter}%
         \def\partautorefname{Part}%
         \def\goalautorefname{Goal}%
         \def\reqautorefname{Req.}%
         \def\adviceautorefname{Rule}%
         \def\parameterautorefname{Param.}%
         \def\definitionautorefname{Def.}%
         \def\definitionsautorefname{Def.}%
         \def\propertyautorefname{Property}%
         \def\lemmaautorefname{Lemma}%
         \def\theoremautorefname{Thm.}%
         \orgautoref}
\renewcommand\sectionautorefname{Section}
\newcommand{\ledot}{\mathrel{\ooalign{\hss\raise.200ex\hbox{$\cdot$}\hss\cr$\le$}}}
\newcommand{\gedot}{\mathrel{\ooalign{\hss\raise.200ex\hbox{$\cdot$}\hss\cr$\ge$}}}
\newcommand{\framework}{{\sc CryptoBap}}
\newcommand{\true}{\mathit{true}}
\newcommand{\tuple}[1]{( #1 )}
\newcommand{\naturalnum}{\mathscr{N}}
\newcommand{\filter}[1]{\downharpoonright_{#1}}
\renewcommand{\iff}{\Leftrightarrow}
\newcommand{\tree}{\sbirc{T}}
\newcommand{\node}{\sbirc{node}}
\newcommand{\branchingnode}{\sbirc{Branch}}
\newcommand{\leafnode}{\sbirc{Leaf}}
\newcommand{\nodecond}{\sbirc{\gamma}}
\newcommand{\nodeevent}{\sbirc{ev}}
\newcommand{\type}{\imlc{t}}
\newcommand{\secparam}{n}
\newcommand{\mixedexec}[2]{#1\{#2\}}
\newcommand{\executions}{R}
\newlength\shlength
\newcommand\xshlongvec[2][0]{\setlength\shlength{#1pt}%
  \stackengine{-5.6pt}{$#2$}{\smash{$\kern\shlength%
    \stackengine{7.55pt}{$\mathchar"017E$}%
      {\rule{\widthof{$#2$}}{.57pt}\kern.4pt}{O}{r}{F}{F}{L}\kern-\shlength$}}%
      {O}{c}{F}{T}{S}}
\newcommand{\memsymb}{{Mem}}
\newcommand{\heap}{{heap}}
\newcommand{\update}{mstore}
\newcommand{\memload}{mload}
\newcommand{\birsymb}{\birc{{BIR}}}
\newcommand{\sbirsymb}{\sbirc{{SBIR}}}
\newcommand{\imlsymb}{\imlc{{IML}}}
\newcommand{\statesymb}{{s}}
\newcommand{\envsymb}{\eta}
\newcommand{\bnfconcat}{\!::\!}
\newcommand{\bnfsep}{\;|\;}
\newcommand{\bnfdef}{\;\;:=\;\;}
\newcommand{\bnfc}[1]{\operatorname{\textnormal{\textbf{#1}}}}
\newcommand{\domain}[1]{\mathit{dom}(#1)}
\newcommand{\imagesym}{\mathit{range}}
\newcommand{\image}[1]{\imagesym(#1)}
\newcommand{\var}[1]{\birc{#1}}
\newcommand{\symvar}[1]{\sbirc{#1}}
\newcommand{\imlvar}[1]{\imlc{#1}}
\newcommand{\randommem}{\var{RM}}
\newcommand{\randommemidx}[1][k]{\var{r}_{\var{#1}}}
\newcommand{\cryptoOp}{\birc{Op}}
\newcommand{\cryptoop}{\birc{op}}
\newcommand{\Loop}[1]{\birc{{loop}}(#1)}
\newcommand{\Input}[1]{\birc{{In}}(#1)}
\newcommand{\Output}[1]{\birc{{Out}}(#1)}
\newcommand{\crypto}[1]{\birc{{Cr}}(#1)}
\newcommand{\event}[1]{\birc{{Ev}}(#1)}
\newcommand{\freshv}[1]{\birc{{Fr}}(#1)}
\newcommand{\attacker}{\mathcal{A}}
\newcommand{\oraclesym}{L}
\newcommand{\oracle}[1]{\oraclesym(#1)}
\newcommand{\reg}[1]{\birc{r_{#1}}}
\newcommand{\simrel}[3]{#1 \sim_{_{#2}} #3}
\newcommand{\simreltraces}[4]{#1 \sim_{_{#2,#3}} #4}
\newcommand{\mi}[1]{\ensuremath{\mathit{#1}}}
\newcommand{\mt}[1]{\ensuremath{\texttt{#1}}}
\newcommand{\mf}[1]{\ensuremath{\mathbf{#1}}}
\newcommand{\ms}[1]{\ensuremath{\mathsf{#1}}}
\newcommand{\neutcol}[0]{black}
\newcommand{\bircol}[0]{black}
\newcommand{\sbircol}[0]{RedOrange}
\newcommand{\imlcol}[0]{RoyalBlue}
\newcommand{\col}[2]{\ensuremath{{\color{#1}{#2}}}}
\newcommand{\birc}[1]{\mf{\col{\bircol}{#1}}}
\newcommand{\sbirc}[1]{\ms{\col{\sbircol}{#1}}}
\newcommand{\imlc}[1]{\mi{\col{\imlcol}{#1}}}
\newcommand{\bl}[1]{\mt{\col{\neutcol }{#1}}}
\newcommand{\rmcolor}[1]{{\col{black}{#1}}\!}
\newcommand{\biridx}{\birc{b}}
\newcommand{\birstates}{\birc{S}}
\newcommand{\birstate}[1]{\birc{{\statesymb_{#1}}}}
\newcommand{\birenv}{\birc{\envsymb}}
\newcommand{\birpc}{\birc{{pc}}}
\newcommand{\birvar}{\birc{{Bvar}}}
\newcommand{\birval}{\birc{{Bval}}}
\newcommand{\birexp}{\birc{{Bexp}}}
\newcommand{\birprog}{\birc{{P}}}
\newcommand{\eventsym}{a}
\newcommand{\birevent}{\birc{\eventsym}}
\newcommand{\birevents}{\birc{{E}}}
\newcommand{\birprogset}{\birc{\birprog_1 , ... , \birprog_m}}
\newcommand{\birprobinitstates}[1]{\birc{{S_{{init}}^{b,n}}}}
\newcommand{\birprobinitstate}[1]{\birc{{\statesymb_{{init}}^{b,n}}}}
\newcommand{\birrand}{\birc{\Re}}
\newcommand{\sbirrand}{\sbirc{\Re}}
\newcommand{\xrightarrowsgl}[2][]{\birc{%
  \xrightarrow[]{\bl{#2}}\mathrel{}_\bl{#1}
}}
\newcommand{\starxrightarrowsgl}[2][]{\birc{%
		\xrightarrow[]{\bl{#2}}\mathrel{}^{\bl{*}}_\bl{#1}
}}
\newcommand{\plusxrightarrowsgl}[2][]{\birc{%
		\xrightarrow[]{\bl{#2}}\mathrel{}^{\bl{+}}_\bl{#1}
}}
\newcommand{\nxrightarrowsgl}[2][]{\birc{%
		\xrightarrow[]{\bl{#2}}\mathrel{}^{\bl{n}}_\bl{#1}
}}
\newcommand{\birtrans}[4]{#3 \xrightarrowsgl{#1}_{#2} #4}
\newcommand{\birtransstar}[4]{#3 \starxrightarrowsgl{#1}_{#2} #4}
\newcommand{\birtransmulti}[4]{#3 \plusxrightarrowsgl{#1}_{#2} #4}
\newcommand{\birtransmultin}[4]{#3 \nxrightarrowsgl{#1}_{#2} #4}
\newcommand{\birexecutions}{\birc{\mathcal{R}^{b}}}
\newcommand{\birpr}{\mathit{{pr}}}
\newcommand{\sbiridx}{\sbirc{s}}
\newcommand{\sbirstates}{\sbirc{{S}}}
\newcommand{\sbirstate}[1]{\sbirc{{\statesymb_{#1}}}}
\newcommand{\sbirenv}{\sbirc{\envsymb}}
\newcommand{\sbirpcond}{\sbirc{\phi}}
\newcommand{\sbirvals}{\sbirc{{SE}}}
\newcommand{\sbirvaluation}{\mathit{H}}
\newcommand{\sbirvaluations}{\mathcal{H}}
\newcommand{\sbirevent}{\sbirc{\eventsym}}
\newcommand{\sbirevents}{\sbirc{{E}}}
\newcommand{\defeq}{\mathrel{\stackrel{\makebox[0pt]{\mbox{\normalfont\tiny def}}}{=}}}
\newcommand{\lbls}{\birc{L}}
\newcommand{\entr}{\xi}
\newcommand{\pcincsym}{\mathit{ret}}
\newcommand{\pcinc}[1]{\pcincsym(#1)}
\newcommand{\transition}[2]{{#1} \birc{\rightarrow} {#2}}
\newcommand{\transitionsf}[4]{{#1} \birc{\rightarrow}^{#2}_{#3} {#4}}
\newcommand{\pcexitsym}{\mathit{exit}}
\newcommand{\pcexit}[1]{\pcexitsym(#1)}
\newcommand{\loopprocesssym}{\mathit{processLoop}}
\newcommand{\loopprocess}[1]{\loopprocesssym(#1)}
\newcommand{\loopbodysym}{\mathit{LoopProc}}
\newcommand{\loopbody}[1]{\loopbodysym(#1)}
\newcommand{\xrightarrowdbl}[2][]{\sbirc{%
  \xrightarrow[]{\bl{#2}}\mathrel{\mkern-14mu}\rightarrow_\bl{#1}
}}
\newcommand{\plusxrightarrowdbl}[2][]{\sbirc{%
		\xrightarrow[]{\bl{#2}}\mathrel{\mkern-14mu}\rightarrow^{\bl{+}}_\bl{#1}
}}
\newcommand{\nxrightarrowdbl}[2][]{\sbirc{%
		\xrightarrow[]{\bl{#2}}\mathrel{\mkern-14mu}\rightarrow^{\bl{n}}_\bl{#1}
}}
\newcommand{\sbirtrans}[4]{#3 \xrightarrowdbl{#1}_{#2} #4}
\newcommand{\sbirtransmulti}[4]{#3 \plusxrightarrowdbl{#1}_{#2} #4}
\newcommand{\sbirtransmultin}[4]{#3 \nxrightarrowdbl{#1}_{#2} #4}
\newcommand{\syminput}[1]{\sbirc{{In}}(#1)}
\newcommand{\symoutput}[1]{\sbirc{{Out}}(#1)}
\newcommand{\symcrypto}[1]{\sbirc{{Cr}}(#1)}
\newcommand{\symevent}[1]{\sbirc{Ev}(#1)}
\newcommand{\symfreshv}[1]{\sbirc{{Fr}}(#1)}
\newcommand{\symloop}[1]{\sbirc{{loop}}(#1)}
\newcommand{\sbirexecutions}{\sbirc{\mathcal{R}^{s}}}
\newcommand{\sbirprobinitstates}[1]{\sbirc{{S_{{init}}^{s,n}}}}
\newcommand{\sbirprobinitstate}[1]{\sbirc{{\statesymb_{{init}}^{s,n}}}}
\newcommand{\advmem}{\memsymb_{\birc{\mathcal{A}}}}
\newcommand{\sbirtoiml}[1]{\llparenthesis #1 \rrparenthesis}
\newcommand{\imlidx}{\imlc{\iota}}
\newcommand{\imlstates}{\imlc{{S}}}
\newcommand{\imlstate}[1]{\imlc{\statesymb_{#1}}}
\newcommand{\imlenv}{\imlc{\envsymb}}
\newcommand{\imlexp}{\imlc{{Iexp}}}
\newcommand{\imlvarspace}{\imlc{{Ivar}}}
\newcommand{\imlvals}{\imlc{{BS}}}
\newcommand{\imlprocessp}{\imlc{{P}}}
\newcommand{\imlprocessq}{\imlc{{Q}}}
\newcommand{\imlmultisetprocs}{\imlc{\mathcal{Q}}}
\newcommand{\imlevent}{\imlc{a}}
\newcommand{\imleval}[1]{\llbracket #1 \rrbracket}
\newcommand{\xrightarrowtpl}[2][]{\imlc{%
  \xrightarrow[]{\bl{#2}}\mathrel{\mkern-14mu}\rightarrow\mathrel{\mkern-14mu}\rightarrow_{\bl{#1}}
}}
\newcommand{\mixxrightarrowtpl}[2][]{%
		\imlc{\xrightarrow[]{\bl{#2}}\mathrel{\mkern-14mu}\rightarrow\mathrel{\mkern-14mu}\rightarrow^{\bl{+}}_{\bl{#1}}}}
\newcommand{\imltrans}[4]{#3 \xrightarrowtpl{#1}_{\imlc{#2}} #4}
\newcommand{\imltransmulti}[4]{#3 \mixxrightarrowtpl{#1}_{\imlc{#2}} #4}
\newcommand{\imlexecution}{\imlc{{R}}}
\newcommand{\imlexecutions}{\imlc{\mathcal{R}^{\iota}}}
\newcommand{\imltrace}{\imlc{t}}
\newcommand{\imltraces}{\imlc{{T}}}
\newcommand{\imlfunc}{\imlc{{op}}}
\newcommand{\imlchannel}{\imlc{{c}}}
\newcommand{\simreltra}[4]{#1 \sim_{_{#2 , #3}} #4}
\newcommand{\simreltratraces}[5]{#1 \sim_{_{#2 , #3, #4}} #5}
\newcommand{\imlev}[1]{\imlc{ev}(#1)}
\newcommand{\imlevsym}{\imlc{ev}}
\newcommand{\imlfreshev}[1]{\imlc{{fr}}(#1)}
\newcommand{\probdist}[1]{\frac{1}{2^{#1}}}
\newcommand{\imlprog}{\imlc{ {I}}}
\newcommand{\imlpr}{\mathit{{pr}}}
\newcommand{\traceproperty}{\psi}
\newcommand{\tracepropertyneg}{\psi_\neg}
\newcommand{\insecurity}[2]{\mathbf{insec} ( #1, #2)}
\newcommand{\nextimlenv}{\imlc{\envsymb}'}
\newcommand{\mixedbiridx}{{\scaleto{\imlidx\times\biridx}{4pt}}}
\newcommand{\mixedbirexecutions}{\mathcal{R}^{\mixedbiridx}}
\newcommand{\mixedbirexecution}{R^{\mixedbiridx}}
\newcommand{\mixedbirtraces}{{T}^{\mixedbiridx}}
\newcommand{\mixedbirtrace}{t^{\mixedbiridx}}
\newcommand{\mixedbirprog}{\mixedexec{\mixediml}{\birprog}}
\newcommand{\mixedbirprogidx}[1]{\mixedexec{\mixediml}{#1}}
\newcommand{\mixedbirprogs}{\mixedexec{\mixediml}{\birprogset}}
\newcommand{\mixedsbiridx}{{\scaleto{\imlidx\times\sbiridx}{4pt}}}
\newcommand{\mixedsbirexecutions}{\mathcal{R}^{\mixedsbiridx}}
\newcommand{\mixedsbirexecution}{R^{\mixedsbiridx}}
\newcommand{\mixedsbirtraces}{{T}^{\mixedsbiridx}}
\newcommand{\mixedsbirtrace}{t^{\mixedsbiridx}}
\newcommand{\miximltrans}[5]{#4 \xrightarrowtpl{#1}_{{\imlc{#2}},#3} #5}
\newcommand{\miximltransmulti}[5]{#4 \mixxrightarrowtpl{#1}_{{\imlc{#2}},#3} #5}
\newcommand{\mixediml}{\imlc{I}}
\newcommand{\imlsbirstates}{{S^{\mixedsbiridx}}}
\newcommand{\miximlsbirevent}{a^{\mixedsbiridx}}
\newcommand{\imlbirstates}{{S^{\mixedbiridx}}}
\newcommand{\miximlbirevent}{a^{\mixedbiridx}}
\newcommand*{\suchthat}{\;%
  \ifnum\currentgrouptype=16\middle|\else%
     \iftoggle{WithinBracMacro}{\middle|}{|}%
  \fi%
  \;}%
\newcommand{\sbirtracesproof}{%
  \mixedsbirtraces(\mixedbirprog, \sbirenv_{\sbirc{0}}[ \randommem \mapsto \randomsymvals_{\sbirc{k}} , \randommemidx \mapsto 0])}
\newcommand{\imltracesproof}{%
\imltraces(\imltransprog,n)}
\newcommand{\imlsystemproof}{%
    \imltransprog}
\newcommand{\birtracesproof}{%
    \mixedbirtraces(\mixedbirprog, \birenv_{\birc{0}}[ \randommem \mapsto \randomvals_{\imlc{k}} , \randommemidx \mapsto 0])}
\newcommand{\randomvals}{\imlvar{rm}}
\newcommand{\randomsymvals}{\symvar{rm}}
\newcommand{\rng}[1]{\mathit{rng}(#1)}
\newcommand{\rngsym}{\mathit{rng}}
\newcommand{\trace}[1]{\mathit{tr}(#1)}
\newcommand{\eventtrace}{\mathit{t}}
\newcommand{\birsimrel}[3]{#1  \sim_{_{#2}}{} #3}
\newcommand{\colb}{\birc{B}}
\newcommand{\colsb}{\sbirc{SB}}
\newcommand{\imlbir}{\imlsymb$_\text{\colb}$}
\newcommand{\imlsbir}{\imlsymb$_\text{\colsb}$}
\newcommand{\initialimlsbir}{%
\mixedexec{\mixediml}{\sbirstate{0}} = \tuple{True, \sbirenv_{\sbirc{0}}[ \randommem \mapsto \randomsymvals_{\sbirc{k}} , \randommemidx \mapsto 0], \birpc_{\birc{0}}}}
\newcommand{\initialiml}{%
\imlstate{0} = (\imlenv_{\imlc{0}}, \imlprocessq^{\mathit{full}})}
\newcommand{\initialimlbir}{%
\mixedexec{\mixediml}{\birstate{0}} = \tuple{\birenv_{\birc{0}}[ \randommem \mapsto \randomvals_{\imlc{k}} , \randommemidx \mapsto 0], \birpc_{\birc{0}}}}
\newcommand{\imltransprog}{\imlprog \{ \sbirtoiml{\birprog} \} }
\newcommand{\imltransprogidx}[1]{\imlprog \{ \sbirtoiml{#1} \} }
\newcommand{\sbirtoimlprogset}{\sbirtoiml{\birprog_{\birc{1}}} , ... , \sbirtoiml{\birprog_{\birc{m}}}}
\newcommand{\imltransprogs}{\imlprog \{ \sbirtoimlprogset \} }
\newcommand{\looppath}{\sbirc{\pi}}
\newcommand{\itrsbirenv}[1]{\sbirc{\envsymb}_{_{#1}}}
\newcommand{\itrsbirpcond}[1]{\sbirc{\phi}_{_{#1}}}
\newcommand{\mixmultisetprocsimlsbir}{{\mathcal{Q}}^\mixedsbiridx}
\newcommand{\mixmultisetprocsimlbir}{{\mathcal{Q}}^\mixedbiridx}
\begin{document}
\title{\framework{}: A Binary Analysis Platform for Cryptographic Protocols}
\author{Faezeh Nasrabadi}
\affiliation{%
 \institution{CISPA Helmholtz Center for Information Security}
 \city{}
 \country{}
 \orcid{0009-0005-3659-7755}
}\email{faezeh.nasrabadi@cispa.de}

\author{Robert Künnemann}
\affiliation{%
 \institution{CISPA Helmholtz Center for Information Security}
 \city{}
 \country{}
 \orcid{0000-0003-0822-9283}
}\email{robert.kuennemann@cispa.de}

\author{Hamed Nemati}
\affiliation{%
  \institution{CISPA Helmholtz Center for Information Security}
 \city{}
 \country{}
 \orcid{0000-0001-9251-3679}
}\email{hamed.nemati@cispa.de}

\begin{CCSXML}
<ccs2012>
   <concept>
       <concept_id>10002978.10002986.10002990</concept_id>
       <concept_desc>Security and privacy~Logic and verification</concept_desc>
       <concept_significance>500</concept_significance>
       </concept>
 </ccs2012>
\end{CCSXML}

\ccsdesc[500]{Security and privacy~Logic and verification}

\begin{abstract}
  We introduce~\framework{}, a platform to verify weak secrecy and authentication for the (ARMv8 and RISC-V) machine code of cryptographic protocols. We achieve this by first transpiling the binary of protocols into an intermediate representation and then performing a crypto-aware symbolic execution to automatically extract a model of the protocol that represents all its execution paths. Our symbolic execution  resolves indirect jumps and supports bounded loops using the loop-summarization technique, which we fully automate. 
  The extracted model is then translated into models amenable to automated verification via ProVerif and CryptoVerif using a third-party toolchain.
  We prove the soundness of the proposed approach and used \framework{} to verify multiple case studies ranging from toy examples to real-world protocols, TinySSH, an implementation of SSH, and WireGuard, a modern VPN protocol.\\

  \emph{This paper uses  \birc{co}\sbirc{lo}\imlc{rs} to distinguish between different abstraction layers in our modeling and verification~\cite{patrignani2020should}}.

\end{abstract}

  \keywords{Formal Verification, Crypto. Protocols, Security, Binary Analysis}
\maketitle

\section{Introduction}

Cryptographic protocols are a vital part of end-user security on the Internet. Therefore, devising techniques to obtain high-assurance guarantees about their correctness and security is highly desirable. 
Nevertheless, despite the simplicity of such protocols, their design and implementation are notoriously error-prone, and there have been many attacks targeting either their design (e.g., man-in-the-middle attacks like the triple-handshake attack on TLS~\cite{bhargavantriple2014}) or implementation (e.g., Heartbleed, \texttt{CVE-2014-0160}). 

Formal methods suggest a rigorous foundation to find bugs and provide guarantees about the correctness and security of cryptographic protocols.
Rigorous techniques that have been used so far to reason about such protocols belong to two main lines of research. One is based on the \emph{Dolev-Yao model}, a symbolic model of cryptography~\cite{Dolev1981}, while the other, the \emph{computational approach}~\cite{Goldwasser1984}, is closer to reality 
and gives stronger and more realistic security assurance.

In recent years, significant progress has been made using these techniques to derive rigorous guarantees based on either abstract protocol specifications or
the concrete implementations in high-level programming languages such as \emph{C} or \emph{F\#}~\cite{aizatulinComputationalVerificationProtocol2012,aizatulinExtractingVerifyingCryptographic2011,DBLP:conf/csfw/ChakiD09,DBLP:conf/vmcai/Goubault-LarrecqP05,DBLP:conf/ccs/BackesMU10,DBLP:conf/ccs/BhargavanFCZ08}.
Despite this, there still exists a large gap between the correctness of cryptographic protocols' models or high-level implementations and their object code (i.e., machine code generated by compilers) that ultimately will execute on hardware.
This gap may result in a lack of security, even
when correctness proofs are developed for a model or a high-level implementation of the protocol. 
A major cause for this gap is the fact that program behavior at the source level
can diverge from its actual behavior when executed on hardware, e.g., due to compiler-introduced bugs~\cite{Xu2023SilentBM}. This is because enabling aggressive compiler optimizations can lead to missing source-level checks like code intended to detect integer overflows~\cite{10.1145/2517349.2522728} or null pointers~\cite{Xu2023SilentBM}. Compilers can invalidate code for secret scrubbing~\cite{secretscrubbing} or even turn constant-time code into a nonconstant-time binary~\cite{DBLP:conf/eurosp/SimonCA18}.
An example is depicted in~\autoref{fig:motivatingeg} where the compiler ({\small GCC ARM V\texttt{11.2.1}}) removes the \texttt{memset} function used to erase the secret key from memory. If such a function is part of a protocol specification, its removal can potentially leak the secret key. Alas, even verifying compilers like CompCert~\cite{compcert} are not a cure-all---verifying all optimization stages that developers want to enjoy in practice is tough.

\begin{figure}[t!]
		\includegraphics[width=1\linewidth]{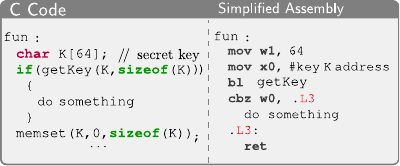}
		\caption{An example of how compilers can invalidate code for secret scrubbing}
		\label{fig:motivatingeg}
\end{figure}

\emph{Protocol verification} obtains privacy and authenticity guarantees for
an abstract model of the protocol that emphasizes concurrency
and communication, typically abstracting cryptographic primitives with a fixed
set of computation rules for the attacker, known as the Dolev-Yao
model~\cite{Dolev1981}. 
On the other hand, \emph{program verification} obtains functional guarantees but also confidentiality in a program model that is typically sequential and
focused on a single machine. The attacker is not bound
by computation rules or runtime restrictions.
Several works have adopted features from one domain in the other---we
discuss them in detail in~\autoref{sec:relatedwork}---but this comes at
the cost of complex, inflexible execution models and high development
cost. We advocate for \emph{composing} these techniques, thereby bridging the gap between tools and verification technologies that can thus continue to evolve
independently. 

Our goal in this paper is to extend the verification of security protocols to their machine code. 
This eliminates the need for trusting compilers and provides a higher assurance about the security and correctness of protocols.
To achieve this, we extend HolBA~\cite{DBLP:journals/scp/LindnerGM19} for the verification of RISC-V and ARM binaries with a method for \emph{symbolic execution} that handles interactions with an arbitrary attacker and trusted cryptographic code. 
Our symbolic execution resolves indirect jumps and supports (compile-time) bounded loops using the summarization technique~\cite{abstractingpc2012}, which we \emph{fully automate}.
We also devise a sound translation from symbolic execution to an intermediate layer that is amenable to automated verification with the protocol verifiers ProVerif~\cite{DBLP:conf/cis/WenFMZWT16,blanchet:protoProlog:2001} and CryptoVerif~\cite{Blanchet2008}.~\autoref{fig:differences} depicts the pipeline of the~\framework{}.

Implementation-level vulnerabilities like the stack-based buffer overflow in Sami FTP Server 2.0.1 are then covered by the symbolic execution, while protocol-level vulnerabilities like the triple-handshake attack on TLS~\cite{bhargavantriple2014} are detected by those protocol verifiers.
This is guaranteed by our soundness results and has been evidenced by the discovery of two flaws in CSur and NSL.
If there is an error in the implementation of the protocol specification, it could either lead to a model that deviates from the intended behavior, or an error during the extraction process.
For instance, if a cryptographic value is erroneously copied within the program, the symbolic execution precisely tracks the memory and detects any subsequent library call operating on the wrongly formatted data. As a result, abstraction would fail, indicating an implementation error; the failure is in the sense that the abstract operations do not apply to incorrectly encoded cryptographic data.
\begin{figure}[t!]
	\includegraphics[width=1\linewidth]{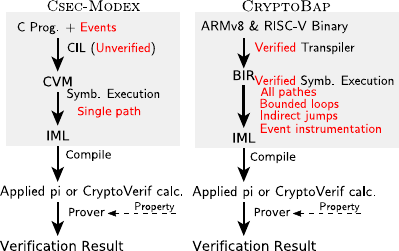}
    \caption{\textsc{CryptoBap} vs. \textsc{Csec-modex}}
	\label{fig:differences}
\end{figure}

Our method is inspired by and builds on a line of work by Aizatulin et al. \cite{aizatulinComputationalVerificationProtocol2012,aizatulinExtractingVerifyingCryptographic2011};~\autoref{fig:differences} highlights differences between the two approaches (see Sec.~\ref{sec:relatedwork} for more details on differences of the two approaches.)
We adopt their process calculus \emph{intermediate model language} ($ \imlsymb $\footnote{We depict $ \birsymb $ in \birc{black, \ bold \ roman}, $ \sbirsymb $ in \sbirc{RedOrange, \ sans \ serif} and $ \imlsymb $ in \imlc{RoyalBlue, \ text \ italic}. Elements common to all languages are typeset in black, italic.}) and translation from $ \imlsymb $ to ProVerif and CryptoVerif.
However, we extract the $\imlsymb$ model from the machine code of protocols, providing more reliable security guarantees that are independent of the compiler used to generate the object code of protocols.
We demonstrate the effectiveness of our approach by covering the case studies of Aizatulin et al.~\cite{aizatulinComputationalVerificationProtocol2012}, including the CSur case study (which they could not verify due to limitations of their framework in handling network messages as \emph{C} structures), as well as an implementation of SSH called TinySSH~\cite{tinyssh}. Moreover, we have verified WireGuard~\cite{wiregaurd},
a modern VPN protocol integrated into Linux, by automatically extracting its ProVerif model.

Our threat model includes a set of functions whose input and output
behavior are controlled by the attacker. This can represent a network
attacker (when these functions are syscalls for network I/O) or
a VM running in parallel (when these functions are hypercalls in a security hypervisor, e.g.,~\cite{prosper:dam2013,sel4}). The execution platform is trusted to implement the
machine-code semantics correctly and includes a set of trusted
cryptographic functions. 
These are likewise assumed to be correct.
Also, when outputting to ProVerif, the attacker is assumed to be
Dolev-Yao, i.e., cryptography is perfect.

\begin{figure*}[t]
  \centering
  \resizebox{1\textwidth}{!}{
  \begin{tikzpicture}[node distance=3cm,font = {\normalsize\sffamily}]
\node [block] (NetCom) {new rules for
	network communication (\autoref{sec:cryptoawarebir}, \autoref{birnetcom})};
\node (ARM) [process, below = 0.3cm of NetCom] {ARMv8 and RISC-V};
\node (BIR) [process1,  right=3cm of ARM] {\hyperref[ssec:HolBA-framework]{$\birsymb$}};
\node (SBIR) [process, right of=BIR, xshift=1.4cm] {\sbirc{crypto\!-\!aware}\\\hyperref[sec:cryptoawaresbir]{\sbirc{Symb. Exec.}}};
\node (IML) [process1, right of=SBIR, xshift=1.4cm] {\hyperref[ssec:iml]{$ \imlsymb $}};
\node (Csec) [process, right of=IML, xshift=2.1cm] {CryptoVerif and ProVerif};
 \node [block1, below = 0.3cm of ARM] (Cry-assum) {
         \parbox{18em}{
         crypto. assumptions on $\birprog$ (\autoref{sec:cryptoawarebir}):
         \begin{itemize}[noitemsep,nosep]
             \item Random number generation 
            \item Crypto library calls 
            \item Event function calls 
         \end{itemize}
}};

\draw [arrow] (Cry-assum.east) -- (BIR.south);

\draw [arrow] (NetCom.east) -- (BIR.north);

\draw [arrow] (ARM) -- node[above,pos=0.5] {verified transpiles to} node[below,pos=0.5]{(\cite[Thm.~2]{DBLP:journals/scp/LindnerGM19})} (BIR);

\draw [arrow] (BIR) -- node[above] {traces included in} node[below]{(from~\autoref{thm:bir:traceeq})} (SBIR);

\draw [arrow] (SBIR) -- node[above] {traces included in} node[below]{(from~\autoref{thm:iml:traceeq})} (IML);

\draw [arrow] (IML) -- (Csec) node[above,pos=0.5,text width=5cm,align=center] {\textsc{Csec-modex}} node[below,pos=0.5,align=center]{(\cite[Thm.~4.3,Thm.~5.2]{aizatulin2015verifying})};

\draw[->,>=stealth, blue, thick,dashed] (BIR.north) to [bend left] node[midway,above,black]{attack probabilities are preserved} node[midway, below,black,text width=6cm,align=center] {(from~\autoref{thm:soundness}\\using~\autoref{thm:bir:traceeq},~\autoref{thm:iml:traceeq},~\autoref{lem:thm4lems})} (IML.north) ;

\draw[black, decoration={brace,mirror,raise=8pt},decorate] (BIR.south) -- node[below=10pt,text width=10cm,align=center,black] {in mixed execution with $ \imlsymb $ \\ (\imlbir{}, see \autoref{fig:mixedbiriml} and \imlsbir{}, see \autoref{fig:mixedsymbiml})} (IML.south);

\draw[black,decoration={brace,mirror,raise=8pt},decorate ] (IML.south)  ++ (0.1,0) -- node[below=10pt,text width=4cm,align=center,black] {using pure $ \imlsymb $ \\ semantics \cite[p.~23]{aizatulin2015verifying}} (Csec.south);
\end{tikzpicture}
  }
\caption{Organization of the~\framework{} approach.}
\label{fig:overview}
\end{figure*} 

\begin{figure*}[t]
	\centering
	\includegraphics[width=1\linewidth]{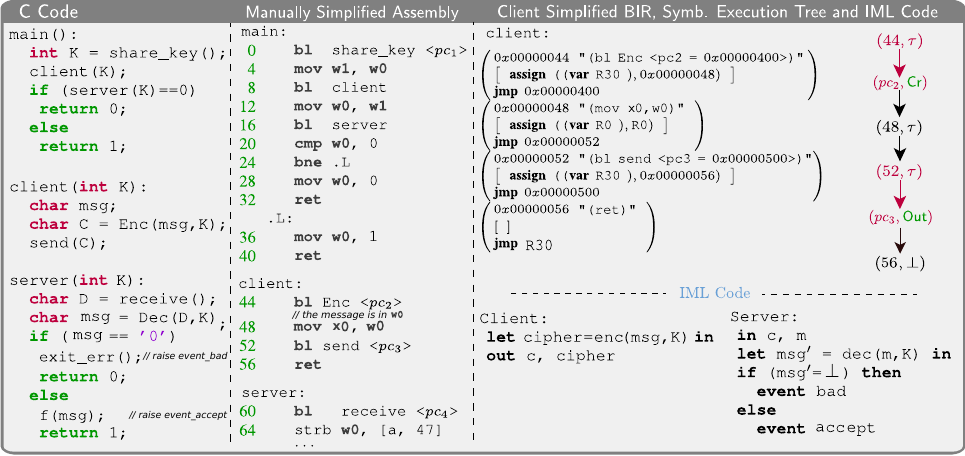}
	\caption{Running example. Function addresses are chosen randomly. Line numbers and $pc_i$ are addresses of instructions and functions in the memory. 
		}
	\label{fig:runningexample}
\end{figure*}

\paragraph{\bf Outline of the \framework{'s} approach.}
As in Fig.~\ref{fig:differences},~\framework{} (source code is available at~\texttt{\small \url{https://github.com/FMSecure/CryptoBAP}}) takes as input:
\begin{itemize} 
\item the protocol participants' binary; 
\item the symbolic model of cryptographic functions;
\item and, the \emph{security property} we verify for event traces of executions, formally defined in~\autoref{sec:secproperties}. 
\end{itemize}

The properties we consider are \emph{safety properties over event traces}; currently, \framework{} supports verification of \emph{authentication} and \emph{weak secrecy}~\cite{bloch2011physical}. 
To verify these properties, we transpile the participants' binary into the $ \birsymb $ representation, the internal language of the HolBA framework (see~\autoref{ssec:HolBA-framework}).  
Our transpiler is formally verified and guarantees to preserve the semantics of the machine code. However, $\birsymb$ as defined in~\cite{DBLP:journals/scp/LindnerGM19} is not suitable for reasoning about the security of cryptographic protocols. We address this by extending $\birsymb$ in \autoref{sec:cryptoawarebir}, e.g., to support network communication, random number generation, etc. 

To export the property checking to off-the-shelf verifiers, we extract a model of protocol participants that these verifiers can process. To automate this, we symbolically execute the $\birsymb$ translation to instrument on-the-fly the $\birsymb$ code with events that flag completion of certain operations and occurrence of errors, and build its execution tree that contains all program execution paths. We translate the resulting execution tree to obtain equivalent programs in $ \imlsymb $ (see~\autoref{ssec:Csec-Modex:iml} and~\autoref{sec:sbirtoiml}). 
The extracted $ \imlsymb $ model is then passed to \textsc{Csec-modex}~\cite{csecmodex,aizatulinExtractingVerifyingCryptographic2011}, which applies algebraic rewriting to convert the model into the input language of ProVerif and CryptoVerif. 

Akin to prior work~\cite{aizatulinComputationalVerificationProtocol2012,aizatulinExtractingVerifyingCryptographic2011,DBLP:conf/csfw/ChakiD09,DBLP:conf/vmcai/Goubault-LarrecqP05,DBLP:journals/entcs/Jurjens09,DBLP:journals/access/HeLCHWM20,DBLP:conf/ccs/BackesMU10,DBLP:journals/toplas/BhargavanFGT08,DBLP:conf/ccs/BhargavanFCZ08}, we trust the cryptographic primitives and abstract them with function symbols that are linked to a Dolev-Yao model when exporting to ProVerif, or to complexity-theoretic assumptions when exporting to CryptoVerif.
Analyzing these primitives' correctness is important, but it requires a different methodology, e.g., weakest precondition propagation. 
Projects like MSR's Project Everest~\cite{bhargavan2017everest} and the resulting HACL* library~\cite{zinzindohoue2017hacl}, CompCert in conjunction with FCF~\cite{beringer2015verified,ye2017verified} or VALE~\cite{bond2017vale} or synthesis approaches like Fiat-Crypto~\cite{erbsen2020simple} address this equally challenging problem.

To prove that the extracted $ \imlsymb $ model preserves the behavior of the actual binary and to relate back the verified properties to the protocol binary, we define a mixed execution semantics. The mixed execution enables protocol participants from different abstraction layers  to run in parallel and communicate (see~\autoref{sec:imlsoundness} and Fig.~\ref{sec:mixedbirtomixedsbir}).  
Such a mixed semantic also frees our symbolic execution from dealing with concurrency.
\autoref{fig:overview} shows the interconnection between different layers in our approach. 
To summarize:
\begin{itemize}
    \item We present \framework{} to automate the verification of cryptographic protocols' binary. Our framework explores all execution paths of protocols, resolves indirect jumps, and handles (compile-time) bounded loops automatically. 
    \item  We extend the vanilla symbolic execution engine in  HolBA to automate the model extraction of cryptographic protocols. The way we extend the engine is significant. While HolBA's vanilla symbolic execution considers the program a holistic entity and thus basically encodes the semantics of the language ($\birsymb$), our semantics regards only a part of the whole program and abstracts cryptographic libraries, attacker calls, and random number generation. 
    \item We formally verify the soundness of our approach and show that verified properties can be transferred back to the binary of analyzed protocols. 
    \item To evaluate \framework{}, we have successfully verified multiple case studies, ranging from toy examples, e.g., CSur, to TinySSH and WireGuard. 
\end{itemize}

\paragraph{\bf Running example.} 
Our running example, \autoref{fig:runningexample}, consists of a client and a server that use a symmetric-key encryption scheme to communicate securely.
This example shows a weak form of authentication, called aliveness~\cite{10.5555/794197.795075}: the server will accept the connection to the (single) client only if it can successfully decrypt the received message using the pre-shared key.
First, the client encrypts a message using the shared key, and sends it to the server. 
Second, the server receives the encrypted message at the other end and decrypts it using the same key. 
Depending on whether the decryption succeeds or fails, either \texttt{\small event\_accept}, to show acceptance of the connection with the client, or \texttt{\small event\_bad} will be released.

\section{Related Work}
\label{sec:relatedwork}

In the last decade, cryptographers started to employ and even develop theorem provers to develop verifiable proofs~\cite{haleviplausible2005}. 
This started with the CertiCrypt framework for Coq~\cite{barthe:formal:2009}, which subsequently developed into EasyCrypt~\cite{barthe:computer:aided:2011}. These tools support reasoning about probabilistic programs and classes of (e.g., poly-time restricted) adversaries via probabilistic and probabilistic relational Hoare logic. 
There are also embeddings of probabilistic reasoning like
FCF~\cite{symmetricEncPetcher:2015}, Verypto~\cite{Verypto} and
CryptHOL~\cite{crypthol}, which are easier to combine with
other techniques (this was
demonstrated~\cite{opensslHmacBeringer,symmetricEncPetcher:2015}
for C code), but they require tedious manual analysis and a deep
understanding of the underlying relational probabilistic logic.

These methods are typically used to verify cryptographic constructions. 
By contrast, complex protocols are
analyzed using symbolic models, where cryptographic
primitives like encryption, signatures, etc.\ are abstracted using
a term algebra and a set of reduction rules.
This makes the analysis of protocols that use such primitives
amenable to automation, e.g., using first-order SAT solving~\cite{DBLP:journals/scp/FraikinFS14},
Horn clause resolution~\cite{blanchet:protoProlog:2001} or constraint
solving~\cite{SMCB-csf12}.
To great success: within a decade, protocol verification tools went
from analyzing small academic protocols~\cite{proto:hornc:Blanchet,mitb} to 
fully-fledged TLS models~\cite{symb:tls:cremers,bhargavan:tls:2017}.

This degree of automation comes at the cost of abstraction, both in
terms of the computation environment and the cryptographic primitives.
The former is owed to the focus on protocol \emph{specifications}
rather than implementations. This makes sense, because often, the task at hand is to evaluate designs
or standards rather than specific implementations, which can be
incomplete or nonexistent.
There are efforts to translate implementations into the protocol
specifications, but they are limited to high-level languages such
as \emph{C}~\cite{aizatulinExtractingVerifyingCryptographic2011}. 
The same holds for verification tools that operate at the source-code
level~\cite{DBLP:journals/jcs/DupressoirGJN14,kusters2012framework}.

\begin{table*}[t]
   \adjustbox{varwidth=\linewidth,scale=1.1}{%
    \begin{tabular}{@{} cl*{5}c @{}}
        & \textbf{Papers} & {\shortstack[c]{Language}}
        & {Abstract Model} & {Model Type} & {Property} 
        & {Soundness}\\
        \cmidrule{2-7}
        &Aiazatulin et al.\cite{aizatulinComputationalVerificationProtocol2012,aizatulinExtractingVerifyingCryptographic2011}   & C & Applied-pi &  Symb. + Comp.  & Secrecy, Auth.  & \OK  \\
        & Chaki et al.\cite{DBLP:conf/csfw/ChakiD09}   & C & ASPIER  &  Symb. & Secrecy, Auth.  & \OK  \\    
        & Goubault-Larrecq et al.\cite{DBLP:conf/vmcai/Goubault-LarrecqP05}   & C & Horn clauses  &  Symb. & Secrecy, Info. flow  & \OK  \\   
        & Backes et al.\cite{DBLP:conf/ccs/BackesMU10}   & F\# & RFC  &  Symb. + Comp. & Safety properties  & \OK  \\      
        & Bhargavan et al.\cite{DBLP:journals/toplas/BhargavanFGT08,DBLP:conf/ccs/BhargavanFCZ08}   & F\# & Pi/CryptoVerif  & Symb. + Comp. &  Secrecy, Auth.  & \NO  \\  
        & J{\"u}rjens\cite{DBLP:journals/entcs/Jurjens09}   & Java & First-order Logic & Symb. &  Secrecy, Auth.  & \NO  \\           
        & HE et al.\cite{DBLP:journals/access/HeLCHWM20}   & Python-JS & Applied-pi & Symb. &  Secrecy, Auth.  & \NO  \\        
        & \textbf{This work}  & Binary & Applied-pi &  Symb. + Comp. & Secrecy, Auth.  & \OK  \\         
        \cmidrule[1pt]{2-7}
    \end{tabular}
   }
    \caption{Selected model extraction approaches; Symb = Symbolic, Comp = Computational, Auth = Authentication, JS = JavaScript.}
    \label{tab:modelextarctiontechns}
    \vspace{-1em}
\end{table*}

\subsection{Verified crypto protocols' implementation}
There have been efforts to verify (annotated) implementations of security protocols using: deductive verification\cite{DBLP:journals/jcs/DupressoirGJN14}, type checking~\cite{10.1145/1706299.1706350}, code generation~\cite{DBLP:conf/IEEEares/CadeB12}, and model extraction.
Existing work in model extraction mainly targets implementations in high-level languages like \emph{C}~\cite{DBLP:conf/csfw/ChakiD09,DBLP:conf/vmcai/Goubault-LarrecqP05}, \emph{F\#}~\cite{DBLP:conf/ccs/BackesMU10,bhargavan:tls:2017,DBLP:conf/wsfm/BhargavanFG06} and \emph{Java}~\cite{DBLP:journals/entcs/Jurjens09,o2008using}. 
However, due to the complexity of such languages, the existing works have to limit their scope. For example,~\cite{DBLP:conf/csfw/ChakiD09} does not model floating pointers and~\cite{DBLP:conf/vmcai/Goubault-LarrecqP05} omits explicit casts and negative array indexes.
\autoref{tab:modelextarctiontechns} compares selected works that verify the security of protocols via model extraction in some form.

There are also works that recover formats of protocol messages at the binary level~\cite{DBLP:journals/cn/CaballeroS13,DBLP:conf/ndss/LinJXZ08,DBLP:conf/ndss/WondracekCKK08,DBLP:journals/jifs/XiaoZL16}. However, their intent is different from ours. Existing approaches are mostly applied to malware binaries and use heuristics to gain insights into their operation. Thus, they are not concerned with soundness, and the inferred message formats can be wrong.

\paragraph{\bf Comparison to Aizatulin et al. approach}
Closely related to~\framework{} is Aizatulin's work~\cite{aizatulinExtractingVerifyingCryptographic2011,aizatulinComputationalVerificationProtocol2012}. 
Analogous to our work, they also proved the soundness of their approach, i.e., showed that all attacks present in the \emph{C} code are preserved in the extracted $\imlsymb$ models. 
The main difference between the two approaches
is the fact that we target protocols' binary. Moreover, compared to
their approach, which can handle only a single execution path, \framework{} handles \textbf{all execution paths} of protocols, including those that contain \textbf{conditionals}, \textbf{bounded loops}, and \textbf{indirect jumps}. 
More specifically, Aizatulin's approach restricts the input programs to programs that have no ``else'' branches and no loops.
Conceptually, one may see this as irrelevant because one might speculate that all paths outside the ``main'' part are not useful, i.e., they do not produce network output, or at least none that reveals cryptographic information (e.g., error codes). But this is still restrictive. First, many protocols are specified to produce network output in error cases, for example, \emph{decoy messages} in anonymity protocols or error messages that are encrypted. Moreover, seemingly regular protocols have ``else'' branches as part of their ``main'' message flow, once we look at them with the level of detail necessary to analyze implementations. For example, cypher suite negotiation in TLS must be formulated with multiple branches depending on the input message.

Also, their translation from \emph{C} to \emph{C virtual machine} (CVM)---the intermediate language used for extracting the $\imlsymb$ model of protocols---is not verified. This renders relating the verified properties to the \emph{C} implementation infeasible.
Similarly, they abstracted cryptographic libraries, attacker calls, and random number generation in their verification. However,
they do not formulate the requirements on the program under the analysis explicitly. CVM already assumed to include primitives for library calls, attacker calls, and random number generation---which are not primitives of the \emph{C} language. In this regard, CVM is fairly close to $\imlsymb$. Moreover, our translation into $\imlsymb$ is entirely different (a) because the input language has a more complex state and interaction with the other $\imlsymb$ entities and (b) because in contrast to Aizatulin's approach, we handle conditionals, hence our symbolic execution is not only a sequence of actions.
Finally, while they had to change the source code of protocols and add dummy functions to flag the occurrence of events, \framework{} automates releasing events during symbolic execution and minimizes user involvement.

\section{Background}
We next explain the preliminaries before presenting the details of our approach.

\subsection{HolBA framework \& Vanilla symbolic exec.}\label{ssec:HolBA-framework}
\framework{} relies on HolBA~\cite{DBLP:journals/scp/LindnerGM19} to transpile the binary of protocols to the $\birsymb$ representation.  $\birsymb$ is a simple and architecture-agnostic language used as the internal language of HolBA and is designed to simplify the binary analysis of programs and facilitate building analysis tools.
HolBA is proof-producing and ensures that the transpilation preserves the semantics of the binary.
\begin{figure}[t]
	\begin{minipage}[b]{0.45\linewidth}
		\input{gfx/syntax-BIR.tex}%
	\end{minipage}
\end{figure}
A $\birsymb$ program $\birprog$ includes a number of blocks, see~\autoref{fig:birSyntax}, each consisting of a tuple of a unique label---a string or an integer---and a few statements. 
Each label refers to a particular location in the  program and is often used as the target of jump instructions (i.e., \birc{jmp} or \birc{cjmp}).
$\birsymb$ expressions include constants, standard binary and unary operators (ranged over by $\birc{\Diamond_b}$ and $\birc{\Diamond_u}$) for finite integer arithmetic, memory operations, and conditionals.
Fig.~\ref{fig:runningexample} presents a $\birsymb$ snippet for the running example.

A $\birsymb$ state $\tuple{\birenv, \birpc} \in \birstates{}$ consists of an environment $\birenv: \birvar \mapsto \birval$
which maps variables, i.e., registers $\reg{i}$ and memory locations $\memsymb$, to values and a program counter $\birpc$ that holds the label of the executing $\birsymb$ block.
The relation $\birtrans{}{}{}{} \subseteq \birstates \times \birstates$ models the execution of a $\birsymb$ block.
The execution of ${n}$ steps is denoted by
$\birtransstar{}{}{}{}$ if $n\ge 0$
and
$\birtransmulti{}{}{}{}$ 
or
$\birtransmultin{}{}{}{}$,
if $n > 0$.

We build \framework{} on a proof-producing symbolic execution for $ \birsymb $~\cite{proofproducingsymbexecution} that
formalizes the symbolic generalization of $\birsymb$ (hereafter $ \sbirsymb $).
The symbolic semantics is bisimilar to the concrete one and
allows guiding the execution while maintaining a sound set of reachable states from an initial symbolic state (we call this the \emph{symbolic execution structure}).
To generalize from $\birsymb$ to $ \sbirsymb $, symbolic expressions $\sbirvals$ are defined that can be interpreted to $\birsymb$ values $\birval$ via an interpretation $\sbirvaluation : \sbirvals \to \birval$.
In addition to the symbolic environment $\sbirenv: \birvar \mapsto \sbirvals$, the $\sbirsymb$ state
$\tuple{\sbirpcond, \sbirenv, \birpc} \in \sbirstates{}$
also contains a path condition $\sbirpcond \in \sbirvals$ and a $\birpc$ that is kept concrete to obtain a concrete control flow.

Let $\sbirtrans{}{}{}{}: \sbirstates \times \sbirstates$ be the single-step transition relation of $\sbirsymb$ and $\sbirtransmultin{}{}{}{}$ (or  $\sbirtransmulti{}{}{}{}$) denote a multi-step symbolic transition.
We write $\transitionsf{\birstate{i}}{n}{\lbls}{\birstate{j}}$ to restrict the transition from $\birstate{i}$ to $\birstate{j}$ to the label set $\var{L}$. 
For the HolBA's vanilla symbolic execution, Lindner et al.~\cite{proofproducingsymbexecution} proved that a single $\sbirsymb$ execution step soundly matches a single $\birsymb$ execution step, characterized by the following simulation theorem:

\begin{property}\label{single-step-vanila}
For all $\birstate{i}, \sbirvaluation, \birstate{j}$, $\sbirstate{i}$ s.t. $\birsimrel{\birstate{i}}{\sbirvaluation}{\sbirstate{i}}$, if $\transition{\birstate{i}}{\birstate{j}}$ then there exist an $\sbirvaluation'$ and $\sbirstate{j} $ s.t. $\sbirvaluation\! \subseteq\! \sbirvaluation'$ and $\sbirtrans{}{}{\sbirstate{i}}{\sbirstate{j}}$ and $\birsimrel{\birstate{j}}{\sbirvaluation'}{\sbirstate{j}}$.
\end{property}

The simulation relation $\birsimrel{}{\sbirvaluation}{}$ asserts the consistency of corresponding $\birsymb$ and $\sbirsymb$ states, i.e., their program counters are equal, their environments are equal through the interpretation $\sbirvaluation$, and the evaluation of $\sbirpcond$ under $\sbirvaluation$ results in $\true$.
Then the soundness of the symbolic execution structure for multiple steps corresponds to the extension of \autoref{single-step-vanila} to a multi-step simulation theorem. 

\subsection{\textsc{Csec-modex} toolchain \& \imlsymb}\label{ssec:Csec-Modex:iml}
Aizatulin et al.~\cite{aizatulinExtractingVerifyingCryptographic2011} proposed an automated technique to verify the security of cryptographic protocols' \emph{C} implementation.
At a high level, \textsc{Csec-modex} takes as input the \emph{C} code of protocol participants together with a template file for the verifier (ProVerif or CryptoVerif). 
The toolchain extracts the $ \imlsymb $ model of the protocol, which is then converted into the verifier's input language. The template encodes assumptions about cryptographic primitives in the implementation,  the environment process which spawns the participants and generates shared cryptographic material, and a query for the security property that is checked for the implementation.

The intermediate model language, $ \imlsymb $, is a version of the applied-pi calculus extended with bitstring manipulation primitives. 
\begin{figure}[t]
	\begin{minipage}[b]{0.60\linewidth}
		\input{gfx/syntax-iml.tex}
	\end{minipage}
\end{figure}
In Fig.~\ref{fig:imlSyntax}, $\imlvals = \{0,1\}^*$ is the set of finite bitstrings, $\imlc{Ops}$ 
is the set of operations, including
cryptographic primitives, and $\imlfunc(\imlvar{e_1},\dots,\imlvar{e_m})$ denotes function application.
$ \imlsymb $ expressions are evaluated with respect to an environment
$\imlenv : \imlvarspace \mapsto \imlvals \cup \{\imlc{\bot}\}$
which maps variables to bitstrings or $ \imlc{\bot} $. 

$\imlvar{P}$ and $\imlvar{Q}$ represent input/output processes. 
An executing process is the basic unit of execution in $ \imlsymb $ and has the form
$ (\imlenv, \imlprocessp)$, 
where
$\imlprocessp$ is either an input or output process.
The input process $\imlc{0}$ does nothing.
In $ \imlsymb $, inputs and outputs are performed using $\imlc{in}$ and $\imlc{out}$ in which $\imlchannel$ denotes the channel name and $ \imlvar{e_1},\dots,\imlvar{e_m} $ indicate the protocol participants' identifier.
The construct $\imlc{new}\ \imlvar{x}:\type$ generates a uniform random number of 
type $\type$ and $\imlc{event}(\imlvar{d_1},\dots,\imlvar{d_m})$ is used to raise an event during the execution.

An $ \imlsymb $ state $ (\imlenv, \imlprocessp), \imlmultisetprocs \in \imlstates$ includes an output process $\imlprocessp$ and a multiset of executing input processes $\imlmultisetprocs$. 
The initial configuration of an input process $ \imlprocessq $ is defined as 
$(\emptyset, \imlc{out}(\imlchannel , \imlc{\varepsilon}); \imlc{0}), \mathit{reduce}{(\emptyset, \imlprocessq)} $
where $\mathit{reduce}$ represents a function that executes a sequence of processes inside $\imlprocessq$ (e.g., $\imlprocessq = \imlprocessp_{\imlc{1}};\imlprocessp_{\imlc{2}}; ...$) until an input process waiting for a message from channel $ \imlchannel $ is reached (see 2nd rule in~\autoref{fig:imltranssemantics}). 

We use $\imltrans{\imlevent}{p}{}{} \subseteq \imlstates \times \imlstates$ to denote the $ \imlsymb $ transition relation with the probability $\imlvar{p}$ and the event $\imlevent$. The event $\imlevent$ may be empty or include a single event of the form $ \imlvar{ev}(\imlvar{b_1}, ..., \imlvar{b_m}) $, where $\imlvar{ev}$ is an event symbol and $\imlvar{b_1}, ..., \imlvar{b_m}$ are bitstrings. 
Moreover, an $ \imlsymb $ trace is defined as $\imlexecution = \imltrans{\imlevent_\imlc1}{p_1}{\imlstate{1}}{} \dotsb \imltrans{\imlevent_{\imlc{n-1}}}{p_{n-1}}{}{\imlstate{n}}  \subseteq \imlexecutions(\imlmultisetprocs) $. 

We have borrowed  the semantics of the $ \imlsymb $ transition relations from~\cite[p.~23]{aizatulin2015verifying}. A few representative transition rules of $ \imlsymb $ for random number generation and sending a message on the channel $\imlchannel$ are presented in~\autoref{fig:imltranssemantics} and the rest explained in~\autoref{app:imlsemantics}.
In this figure $\mathit{truncate}$ cuts messages according to the provided length and $\mathit{maxlen}$ is the maximum size of the channel.
We extend the random number generation rule with an event $\imlc{fr}$ which represents the creation of a fresh bitstring $\imlvar{b}$.
This simplifies stating our invariants but is operationally the same.

\begin{figure}[t]
\hspace{-1cm}\adjustbox{varwidth=\linewidth,scale=0.92}{%
\[
\begin{array}{c}
\begin{prooftree}
		\hypo{\type = \texttt{fixed}_n\ \text{for some}\ n \in \naturalnum}
		\hypo{|\imlvar{b}| = n}
        \infer2[]{ \imltrans{\imlfreshev{\imlvar{b}}}{\probdist{n}}{(\imlenv,\! \imlc{new} \ \imlvar{x}\!:\!\type;\!\imlprocessp), \imlmultisetprocs}{\!\!(\imlenv[\imlvar{x} \mapsto \imlvar{b}],\! \imlprocessp), \imlmultisetprocs}}
\end{prooftree}\\[20pt]
\begin{prooftree}
		\hypo{ \begin{matrix}
		   \imleval{\imlvar{e}}_{\imlenv} = \imlvar{b} \neq \imlc{\bot}  \quad	\imlvar{b}' = \mathit{truncate}(\imlvar{b}, \mathit{maxlen}(\imlchannel)) \\[5pt]
		    \forall j \leq m : \imleval{\imlvar{e_j}}_{\imlenv} = \imlvar{b_j} \neq \imlc{\bot}  \quad \imlmultisetprocs' =  \mathit{reduce}(\{ (\imlenv, \imlprocessq) \}) \\[5pt]
		   \exists ! (\nextimlenv, \imlprocessq')\!\in\!\imlmultisetprocs	 : \imlprocessq' = \imlc{in}(\imlchannel[\imlvar{e{\rmcolor{'}}_1},\dots,\imlvar{e{\rmcolor{'}}_m}], \imlvar{x}');\imlprocessp' \! \land \! \forall j \leq m : \imleval{\imlvar{e_j}'}_{\nextimlenv} = \imlvar{b_j} \neq \imlc{\bot}
		\end{matrix}}			
          \infer1[]{ \imltrans{}{1}{(\imlenv, \imlc{out}(\imlchannel[\imlvar{e_1},\dots,\imlvar{e_m}], \imlvar{e});\imlprocessq), \imlmultisetprocs}{(\nextimlenv[\imlvar{x}' \mapsto \imlvar{b}'], \imlprocessp'), \imlmultisetprocs \uplus \imlmultisetprocs' \setminus \{ (\nextimlenv, \imlprocessq') \} }}
\end{prooftree}
\end{array}
\]
}
	\caption{The semantics of $ \imlsymb $~\cite[p.~23]{aizatulin2015verifying} transition relation.}
	\label{fig:imltranssemantics}
\end{figure}

\section{BIR with cryptography}
\label{sec:cryptoawarebir}
\begin{figure*}
	\centering
		\hspace{-1.5cm}\adjustbox{varwidth=\linewidth,scale=0.9}{%
		\centering
		\begin{prooftree}
			\hypo{ \begin{matrix}
					\birpc \in \entr_{\birc{\attacker_s}}
			\end{matrix}}
			\hypo{ \begin{matrix}
					\birpc' = \pcinc{\birenv, \birpc}
			\end{matrix}}
			\hypo{\begin{matrix}
					\var{a} = \birenv[\reg{0}]
			\end{matrix}}
			\hypo{\begin{matrix}
					\imlvar{e} = \memload(\birenv,\var{a})
			\end{matrix}}
			
			\infer4[$ \birc{\attacker_{s}} $]{\birprog \vdash \birtrans{\Output{\imlvar{e},(\var{e}_1,\dots,\var{e}_m)}}{}{\tuple{\birenv, \birpc},\imlchannel[\var{e}_1,\dots,\var{e}_m]}{\tuple{\birenv , \birpc'},\imlvar{e}\bnfconcat \imlchannel[\var{e}_1,\dots,\var{e}_m]} }
		\end{prooftree}
		\hspace{3mm}
		\begin{prooftree}
			\hypo{ \begin{matrix}
					\birpc \in \entr_{\birc{\attacker_r}}
			\end{matrix}}
			\hypo{ \begin{matrix}
					\birpc' = \pcinc{\birenv, \birpc}
			\end{matrix}}
			\hypo{\begin{matrix}
					(\birenv{'},\var{a}) = \update(\birenv, \heap_{\mathcal{A}}, \advmem, \imlvar{e}', 128)
			\end{matrix}}
			\infer3[$ \birc{\attacker_{r}} $]{\birprog \vdash \birtrans{\Input{\imlvar{e}',(\var{e}_1,\dots,\var{e}_m)}}{}{\tuple{\birenv, \birpc}, \imlvar{e}'\bnfconcat \imlchannel[\var{e}_1,\dots,\var{e}_m]}{\tuple{\birenv{'}[\reg{0} \mapsto \var{a}] , \birpc'},\imlchannel[\var{e}_1,\dots,\var{e}_m]} }
		\end{prooftree}
	}
	\caption{The semantics of $\birsymb$ network communication. $\birsymb$ uses $\imlsymb$ channels $\imlchannel$ for communication.} 
	\label{birnetcom}
\end{figure*}

$ \birsymb $, as described in~\cite{DBLP:journals/scp/LindnerGM19}, does not support ingredients required to reason about the security of cryptographic protocols.
To resolve these issues, we model random number generation and abstract network communications and formulate assumptions on state transformation in certain function calls on top of the existing $\birsymb$ semantics. Such an extension preserves the verified properties of $\birsymb$ and, thus, the soundness of binary transpilation.

Using the information in the (unstripped) binaries' header and preprocessing of lifted programs, we split the address space of $\birsymb$ into five label sets: 
$ \lbls = \lbls_{\birc{\mathcal{N}}} \uplus \lbls_{\cryptoOp} \uplus \lbls_{\birc{\mathcal{A}}} \uplus \lbls_{\birc{\mathcal{R}}} \uplus \lbls_{\birc{\mathcal{E}}}$.  The sets $\lbls_{\cryptoOp} = \bigcup_{\cryptoop\in \cryptoOp}$, $\lbls_{\birc{\mathcal{A}}}$, $\lbls_{\birc{\mathcal{R}}}$, $\lbls_{\birc{\mathcal{E}}}$
correspond to the \emph{cryptographic libraries}, \emph{attacker} calls, 
\emph{random number generation}, and \emph{event} functions. 
Addresses outside these label sets are classified as
normal execution points in $\lbls_\birc{\mathcal{N}}$. Moreover, a specific label set $\lbls_{\birc{\mathcal{L}}} \subset \lbls_\birc{\mathcal{N}}$ defines loop entry points.
For each label set, we axiomatize the expected behavior of the $\birsymb$ program by defining a number of assumptions. We also define a specific entry point for each function---denoted by $\entr_{\birc{\ell}}$ for $\birc{\ell}\in\{\lbls_{\birc{\mathcal{N}}}, \lbls_{\cryptoOp}, \lbls_{\birc{\mathcal{A}}}, \lbls_{\birc{\mathcal{R}}}, \lbls_{\birc{\mathcal{E}}}\}$---and ensure that function calls are done only through the specified entry points.

In the crypto-aware symbolic execution, these function calls will be
treated as atomic operations. We thus introduce some notation to
indicate with an event whenever a \emph{sequence} of steps passes via these special functions.

We extend the $\birsymb$ transition relation with events, $\birtrans{\var{a}}{}{}{} \subseteq \birstates \times \birevents \times \birstates$, where $\birevents$ is the set of observable events
plus the silent transition $\tau$.
Then, a multi-step $\birsymb$ transition  
$\birtransmulti{(\birevent_{\birc{1}},\dots, \birevent_{\birc{m}})}{}{\birstate{0}}{\birstate{n}}$ 
exists if 
$\birtransstar{\birevent_{\birc{1}}}{}{\birstate{0}}{\birstate{1}}
\ldots
\birtransstar{\birevent_{\birc{m}}}{}{\birstate{n-1}}{\birstate{n}}$ 
where
$\birtransstar{\birevent}{}{\birstate{i}}{\birstate{j}}$
if
$\birtransstar{\tau}{}{\birstate{i}}{} \birtrans{\birevent}{}{}{} \birtransstar{\tau}{}{}{\birstate{j}}$ is reminiscent to the big-step semantics. 
Also, we use $\birexecutions(\birprog, \birstate{0})$ to denote the set of execution traces of $\birprog$ starting from the initial state $\birstate{0}$.

We define $\pcincsym: \birstates \to \lbls$ to obtain the next execution point immediately reachable after returning from a call. 
$\update :(\birvar \mapsto \birval) \times \birvar \times 2^{\birvar} \times \imlvals \times \naturalnum \to (\birvar \mapsto \birval) \times \birval$ stores bitstrings into the memory in the $\birsymb$ environment.
Given $\heap \in \memsymb$ and $l \in \{1,8,16,32,64, 128\}$ and
$|\imlvar{b}|$ a multiple of $l$ that can be encoded in 128 bits, 
$\update(\birenv, \heap, \memsymb, \imlvar{b}, {l})$ 
stores $\imlvar{b}$ in ${l}$-bit chunks,
preceded by ${l}$ (encoded as a word)
in the memory $\memsymb\subseteq \birvar$, starting from the
pointer stored in $\heap$.
$\update$ returns a new environment and the address of the data within it, as indicated by $\heap$ in the \emph{previous} environment $\birenv$.
We also introduce notation for reading this bitstring. Let $\|$ be the byte concatenation operator. Then, $\memload(\birenv,\var{a})\defeq{} \left\|_{_{\var{i} = 1 \dots \birenv(\var{a})}}  \birenv(\var{a}+i) \right.$ for the given $\birenv$ and the address `$\var{a}$'.

\framework{} supports 
\emph{call-by-value},
\emph{call-by-reference},
and data passing via global variables (which TinySSH uses) call conventions.
For brevity, we focus on the call-by-value convention; the others follow a similar
pattern.

\paragraph{\bf Random number generation (RNG)}\label{rng-fun} 
$\birsymb$ is a deterministic language; as a result, we are unable to draw cryptographic keys without an external source of randomness that the attacker cannot predict. 
Thus, we allocate a memory region $\randommem$ in the initial state for storing  $k \in \naturalnum$ random values of size $l \in \naturalnum$, for the security parameter  $\secparam=lw$ that is a multiple of some supported word length $w$.
We assume $\randommem$ is an ordered list of $kl$ consecutive addresses. 
To track the number of words read from $\randommem$, we define a counter $\randommemidx$ and store it in the environment. 
Given an initial state 
$\birstate{0}$ and 
a random tape $\randomvals_{\imlc{k}} \in \{0,1\}^{kl\times w}$
the state
$\birstate{0}.\birenv[\randommem \mapsto \randomvals_{\imlc{k}} , \randommemidx \mapsto 0]$ is an instance of this initial state.
To extract a random number of size $l$ from $\randommem$, we define 
$\birrand: (\birvar \to \birval) \times \naturalnum \to \imlvals$
which returns a value from $\randommem$ yet unread:
$
\birrand (\birenv{}, n) \defeq\;
\text{let } \var{x} = \birenv{}[\randommemidx]\;\; \text{in}\;\;
\left\|_{_{i = 0 \dots l-1}}  \birenv{}[\randommem + \var{x} + i] \right.
$.
This construction is reminiscent of probabilistic Turing machines, only that the random number generator is finite due to the finite-memory restriction of $\birsymb$'s memory.

    We call a function from $\lbls_{\birc{\mathcal{R}}}$ with $ \var{l}_{\birc{\mathcal{R}}} \in  \lbls_{\birc{\mathcal{R}}}$ one of its entry points an RNG function, if for any 
    entering state
    $(\birenv_{\birc{0}}, \var{l}_{\birc{\mathcal{R}}})$
    for which $\birrand(\birenv_{\birc{0}}, \var{n})$ is defined,
    and
    execution point 
    $(\birenv{''}, \var{l})$
    after returning from RNG, i.e., $ \var{l} = \pcinc{\birenv_{\birc{0}}, \var{l}_{\birc{\mathcal{R}}}}$,
    the output register holds the address of a copy of the random value
        and $\randommemidx$ is updated, i.e.,
        $\birenv{''} = \birenv{'}[\reg{0} \mapsto \var{a}; \randommemidx \mapsto \birenv_{\birc{0}}[\randommemidx] + {l}]$
        with
        $(\birenv{'},\var{a}) = \update(\birenv_{\birc{0}}, \heap, \memsymb, \imlvar{x_i},128)$ s.t. $ \imlvar{x_i} = \birrand(\birenv_{\birc{0}}, n) $.
We denote RNG steps as $\birtransstar{\freshv{\imlvar{x_i}}}{\lbls_{\birc{\mathcal{R}}}} {(\birenv_{\birc{0}}, \var{l}_{\birc{\mathcal{R}}})} {(\birenv{''}, \var{l})}$, with $ i = \left\lfloor \frac{\birenv_{\birc{0}}[\randommemidx]}{{l}}  \right\rfloor + 1$ being a counter for  the number of times the RNG function was called.

\paragraph{\bf Network communication.}
Protocols relate events on different participants. Therefore, a setting where multiple parties run in parallel is essential to analyze protocols' correctness.
\autoref{sec:soundness} introduces a mixed execution, in which $\birsymb$ programs run in parallel with $ \imlsymb $ processes.
The latter model protocol participants for which we do not have a $\birsymb$ implementation, but also the adversary.

Our $\birsymb$ programs rely on an $ \imlsymb $ channel for communication that has the form $\imlchannel[\var{e_1},\dots,\var{e_m}]$, where $ \var{e_1},\dots,\var{e_m} $ are expressions which identify communicating parties and their channel $\imlchannel{}$. 
To send a message $\imlvar{e}$ ($\entr_{\birc{\attacker_s}} $ in Fig.~\ref{birnetcom}), we fetch the value of $\imlvar{e}$ from the memory address $\birenv{}[\reg{0}]$, put it on the channel $\imlvar{e}\bnfconcat \imlchannel[\var{e_1},\dots,\var{e_m}] $, and release the $\Output{\imlvar{e}, (\var{e_1},\dots,\var{e_m})}$ event. 

To receive a message $\imlvar{e}'\bnfconcat \imlchannel[\var{e_1},\dots,\var{e_m}] $, represented by 
$\entr_{\birc{\attacker_r}}$ in Fig.~\ref{birnetcom},
we store it in a buffer that is only accessible to libraries and return (via register $\reg{0}$) the address, i.e., `$\var{a}$', of the memory region where the message $\imlvar{e}'$ is stored. 
Passing the address via $\reg{0}$ is just one way to model the send and receive functions that also accommodates passing the buffer address by reference. \framework{} models these functions according to the implementation.

\paragraph{\bf Crypto library.}\label{lib-fun}
We establish a set of concrete assumptions on the way
crypto libraries operate. That is, a crypto-library call, like $\cryptoop$, computes
the correct result, never invokes another function, and only changes its own memory, i.e., $\memsymb_{\cryptoOp}$. We denote library steps with 
    $\birtransstar{\crypto{\imlvar{v}}}{\lbls_{\birc{\cryptoOp}}}
    {(\birenv_{\birc{0}}, \var{l}_{\cryptoop})}
    {(\birenv{''}, \var{l})}$, and expect \emph{transitions using labels outside $\lbls_{\cryptoOp}$ do not change the memory of library calls}.
    
    We call
    $\lbls_{\cryptoop} \subseteq \bigcup_{\cryptoop\in \cryptoOp} $
    the library implementation of $\cryptoop$ (with arity $m$)
    and
    $\var{l}_{\cryptoop}\in \entr_{\cryptoOp}$
    one of its entry points, if 
    for any entering state
    $(\birenv_{\birc{0}}, \var{l}_{\cryptoop})$ 
    and
    the return state
    $(\birenv{''}, \var{l})$,
the 
    function result 
    $\imlvar{v} = \cryptoop(\imlvar{b_1},\ldots,\imlvar{b_{m}})$
    for $\imlvar{b_i} = \memload(\birenv,\reg{i})$, $i\in\{1,\ldots,m\}$,
    is stored in a heap and its address is put 
    into $\reg{0}$:
    $ \birenv{''} = \birenv{'}[\reg{0} \mapsto \var{a}] $ where
    $(\birenv{'},\var{a})=\update(\birenv_{\birc{0}}, \heap_{\cryptoOp}, \memsymb_{\cryptoOp}, \imlvar{v}, 128)$.

\paragraph{\bf Event functions.}\label{ev-fun}
Event functions identify specific steps in our program that
we want to argue about. 
For example, when a protocol ends with the establishment of a key, that key is used to transmit some data. 
We want to show that, whenever this step is
reached, it is authenticated, i.e., the purported communication
partner has requested the execution of this step (e.g., \texttt{\small f(msg)} in~\autoref{fig:runningexample}). What happens
in this step is not important for us, only that it is reached.
We hence assume, for simplicity, that such functionality is
replaced by stand-ins we call event functions. These only raise
a visible event, but do not alter the memory.
We denote the transition corresponding to an event function call with
        $\birtransstar{\event{\imlvar{b_1},\ldots,\imlvar{b_{m}}}}{\lbls_{\birc{\mathcal{E}}}}
        {(\birenv_{\birc{0}},\! \var{l}_{\birc{\mathcal{E}}})\!}
        {\!(\birenv_{\birc{0}},\! \var{l})}$
        where
        $\var{l}_{\birc{\mathcal{E}}}\in \lbls_{\birc{\mathcal{E}}}$ is the entry point,
        $\imlvar{b_i}\!=\!\memload(\birenv,\reg{i})$ for $i\in\{1,\ldots,m\}$ are event parameters, and $\var{l} \in \lbls_{\birc{\mathcal{N}}}$.

\begin{figure*}[t]
\adjustbox{varwidth=\linewidth,scale=0.9}{%
\[
\begin{array}{c}
\begin{prooftree}
	\hypo{\birpc \in \entr_{\birc{\attacker_s}}}
	\hypo{\birpc' = \pcinc{\sbirpcond, \sbirenv, \birpc}}
	\hypo{\var{a} = \sbirenv[\reg{0}]}
	\hypo{\symvar{e} = \memload(\sbirenv,\var{a})}
	\infer4[$ \attacker_{s} $]{ \birprog \vdash\sbirtrans{\symoutput{\symvar{e},(\symvar{e_1},\dots,\symvar{e_m})}}{}{\tuple{\sbirpcond, \sbirenv, \birpc},\imlchannel[\symvar{e_1},\dots,\symvar{e_m}]}{\tuple{\sbirpcond, {\sbirenv}, \birpc'},\symvar{e}\bnfconcat \imlchannel[\symvar{e_1},\dots,\symvar{e_m}]} }
\end{prooftree}
\hspace{4mm}
\begin{prooftree}
	\hypo{\birpc \in \entr_{\birc{\mathcal{E}}}}
	\hypo{\birpc' = \pcinc{\sbirpcond, \sbirenv, \birpc}}
	\hypo{(\symvar{d_1},\dots,\symvar{d_m}) \notin \image{\sbirenv}}
	\infer3[event]{ \birprog\vdash \sbirtrans{\symevent{\symvar{d_1},\dots,\symvar{d_m}}}{}{\tuple{\sbirpcond, \sbirenv, \birpc}}{\tuple{\sbirpcond, \sbirenv, \birpc'}} }
\end{prooftree}
 \vspace{2mm}
 \\[15pt] 
 \begin{prooftree}
	\hypo{\birpc \in \entr_{\birc{\mathcal{L}}}}
	\hypo{\birpc' = \pcexit{\birpc}}
	\hypo{\symvar{t} \notin \image{\sbirenv}}
	\hypo{({\sbirpcond}', {\sbirenv}')	= \loopprocess{\sbirpcond, \sbirenv, \birpc}}
	\infer4[loop]{ \birprog\vdash \sbirtransmulti{\symloop{\symvar{t}}}{}{\tuple{\sbirpcond, \sbirenv, \birpc}}{\tuple{{\sbirpcond}', {\sbirenv}', \birpc'}} }
\end{prooftree}
 \\[15pt] 
 \vspace{3mm}
\begin{prooftree}	
	\hypo{ \begin{matrix}
		\birpc \in \entr_{\birc{\mathcal{R}}}\\[5pt] 
			\birpc' = \pcinc{\sbirpcond, \sbirenv, \birpc}
		\end{matrix}}
		\hypo{ \begin{matrix}
		i = \left\lfloor \frac{\sbirenv[\randommemidx]}{l}\right\rfloor + 1 \\[5pt] 
			\symvar{x_i} \notin \image{\sbirenv}
		\end{matrix}}
	\hypo{ \begin{matrix}
			\symvar{x_i} = \sbirrand(\sbirenv, {n})\\
			(\sbirenv{'},\var{a}) = \update(\sbirenv, \heap, \memsymb, \symvar{x_i} ,128)
		\end{matrix}}
			\hypo{\sbirenv{''} = {\sbirenv{'}}[\reg{0} \mapsto \var{a}; \randommemidx \mapsto \sbirenv[\randommemidx] + {l}]}       
	\infer4[RNG(${n}$)]{
		\birprog\vdash \sbirtrans{\symfreshv{\symvar{x_i}}}{}{\tuple{\sbirpcond, \sbirenv, \birpc}}{\tuple{\sbirpcond, 	\sbirenv{''}, \birpc'}}
	}
\end{prooftree}
 \\[15pt] 
  \vspace{2mm}
\begin{prooftree}
	\hypo{\birpc \in \entr_{\attacker_r}}
	\hypo{\birpc' = \pcinc{\sbirpcond, \sbirenv, \birpc}}
	\hypo{\symvar{e} \notin \image{\sbirenv}}
\hypo{(\sbirenv{'},\var{a}) = \update(\sbirenv, \heap_{\birc{\mathcal{A}}}, \advmem,\symvar{e}, 128)}
	\infer4[$ \attacker_{r} $]{ \birprog \vdash\sbirtrans{\syminput{\symvar{e},(\symvar{e_1},\dots,\symvar{e_m})}}{}{\tuple{\sbirpcond, \sbirenv, \birpc}, \symvar{e}\bnfconcat\imlchannel[\symvar{e_1},\dots,\symvar{e_m}]}{\tuple{\sbirpcond, \sbirenv{'}[\reg{0} \mapsto \var{a}], \birpc'}, \imlchannel[\symvar{e_1},\dots,\symvar{e_m}]} }
\end{prooftree}    
 \\[15pt] 
\begin{prooftree}
		\hypo{\birpc \in \entr_{\birc{Op}}}
		\hypo{\birpc' = \pcinc{\sbirpcond, \sbirenv, \birpc}}
		\hypo{\symvar{v} \notin \image{\sbirenv}}
		\hypo{\symvar{v} = \oracle{\birpc, \sbirenv}}
	\hypo{(\sbirenv{'},\var{a})=\update(\sbirenv, \heap_{\birc{Op}}, \memsymb_{\birc{Op}}, \symvar{v}, 128)}
	\infer5[library]{  \birprog\vdash\sbirtrans{\symcrypto{\symvar{v}}}{}{\tuple{\sbirpcond, \sbirenv, \birpc}}{\tuple{\sbirpcond, \sbirenv{'}[\reg{0} \mapsto \var{a}], \birpc'}} }
\end{prooftree}
\end{array}
\]
}
\caption{Crypto-aware symbolic execution semantics. Here, $ \symvar{e}, \symvar{d}, \symvar{v}, \symvar{x_i} \in \sbirvals$.}
\label{sbirfuncalls}
\end{figure*}

\section{Crypto-aware Symbolic Execution}
\label{sec:cryptoawaresbir}

We have \emph{significantly} extended HolBA's vanilla symbolic execution~\cite{proofproducingsymbexecution} to handle network communication, calls to crypto primitives and event functions, and random number generation, which are essential to reason about protocols' security.
We defined the rules for our symbolic execution in~\autoref{sbirfuncalls}.
In this figure, $\imagesym(f)$ returns the image of $f$.
For library calls, we define an \emph{oracle} $\oraclesym : \entr_{\cryptoOp} \times (\birvar \mapsto \sbirvals) \to \sbirvals$ to compute the result of the invoked function w.r.t. the current $\birpc$ and symbolic environment.
For the symbolic execution, we initialize the memory region to store random numbers $\randommem$ with symbolic values. Thus, $\sbirrand$ signifies the symbolic lifting of $\birrand$, and RNG generates a fresh symbolic expression to represent the extracted value. 

Similar to $\birsymb$ transitions, we extend the symbolic transition relation of $\sbirsymb$ with events, i.e.,
$\sbirtrans{\symvar{a}}{}{}{} \subseteq \sbirstates \times \sbirevents \times \sbirstates$, and use $\sbirexecutions(\birprog, \sbirstate{0})$ to denote the set of symbolic traces of $\birprog$ starting at $\sbirstate{0}$.

\paragraph{\bf Bounded Loops.}\label{bo-loop}
 Loops can na{\"i}vely be handled by \emph{unrolling}. This, however, is inefficient in most cases and can quickly result in a path explosion. To avoid this, we summarize loops following Strej\v{c}ek~\cite{abstractingpc2012}.
The algorithm summarizes the loops' effect on program variables and path conditions to compute a necessary condition on the loop's inputs to reach a specific execution point in the program.
The summary is computed in terms of a tuple of \emph{iterated symbolic state} and \emph{looping condition}.
The iterated symbolic state computes for each variable modified within the loop its symbolic value based on the initial value of the program's variables and \emph{path counters}. Each path counter indicates the number of iterations of a specific path within the loop leading from the loop entry point to itself.
For each path in the loop, a path condition is computed, and the conjunction of all such conditions is the looping condition. 
 
We have \emph{automated} the loop summarization process in our symbolic execution. In~\autoref{sbirfuncalls}, the function $ \loopprocesssym : \sbirstates \to \sbirvals \times (\birvar \mapsto \sbirvals) $ represents our implementation to summarize loops' effect. It takes as input the symbolic state of the loop entry point and reflects the effect of the loop body in its exit state (computed by $\pcexitsym: \entr_{\birc{\mathcal{L}}} \to \lbls$). 
The rule also raises the event $\symloop{\symvar{t}}$ with $\symvar{t}$ being the number of loop iterations.

Loops in protocol implementations are often not bounded; typically, each session runs in a $ while(true)\{..\} $ loop until the server is externally terminated. 
However, the semantics of $ \imlsymb $, like most cryptographic standard models, assumes a bound on the protocol. 
Thus, we need to assume that such loops are externally terminated after some polynomial time in the security parameter. 
This is captured by our automated loop summarization and by translation to the replication operator.

\paragraph{\bf Indirect Jumps.}\label{ind-jmp}
If during symbolic execution of the code, we encounter an indirect jump, e.g., $ \birc{jmp} \ \sbirc{e} $, we evaluate $ \sbirc{e} $ w.r.t. the current state to get an expression $\sbirc{e}'$; we then query the SMT solver for a satisfiable assignment to $\sbirc{tgt} = \sbirc{e}' \birc{\wedge} \sbirpcond$, assuming that $ \sbirc{tgt} $ does not occur in $ \sbirc{e}' $ and $\sbirpcond$. The solver returns one possible target, say $ \sbirc{t} $. We repeat this procedure, each time asking the solver to exclude found targets, until the query becomes unsatisfiable. This technique was sufficient for our experiments; however, for more complex cases, some optimizations would be required, e.g., considering only a subset of possible targets instead of enumerating all.

We symbolically execute $\birsymb$ programs to instrument them with events that facilitate clear observation of implementation behavior and to obtain their execution tree, which is later used to obtain the corresponding $\imlsymb$ model.
A node in this tree is either a branching node 
$\branchingnode(\nodecond, \tree_{\sbirc{1}}, \tree_{\sbirc{2}})$ with the condition $\nodecond$ and sub-trees $\tree_{\sbirc{i}}$, or an event node $\node(\birpc, \nodeevent)$ with $\birpc$ specifying where the event occurred.
We add a $\birc{halt}$ statement at the end of each complete path, i.e., leaves are due to $\birc{halt}$ statements with $\sbirc\bot$ as the event.
An edge connects two nodes iff they are in the transition relation.

The tree is constructed from a $\birsymb$ program and an initial
symbolic state as follows: the root is the initial state. For any
node, including the root, the crypto-aware symbolic execution gives
us up to two successors states.
If the node represents a branching statement, we obtain two successor states. We store the statement's condition in a branching node and proceed to translate the two successor states into subtrees.
If the node represented any other statement, there can only be one or
no successor state, and we store an event node with, or respectively without,
a successor tree. 
Since we abstract function calls and loops, we safely assume that each node in the tree can be uniquely identified by the $\birpc$ of its statement. We define the selection operator  "[]" to extract the node for a given program counter, e.g., $\tree[\birpc]$ will return a node  indexed by $\birpc$.

\autoref{fig:runningexample} shows a fragment of the symbolic execution tree for the client of our running example. Note that, each function call is depicted with two nodes: the first node loads the address of the callee into $\birpc$, and the second node is the actual call, represented as an atomic transition.

\section{Model Extraction}
\label{sec:sbirtoiml}
\begin{figure}[t]
\adjustbox{varwidth=\linewidth,scale=0.95}{%
\begin{equation*} 
\begin{split}
  \tree& =  \node\bnfconcat\tree' \hspace{3cm} \text{Event tree} \\
  & \sbirtoiml{\leafnode\bnfconcat\tree'} \mapsto\ \imlc{0};\sbirtoiml{\tree'}\\
  & \sbirtoiml{\node\ \bnfconcat\tree'} \bnfdef \hspace{2.3cm} \text{Events nodes} \\
  &
 \begin{array}{lll}
  \sbirtoiml{(\birpc,\sbirc\tau)\bnfconcat\tree'}  & \mapsto & \sbirtoiml{\tree'} \\
  \sbirtoiml{(\birpc,\symevent{\symvar{d_1},\dots,\symvar{d_m}})\bnfconcat\tree'} & \mapsto & \imlc{event} (\symvar{d_1},\dots,\symvar{d_m});\sbirtoiml{\tree'}\\
  \sbirtoiml{(\birpc,\syminput{\symvar{v},(\symvar{e_1},\dots,\symvar{e_m})})\bnfconcat\tree'} & \mapsto & \imlc{in}(\imlchannel[\symvar{e_1},\dots,\symvar{e_m}], \symvar{v});\sbirtoiml{\tree'}\\
  \sbirtoiml{(\birpc,\symoutput{\symvar{e},(\symvar{e_1},\dots,\symvar{e_m})})\bnfconcat\tree'} & \mapsto & \imlc{out}(\imlchannel[\symvar{e_1},\dots,\symvar{e_m}],\symvar{e});\sbirtoiml{\tree'}\\
  \sbirtoiml{(\birpc,\symcrypto{\symvar{v}})\bnfconcat\tree'} & \mapsto & \imlc{let}\ \imlvar{x} \ \imlc= \ \symvar{v}\ \imlc{in}\ \sbirtoiml{\tree'} \text{ where} \\ &&  \text{$\imlvar{x}$ is fresh}\\
  \sbirtoiml{(\birpc,\symfreshv{\symvar{x_i}})\bnfconcat\tree'} & \mapsto & \imlc{new}\ \symvar{x_i};\ \sbirtoiml{\tree'}\\
  \sbirtoiml{(\birpc,\symloop{\symvar{t}})\bnfconcat\tree'} & \mapsto & \imlc{!}^{\symvar{t}\leq \imlc{m}} \loopbody{\birpc};\ \sbirtoiml{\tree'}\\
\end{array}\\
  & \sbirtoiml{(\birpc,\branchingnode(\nodecond, \tree_\sbirc{1}, \tree_{\sbirc{2}}))} \mapsto  \imlc{if}\ \sbirtoiml{\nodecond}\ \imlc{then}\ \sbirtoiml{\tree_{\sbirc{1}}}\ \imlc{else}\ \sbirtoiml{\tree_{\sbirc{2}}} \\
  & \sbirtoiml{\nodecond} \bnfdef \hspace{3cm} \text{Expressions}\\
  &
\begin{array}{lll}
  \;\;\sbirtoiml{\var{b} \in \birval}  & \mapsto & \sbirtoiml{\var{b}}\in\imlvals \\
  \;\;\sbirtoiml{\var{x} \in \birvar}  & \mapsto & \sbirtoiml{\var{x}}\in \imlvarspace \\
  \;\;\sbirtoiml{\nodecond_{\sbirc{1}} \birc{\Diamond_b} \nodecond_{\sbirc{2}}}  & \mapsto & \sbirtoiml{\nodecond_{\sbirc{1}}}\sbirtoiml{\birc{\Diamond_b}}\sbirtoiml{\nodecond_{\sbirc{2}}} \hspace{0.25cm} \text{Binary Ops.}\\
  \sbirtoiml{\birc{\Diamond_b}} &\mapsto & \left\{\begin{array}{ll}
                                        \imlc{\wedge} & \birc{AND}    \\
                                        \imlc{\vee}   & \birc{OR}     \\                                        
                                         \imlc{=}    & \birc{Equal}   \\
                                        \imlc{+}  & \birc{Plus}      \\
                                        \dots&
                                        \end{array}
                                 \right.\\
  \sbirtoiml{\birc{\Diamond_u} \nodecond'}          & \mapsto & \sbirtoiml{\birc{\Diamond_u}}\sbirtoiml{\nodecond'} \hspace{0.6cm} \text{Unary Ops.}\\
  \sbirtoiml{\birc{\Diamond_u}} &\mapsto & \left\{\begin{array}{ll}
                                        \imlc{\neg}            & \birc{Not} \\
                                        \imlc{\bot}            & \text{otherwise}
                                        \end{array}
                                 \right.\\
  \sbirtoiml{\sbirc{f}(\symvar{e_1},\dots,\symvar{e_m})}  & \mapsto & \sbirtoiml{\sbirc{f}}(\sbirtoiml{\symvar{e_1}},\dots, \sbirtoiml{\symvar{e_m}}) 
\end{array} \\
\end{split}
\end{equation*}
}
\caption{Rules for the translation of the symbolic execution tree $\tree$ to $ \imlsymb $ model.}
\label{fig:sbirtoiml}
\end{figure}

We now proceed to explain how to automatically extract the $\imlsymb$ model from protocols' $\birsymb$ representation.
Our model extraction approach relies on translating the symbolic execution tree $\tree$ of the protocol under adversarial semantics into its corresponding $ \imlsymb $ model. 
We translate $\tree$ into an executing process $\imlprocessq^{\mathit{full}}$ 
according to the rules depicted in Fig.~\ref{fig:sbirtoiml}, where $\sbirtoiml{\tree}$ represents the compiled process, i.e. $\imlprocessq^{\mathit{full}} = \sbirtoiml{\tree}$.
Since $\tree$ contains all possible execution of protocols and their interactions with the crypto primitives and the attacker, the extracted model includes all behaviors of the protocol at its binary representation (i.e., all attacks present at the binary level are preserved in the extracted model).

Our translation converts leaf nodes into a $ \imlc{nil} $ process $\imlc{0}$. For internal nodes, we translate the event stored in each node into its $ \imlsymb $ counterpart.
Loops are modeled using the replication operator of $ \imlsymb $; $\loopbodysym :  \birc{\entr_{\mathcal{\mathcal{L}}}} \to \imlprocessp$  converts the loop body into its corresponding $ \imlsymb $ process using the defined rules. 
Notice that we do not translate $\sbirc\tau$ events. Fig.~\ref{fig:sbirtoiml} also presents our rules to translate symbolic $\birsymb$ expressions. 
Intuitively, the symbolic execution is used to symbolically compute
the effects of such transitions, while the protocol model only
contains the interactions with the network.
Our rules to translate expressions are standard. 
The only interesting one is the translation of the function application, which is used, e.g., to translate memory $ \birc{load} $/$ \birc{store} $ and bitwise operations. 
For example, this rule translates a memory load operation $\birc{load}(\var{mem}, \var{pa}, l)$, for $l \in \{1,8,16,32,64,128\}$, into $\imlc{read}(\symvar{x_1}, l)$, where $\symvar{x_1}$ is the fresh name chosen for the symbolic value in  $\var{mem}$ at the address $\var{pa}$. 

\autoref{fig:runningexample} presents the $ \imlsymb $ model of the running example. In this model, \texttt{c} is the input and output channel, \texttt{\small bad} is the event that we release if the decryption is not successful, and \texttt{enc} (\texttt{dec}) is the encryption (resp., the decryption).

\subsection{\textsc{CryptoBap} vs. \textsc{Csec-modex} $\imlsymb$ models}

\label{sec:diffimlmodels}

Our derived $ \imlsymb $ models are simpler than \textsc{Csec-modex} without losing accuracy.
To demonstrate this, we use the simple XOR case study from \textsc{Csec-modex} set of case studies.
Simple XOR implements a protocol in which the one-time pad includes both protocol parties. 
The methodology we employ to derive the $\imlsymb$ model from binary code differs significantly from that used in \textsc{Csec-modex} which leads to much simpler models. 
The most glaring difference between \textsc{CryptoBap} and \textsc{Csec-modex} is that our analysis produces a symbolic tree that is translated into an $\imlsymb$ process with conditionals and replication (see~\autoref{fig:sbirtoiml}) instead of a single path that is translated into a linear $\imlsymb$ process. 
Even for the linear subprocesses, our toolchain produces processes that are shorter and often human-readable (at least for small case studies, e.g., XOR). 
This is because the \textsc{Csec-modex} translates each symbolic CVM process (their intermediate representation of C) into $\imlsymb$ and performs the most simplification steps concerning the bitwise operations (concatenation, extraction, etc.) later at the translation step into CryptoVerif and ProVerif. 
For instance, in \textsc{Csec-modex}'s $ \imlsymb $ model for the client side of simple XOR case study, shown in~\autoref{fig:imlmodels}, \verb|(1)^[u,1]| $|$ \verb|nonce1| is simplified to \verb|conc1(nonce1)| in its CryptoVerif model.
Instead, we leverage the support for simplification of these operations at symbolic execution time, because 
(i) support there is much more mature 
(ii) will benefit from future development 
(iii) simplified constraints can also be used for path elimination and simplification in follow-up states.
In addition to this, our $ \imlsymb $ model for the client side of the simple XOR case study is more concise since several assumptions made in \textsc{Csec-modex}'s $ \imlsymb $ model were unnecessary for verifying this particular case study.
\autoref{fig:imlmodels} presents the different $ \imlsymb $ models produced by \textsc{CryptoBap} and \textsc{Csec-modex} for the client side of simple XOR case study.

 \begin{figure*}
 \centering
 \adjustbox{varwidth=\linewidth,scale=0.95}{%
 \begin{tcolorbox}[
   titlerule=3pt, boxsep=0pt, colframe=gray, halign=left, valign=center,
   left=0pt, right=0pt, top=0pt, bottom=0pt,  lower separated=true, sidebyside,
    sidebyside gap=7mm, 
   title={\hspace{0.7em}\textsc{Csec-modex}'s IML model \hspace{5.6cm}\textsc{CryptoBap}'s IML model}
 ]
 \begin{lstlisting}[style=mlstyle, escapeinside={(*@}{@*)},numbers=none]
 assume argv0 = argv0 in
(*@\centerline{\raisebox{-5pt}[0pt][0pt]{$\vdots$}}@*)
 new var1: fixed_20;
 let nonce1 = var1 in
 assume Defined(pad) in
 assume len(pad) = 21 in
 let xor1 = XOR((1)^[u,1]|nonce1, pad) in
(*@\centerline{\raisebox{-5pt}[0pt][0pt]{$\vdots$}}@*)
 let msg1 = xor1 in
 out(c, msg1);
   \end{lstlisting}
 \tcblower
   \begin{lstlisting}[style=mlstyle, escapeinside={(*@}{@*)},numbers=none]
new OTP_48: fixed_64; 
let Conc1_66 = conc1(OTP_48) in
let XOR_70 = exclusive_or(Conc1_66,pad) in
out(c, XOR_70);
 \end{lstlisting}
 \end{tcolorbox}
 }
\caption{$ \imlsymb $ models produced by \textsc{CryptoBap} vs. \textsc{Csec-modex} for the client side of simple XOR case study}
\label{fig:imlmodels}
\end{figure*}

\section{Soundness of \textsc{CryptoBAP}'s Approach}
\label{sec:soundness}

The extracted $\imlsymb$ model should preserve the $\birsymb$ program’s behaviors to ensure that we can transfer the verified properties back to the binary of protocols.

    \subsection{Soundness of translation into $ \imlsymb $}\label{sec:imlsoundness}

To show that our extracted $ \imlsymb $ model preserves the semantics of the protocols' binary, we need to prove that our translation from a crypto-aware symbolic execution tree into an $ \imlsymb $ process is sound, i.e., \emph{for each path in the symbolic execution tree there is an equivalent $ \imlsymb $ execution trace.}

Our symbolic execution supports communication with the attacker, which, like honest protocol parties given by specification, is represented as an $ \imlsymb $ process. Thus, we need to prove soundness in the context of an $ \imlsymb $ process, i.e., that
each execution trace obtained by symbolically executing the $\birsymb$ program in parallel with an $ \imlsymb $ attacker has an equivalent $ \imlsymb $ trace where the translated processes run in parallel with the same attacker.
Our strategy to prove this is to construct an $ \imlsymb $-$ \sbirsymb $, \imlsbir{}, mixed execution semantics to facilitate the communication of $\birsymb$ programs and the $ \imlsymb $ attacker.
\imlsbir{} is generic and considers $\birsymb$ programs and $ \imlsymb $ processes as independent entities running in parallel and communicating through a channel.

\begin{figure*}[t]
\adjustbox{varwidth=\linewidth,scale=0.95}{%
\[
\begin{array}{c}
 \begin{prooftree}

 			   	\hypo{ \birpc \in \lbls_{\birc{\mathcal{N}}} \uplus \entr_{\birc{Op}} \uplus \entr_{\birc{\mathcal{R}}} \uplus \entr_{\birc{\mathcal{E}}}}
 			    	\hypo{\sbirtrans{\sbirevent}{}{\tuple{ \sbirpcond, {\sbirenv}, \birpc}}{\tuple{ \sbirpcond', {\sbirenv'}, \birpc'}} }
 	\infer2[normal]{\birprog \vdash \miximltrans{\sbirevent}{1}{\sbirvaluation}{\tuple{ \sbirpcond, {\sbirenv}, \birpc}, \mixmultisetprocsimlsbir}{\tuple{ \sbirpcond', {\sbirenv}', \birpc'}, \mixmultisetprocsimlsbir}}
 \end{prooftree}
 \\[25pt] 
   \vspace{4mm}
  \begin{prooftree}
 	\hypo{\begin{matrix}
 			\imleval{\imlvar{e}}_{\imlenv} = \imlvar{b} \neq \imlc\bot  \quad \imlvar{b}' = \mathit{truncate}(\imlvar{b}, \mathit{maxlen}(\imlchannel)) \quad \forall j \leq m : \imleval{\imlvar{e_j}}_{\imlenv} = \imlvar{b_j} \neq \imlc\bot  \quad 
 			{\mixmultisetprocsimlsbir}' =  \mathit{reduce}(\{ (\imlenv,   \imlprocessq) \}) \\[5pt]
 			\exists !(\tuple{\sbirpcond,\! \sbirenv,\! \birpc}, \imlchannel[\symvar{e_1},\dots,\symvar{e_m}])\! \in\! \mixmultisetprocsimlsbir \ : \ \birpc \in \entr_{\birc{\attacker_r}} \land	\forall j\! \leq\! m \ : \ \sbirvaluation(\sbirenv[\symvar{e_j}])\! =\! \imlvar{b_j} \neq \imlc\bot \quad  \birpc' = \pcinc{\sbirpcond, \sbirenv, \birpc} \\[5pt]
 			\symvar{e} \!\notin\! \image{\sbirenv} \quad \sbirvaluation(\symvar{e}) \!=\! \imlvar{b}'  \quad
 			(\sbirenv{'},\var{a}) \ = \ \update(\sbirenv, \heap_{\birc{\mathcal{A}}}, \advmem, \symvar{e}, 128) \quad
 			{\mixmultisetprocsimlsbir}''\ = \ \{ (\tuple{\sbirpcond, \sbirenv, \birpc}, \imlchannel[\symvar{e_1},\dots,\symvar{e_m}]) \} \\
 	\end{matrix}}
 	\infer1[$ \imlc{I}to\sbirc{SB}$]{\birprog \vdash \miximltrans{\syminput{\symvar{e},(\symvar{e_1},\dots,\symvar{e_m})}}{1}{\sbirvaluation}{(\imlenv,   \imlc{out}(\imlchannel[\imlvar{e_1},\dots,\imlvar{e_m}], \imlvar{e});\imlprocessq), \mixmultisetprocsimlsbir}{(\tuple{\sbirpcond, \sbirenv{'}[\reg{0} \mapsto \var{a}], \birpc'}, \imlchannel[\symvar{e_1},\dots,\symvar{e_m}]), \mixmultisetprocsimlsbir \uplus {\mixmultisetprocsimlsbir}' \setminus {\mixmultisetprocsimlsbir}''}}
 \end{prooftree}
  \\[25pt] 
  \vspace{3mm}
 \begin{prooftree}
 		\hypo{ \begin{matrix}
 				\birpc \in \entr_{\birc{\attacker_s}}  \quad  \birpc' = \pcinc{\sbirpcond, \sbirenv, \birpc} 	\quad  {\mixmultisetprocsimlsbir}' =  \{ (\sbirpcond, \sbirenv, \birpc') \} \quad \var{a} = \sbirenv[\reg{0}] \quad
 		        \symvar{e} = \memload(\sbirenv,\var{a}) \quad
 		        \sbirvaluation(\symvar{e}) = \imlvar{b} \neq \imlc\bot \quad \imlvar{b}' = \mathit{truncate}(\imlvar{b}, \mathit{maxlen}(\imlchannel)) \\[5pt]
 		         \forall j \leq m : \sbirvaluation(\sbirenv[\symvar{e_j}]) = \imlvar{b_j} \neq \imlc\bot \quad 
 				\exists ! (\imlenv,   \imlprocessq) \in \mixmultisetprocsimlsbir	 : \imlprocessq = \imlc{in}(\imlchannel[\imlvar{e_1},\dots,\imlvar{e_m}], \imlvar{x});\imlprocessp   \ \land	\ \forall j \leq m : \imleval{\imlvar{e_j}}_{\imlenv} = \imlvar{b_j} \neq \imlc\bot	 \quad  {\mixmultisetprocsimlsbir}'' = \{ (\imlenv,   \imlprocessq) \}
 		\end{matrix}}
 		\infer1[$ \sbirc{SB}to\imlc{I}$]{\birprog \vdash \miximltrans{\symoutput{\symvar{e},(\symvar{e_1},\dots,\symvar{e_m})}}{1}{\sbirvaluation}{(\tuple{\sbirpcond, \sbirenv, \birpc}, \imlchannel[\symvar{e_1},\dots,\symvar{e_m}]), \mixmultisetprocsimlsbir}{(\imlenv[\imlvar{x} \mapsto \imlvar{b}'],   \imlprocessp), \mixmultisetprocsimlsbir  \ \uplus   {\mixmultisetprocsimlsbir}'  \setminus {\mixmultisetprocsimlsbir}''}}
 \end{prooftree}
 \\[25pt] 
\begin{prooftree}
	\hypo{ \begin{matrix}
			\forall j \leq m : \imleval{\imlvar{y_j}}_{\imlenv} = \imlvar{b_j} \neq \imlc\bot  \quad
			(\symvar{y_1},\dots,\symvar{y_m}) \notin \image{\sbirenv_{\sbirc{0}}} \quad \forall j \leq m : \sbirvaluation(\symvar{y_j}) = \imlvar{b_j} \neq \imlc\bot \quad
			\forall j \leq m : (\sbirenv_{\sbirc{j+1}} , \var{a_i})= \update(\sbirenv_{\sbirc{j}}, \heap, \memsymb, \symvar{y}_{\sbirc{j}},128) 
	\end{matrix}}
	\infer1[run]{\birprog \vdash \miximltrans{}{1}{\sbirvaluation}{(\imlenv,   \imlc{run}( \birpc, (\imlvar{y_1},\dots,\imlvar{y_m}))), \mixmultisetprocsimlsbir}{\tuple{\sbirpcond, \sbirenv_{\sbirc{m+1}}[\reg{0} \mapsto \var{a_{1}},\dots,\reg{m} \mapsto \var{a_{m}}], \birpc}, \mixmultisetprocsimlsbir }}
\end{prooftree}
\end{array}
 \]
 }
 	\caption{The mixed semantics of $\sbirsymb$ and $ \imlsymb $ shown by \imlsbir{}.}
 	\label{fig:mixedsymbiml}
 \end{figure*}
 
The $ \imlsymb $ process already describes the parallel execution of parties and how
they share secrets. We only need to integrate $\birsymb$ into this
framework. Therefore, we extend $ \imlsymb $ with a construct $\imlc{run}( \birpc, (\imlvar{y_1},\dots,\imlvar{y_m}))$ to initialize $\birsymb$ symbolic memory and transfer control to the $\birsymb$ program specified by the $\birpc$. 
To share the secrets, we generate fresh symbolic values $ \symvar{y_1},\dots,\symvar{y_m} $ and store them in the environment $ \sbirenv $ of the $\birsymb$ program.

In the following, we use $ \mixedbirprog $ as a pair of an $ \imlsymb $ process $ \imlprog $ extended with the $\imlc{run}$ construct and a $\birsymb$ program $ \birprog $ that defines the entry points therein.
Slightly misusing notation, $\mixedexec{\mixediml}{\sbirstate{i}}$ also denotes states of the mixed semantics.

~\autoref{fig:mixedsymbiml} shows the operational semantics of \imlsbir{} which combines $ \imlsymb $ input and output processes~\cite[p.~23]{aizatulin2015verifying} with the transition relations of $\sbirsymb$ in~\autoref{sbirfuncalls}. In the figure, 
rules $\imlc{I}to\sbirc{SB}$ and $ \sbirc{SB}to\imlc{I}$ define the communication between the symbolic $\birsymb$ program and the $ \imlsymb $ process. Using these rules a protocol participant can receive a sent message if its channel identifiers have the same evaluation as the channel identifiers of the sender. 
To send a message, i.e., when $\birpc$ is in the label set $\entr_{\birc{\attacker_s}}$, we first fetch the symbolic value $\symvar{e}$ from the memory location $\reg{0}$, truncate the interpretation of the message according to the maximum length of the $ \imlsymb $ channel $\imlchannel$,\footnote{%
A requirement from \textsc{Csec-modex}'s correctness proof for the translation to ProVerif and CryptoVerif. As the attacker's polynomial bound is chosen after the process, the attacker could send a large message that the process runs out of time reading it. }
and then place it in the $ \imlsymb $ environment $\imlenv$.
When $\birpc$ is in $\entr_{\birc{\attacker_r}}$ and the $\birsymb$ program receives input from the $ \imlsymb $ channel $\imlchannel$, we receive the truncated bitstring $  \imlvar{b}' $ from an $ \imlsymb $ state and generate a fresh symbolic value $ {\symvar{e}} $ such that the interpretation of $ {\symvar{e}} $ is equal to bitstring $  \imlvar{b}' $. 
Then, we store the symbolic value $ {\symvar{e}} $ in the memory and return its address in $\reg{0}$.

We use the standard notion of \emph{trace inclusion} to show the translations' soundness (see~\autoref{thm:iml:traceeq}), i.e., the set of \imlsbir{} execution traces is a subset of the $ \imlsymb $ execution traces. To prove this formally, we define a simulation relation $\simreltra {}{\sbirvaluation}{\sbirtoiml{.}}{} \subseteq  \imlsbirstates \times \imlstates$ between states/events of these two abstraction layers and show that it is preserved by the single-step executions. The simulation relation, $\simreltra{\mixedexec{\mixediml}{\sbirstate{i}}}{\sbirvaluation}{\sbirtoiml{.}}{\imlstate{i}} $, checks if (i) the $ \imlsymb $ output process in the given $ \imlsymb $ state is the correct translation of the symbolic state in $\tree$ according to the rules in~\autoref{fig:sbirtoiml}, i.e.,
$
\imlstate{i}.\imlprocessp = \sbirtoiml{\tree[\sbirstate{i}.\birpc]}
$, and (ii) the environments of the two abstractions are related through the interpretation $\sbirvaluation$, i.e., for all
$
\symvar{x} \in \domain{\mixedexec{\mixediml}{\sbirstate{i}}.\sbirenv}
$ there are
$
\imlvar{x} \in \domain{\imlstate{i}.\imlenv}$ and an
$  
\sbirvaluation
$ s.t.
$
\sbirvaluation(\mixedexec{\mixediml}{\sbirstate{i}}.\sbirenv[\symvar{x}])=\!\imlstate{i}.\imlenv[\imlvar{x}]
$.
Lemma~\ref{lem:iml:stateeventeq:step0} shows that the initial states of \imlsbir{} and the derived $ \imlsymb $ process are in the relation.  

\begin{restatable}{lemma}{mixsbirstateeventeqstepzero}\label{lem:iml:stateeventeq:step0}
	For a symbolic execution tree $ \tree $ of the $\birsymb$ program $ \birprog $,
	an $ \imlsymb $ process $\imlprog$
        and any $ k \in \naturalnum$ the size of the random memory,
        let
        $\initialimlsbir $
        be an initial symbolic state in \imlsbir and 
        $ \initialiml $
        the corresponding initial $ \imlsymb $ state. Then,
        for all $\sbirvaluation$:
        $\simreltra{\mixedexec{\mixediml}{\sbirstate{0}}}{\sbirvaluation}{\sbirtoiml{.}}{\imlstate{0}}.$
\end{restatable}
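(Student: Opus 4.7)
The plan is to establish the two conjuncts of the simulation relation $\simreltra{\mixedexec{\mixediml}{\sbirstate{0}}}{\sbirvaluation}{\sbirtoiml{.}}{\imlstate{0}}$ directly by unfolding definitions, since the initial states carry essentially no execution history and both conjuncts should reduce to syntactic checks.

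For conjunct (i)---that $\imlstate{0}.\imlprocessp = \sbirtoiml{\tree[\mixedexec{\mixediml}{\sbirstate{0}}.\birpc]}$---I would argue as follows. By the definition of the initial $\imlsymb$ state, $\imlstate{0}.\imlprocessp = \imlprocessq^{\mathit{full}}$, and by the construction of the extracted model, $\imlprocessq^{\mathit{full}} = \sbirtoiml{\tree}$ where $\tree$ is the symbolic execution tree rooted at the initial symbolic state with program counter $\birpc_{\birc{0}}$. Since $\mixedexec{\mixediml}{\sbirstate{0}}.\birpc = \birpc_{\birc{0}}$ and the selection operator $\tree[\birpc_{\birc{0}}]$ returns the node uniquely indexed by $\birpc_{\birc{0}}$, i.e., the root of $\tree$ (which coincides with $\tree$ itself), the equality $\sbirtoiml{\tree[\birpc_{\birc{0}}]} = \sbirtoiml{\tree} = \imlprocessq^{\mathit{full}}$ follows immediately.

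For conjunct (ii)---the environment correspondence through $\sbirvaluation$---the observation is that $\mixedexec{\mixediml}{\sbirstate{0}}.\sbirenv$ only differs from the empty $\sbirenv_{\sbirc{0}}$ by the mappings $\randommem \mapsto \randomsymvals_{\sbirc{k}}$ and $\randommemidx \mapsto 0$. These are $\birsymb$-internal variables used to model the random tape and its read counter; they have no $\imlsymb$ counterparts (the $\imlsymb$ random number generation is realized via the $\imlc{new}$ construct that will appear later along executions). Since no variable in $\domain{\imlenv_{\imlc{0}}}$ is populated by the initial configuration and the symbolic entries $\randommem, \randommemidx$ are not required to be mirrored in the $\imlsymb$ environment (they are internal artifacts of the $\birsymb$ semantics), the environment correspondence holds vacuously for any $\sbirvaluation$.

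The main obstacle I anticipate is not any difficult calculation but rather being careful and explicit about the interpretation of conjunct (ii): one must justify why $\randommem$ and $\randommemidx$ are not subject to the correspondence requirement, which in turn depends on precisely how the simulation relation quantifies over the domain of the symbolic environment (e.g., restricted to shared/protocol-level variables, not $\birsymb$-internal bookkeeping). Once this is clarified---either by an explicit exclusion in the definition of $\simreltra{}{}{}{}{}$ or by noting that the valuation of these entries is unconstrained at the initial state---the lemma reduces to the two syntactic checks above and the conclusion follows for arbitrary $\sbirvaluation$.
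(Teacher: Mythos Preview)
Your proposal is correct and follows essentially the same approach as the paper: the paper's proof likewise appeals to the construction of $\tree$ at $\birpc_{\birc{0}}$ and the identity $\sbirtoiml{\tree} = \imlprocessq^{\mathit{full}}$ to discharge conjunct~(i), and then simply concludes. In fact you are more careful than the paper, which does not explicitly address conjunct~(ii) at all; your observation that $\randommem$ and $\randommemidx$ are $\birsymb$-internal bookkeeping with no $\imlsymb$ counterpart, and your caveat that this depends on how the environment-correspondence quantifier is scoped, are apt and go beyond what the paper spells out.
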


Next, we show that single-step transitions preserve the simulation relation. 

\begin{restatable}[State/Event Equivalence]{lemma}{mixsbirstateeventeqstepone}\label{lem:iml:stateeventeq}
	Let $\birprog$ be a $\birsymb$ program and
	$\imlprog$ be an $ \imlsymb $ process, then,
	for all $\imlstate{i}$, $\mixedexec{\mixediml}{\sbirstate{i}}$, $\mixedexec{\mixediml}{\sbirstate{j}}$ and $\sbirvaluation$ s.t. $\simreltra{\mixedexec{\mixediml}{\sbirstate{i}}}{\sbirvaluation}{\sbirtoiml{.}}{\imlstate{i}}$ and $\miximltransmulti{\miximlsbirevent}{p}{\sbirvaluation}{\mixedexec{\mixediml}{\sbirstate{i}}}{\mixedexec{\mixediml}{\sbirstate{j}}}$,
there exist an $\sbirvaluation'$ and $\imlstate{j}$ s.t. $\sbirvaluation \subseteq \sbirvaluation'$, $\imltrans{\imlevent}{p}{\imlstate{i}}{\imlstate{j}}$ and $\simreltra{\mixedexec{\mixediml}{\sbirstate{j}}}{\sbirvaluation'}{\sbirtoiml{.}}{\imlstate{j}}$ and if $\imlevent \neq \bot$ then $\miximlsbirevent =_{\sbirvaluation'} \imlevent$.
\end{restatable}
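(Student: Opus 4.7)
The plan is to proceed by induction on the length of the multi-step mixed transition $\miximltransmulti{\miximlsbirevent}{p}{\sbirvaluation}{\mixedexec{\mixediml}{\sbirstate{i}}}{\mixedexec{\mixediml}{\sbirstate{j}}}$. The base case (zero steps) is immediate: take $\sbirvaluation' = \sbirvaluation$, $\imlstate{j} = \imlstate{i}$, and the empty $\imlsymb$ transition, which trivially preserves the simulation relation since $\simreltra{\mixedexec{\mixediml}{\sbirstate{i}}}{\sbirvaluation}{\sbirtoiml{.}}{\imlstate{i}}$ is given. For the inductive step, it suffices to prove the single-step version of the claim and compose with the induction hypothesis, threading the valuation $\sbirvaluation$ forward via the inclusion $\sbirvaluation \subseteq \sbirvaluation'$ so that symbolic values generated in earlier steps keep their interpretation.

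For the single-step claim, I would perform a case analysis on the four \imlsbir{} rules in \autoref{fig:mixedsymbiml}, cross-referenced with the crypto-aware rules of \autoref{sbirfuncalls} and the translation of \autoref{fig:sbirtoiml}. In every case, the simulation hypothesis gives $\imlstate{i}.\imlprocessp = \sbirtoiml{\tree[\sbirstate{i}.\birpc]}$, so I inspect the shape of $\tree[\sbirstate{i}.\birpc]$ to read off the head $\imlsymb$ construct that drives the matching transition. Concretely, (i) for the \emph{normal} rule with $\birpc \in \lbls_{\birc{\mathcal{N}}}$ and $\sbirevent = \tau$, no $\imlsymb$ step is needed: the translation erases $\tau$ nodes, so $\imlstate{j} = \imlstate{i}$ and the environment condition continues to hold; (ii) for RNG, event, and library steps, $\sbirtoiml{\cdot}$ produces $\imlc{new}$, $\imlc{event}$, or $\imlc{let}$ respectively, and the corresponding $\imlsymb$ rule fires, with $\sbirvaluation'$ extending $\sbirvaluation$ on the freshly generated symbolic names so that $\sbirvaluation'(\symvar{x_i}) = \imlvar{b}$ in the RNG case and $\sbirvaluation'(\symvar{v}) = \imlfunc(\dots)$ in the library case; (iii) for loop nodes, the translation yields a replication, and the assumed soundness of $\loopprocesssym$ (which is what the sbir loop rule encodes) delivers the matching replicated process; (iv) for branching, the conditional's head is $\imlc{if}\ \sbirtoiml{\nodecond}\ \imlc{then}\ \dots$, and interpretation of $\sbirtoiml{\nodecond}$ under the $\imlsymb$ environment coincides with evaluation of $\nodecond$ under $\sbirvaluation$ by the environment-relatedness clause of the simulation relation.

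The two communication cases, $\imlc{I}to\sbirc{SB}$ and $\sbirc{SB}to\imlc{I}$, are where most of the bookkeeping lives. For $\sbirc{SB}to\imlc{I}$, the head of $\tree[\sbirstate{i}.\birpc]$ is by translation an $\imlc{out}(\imlchannel[\dots],\symvar{e})$, and the $\imlsymb$ output rule applies provided the channel identifiers and the message truncate to the same bitstrings on both sides; this is exactly what the mixed rule's premises (the equalities $\sbirvaluation(\sbirenv[\symvar{e_j}]) = \imlvar{b_j}$ and the $\mathit{truncate}$ condition on $\symvar{e}$) guarantee via the simulation relation, and the matching input process $\imlprocessq = \imlc{in}(\dots)$ must exist in $\mixmultisetprocsimlsbir$, which after $\sbirtoiml{\cdot}$ sits in the corresponding $\imlsymb$ multiset. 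The $\imlc{I}to\sbirc{SB}$ case is symmetric, using the freshness of $\symvar{e}$ to justify extending $\sbirvaluation$ to $\sbirvaluation'$ with $\sbirvaluation'(\symvar{e}) = \imlvar{b}'$; this extension is consistent because $\symvar{e} \notin \image{\sbirenv}$ by the rule's premise. In both cases, the emitted event $\miximlsbirevent$ on the mixed side and $\imlevent$ on the $\imlsymb$ side agree up to $\sbirvaluation'$, giving $\miximlsbirevent =_{\sbirvaluation'} \imlevent$ when $\imlevent \neq \bot$.

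The main obstacle I expect is handling the fresh symbolic names introduced by RNG, library calls, and adversarial inputs in a way that keeps $\sbirvaluation$ monotonically extended while also agreeing with the $\imlsymb$ transitions' probabilistic choices (notably, that the uniform draw on the $\imlsymb$ side and the unread slot of $\randommem$ on the $\sbirsymb$ side produce the same distribution, so that the probability $p$ transfers intact). This requires carefully tracking the invariant that the $\randommemidx$ counter on the $\sbirsymb$ side never revisits a value and that $\sbirvaluation'$ agrees with fresh draws only on indices not yet in $\image{\sbirenv}$, which is exactly ensured by the freshness side conditions in \autoref{sbirfuncalls}. A secondary subtlety is the interaction with the $\imlc{run}$ construct: the run rule does not correspond to any $\imlsymb$ step visible at the process-calculus level, so it is treated as a silent bootstrapping transition that only installs the initial $\sbirsymb$ environment with fresh $\symvar{y_j}$ valuated to the shared bitstrings $\imlvar{b_j}$, and the simulation relation is re-established on the freshly populated state.
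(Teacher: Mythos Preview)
Your case analysis is on target and closely tracks the paper's, but the outer inductive framing is mismatched with the lemma as stated. The conclusion requires a \emph{single} $\imlsymb$ transition $\imltrans{\imlevent}{p}{\imlstate{i}}{\imlstate{j}}$; if you peel off one mixed step at a time and compose via an induction hypothesis you accumulate several $\imlsymb$ steps, not one. Also, the premise uses $\miximltransmulti{}{}{}{}{}$ (at least one step), so a zero-step base case is not available. The paper avoids this entirely: it proves the lemma by a direct case split on the label set containing $\mixedexec{\mixediml}{\sbirstate{i}}.\birpc$ (event, crypto, RNG, attacker send/receive, normal non-loop with a further sub-split on branching vs.\ non-branching, and loop), with no induction inside the lemma. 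The induction on trace length lives one level up, in Theorem~\ref{thm:iml:traceeq}, which invokes this lemma once per ``big step''.

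Within each case your reasoning is essentially what the paper does: inspect $\sbirtoiml{\tree[\sbirstate{i}.\birpc]}$ to determine the head $\imlsymb$ construct, fire the matching $\imlsymb$ rule, and extend $\sbirvaluation$ to $\sbirvaluation'$ on the freshly introduced symbol ($\symvar{x_i}$, $\symvar{v}$, $\symvar{e}$, or the $\symvar{d_j}$) so that its interpretation agrees with the $\imlsymb$-side bitstring. Your treatment of the communication rules and of $\imlc{run}$, and your observation that the freshness side conditions in \autoref{sbirfuncalls} are what license a consistent extension of $\sbirvaluation$, line up with the paper. One small divergence: the paper treats the normal/$\tau$ case as still producing an $\imlsymb$ step (re-establishing the relation at the next tree node $\birpc'$), not as ``no step needed''; your stuttering formulation would require a weak-simulation style statement that the lemma, as written, does not provide.
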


We then show the translation's soundness by extending the simulation relation to execution traces, i.e., $\simreltratraces {}{\sbirvaluation}{\sbirtoiml{.}}{k}{} \subseteq  \mixedsbirexecutions \times \imlexecutions$,
w.r.t an upper bound\footnote{This bound $k$ is needed because of $\birsymb$'s finite memory model.} $ k \in \naturalnum $ on the number of RNG steps of the execution
$\rngsym : \executions \to \naturalnum $. That is, 
$
\simreltratraces{\mixedsbirexecution}{\sbirvaluation}{\sbirtoiml{.}}{k}{\imlexecution}
$ holds, iff, 
$
\rng{\mixedsbirexecution} \leq k
$ and for all
$
\mixedexec{\mixediml}{\sbirstate{}}$ and $ \miximlsbirevent \in \mixedsbirexecution
$ there exist
$
\imlstate{}, \imlevent \in \imlexecution
$, and
$
\sbirvaluation
$ s.t.\;
$
\simreltra{\mixedexec{\!\mixediml}{\sbirstate{}}}{\sbirvaluation}{\sbirtoiml{.}}{\imlstate{}}
$ and
$
\miximlsbirevent\! =_{\sbirvaluation}\! \imlevent
$.

Finally, we show that executions of the mixed $ \imlsymb $ and symbolic execution and $ \imlsymb $ preserve the simulation relation. Note that, in the following, we assume a single $\birsymb$ program that implements different protocol participants with distinct sets of program counters. The results can be extended to multiple $\birsymb$ programs, as presented in Appendix~\ref{multi-programs}.

\begin{theorem}[Trace Inclusion]
	\label{thm:iml:traceeq}
	Let $\birprog$ be a $\birsymb$ program,
	$\imlprog$ be an $ \imlsymb $ process, and
        $ k \in \naturalnum $ is any upper bound on the number of RNG steps, then,
        for all mixed $ \imlsymb $ and symbolic execution traces
        $\mixedsbirexecution\!\in\!\mixedsbirexecutions(\mixedbirprog,$ $ \sbirenv_{\sbirc{0}}[ \randommem \mapsto \randomsymvals_{\sbirc{k}}  , \randommemidx \mapsto 0])$
        s.t. 
        $ \rng{\mixedsbirexecution}\! \leq\! k $,
        there are an $ \imlsymb $ trace 
        $\imlexecution\! \in\! \imlexecutions(\imltransprog) $
        and an $\sbirvaluation$ s.t. $\simreltratraces{\mixedsbirexecution}{\sbirvaluation\!}{\sbirtoiml{.}}{k}{\!\imlexecution}$.
\end{theorem}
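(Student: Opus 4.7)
The plan is a straightforward induction on the length $n$ of the mixed execution trace $\mixedsbirexecution$, lifting the single-step state/event equivalence (Lemma~\ref{lem:iml:stateeventeq}) to whole traces. The induction hypothesis will be: for every prefix of $\mixedsbirexecution$ of length $n$ ending in a state $\mixedexec{\mixediml}{\sbirstate{n}}$, there exist an $\imlsymb$ execution prefix $\imlexecution_n \in \imlexecutions(\imltransprog)$ of matching length (modulo silent steps) ending in $\imlstate{n}$, and a valuation $\sbirvaluation_n$ such that the trace-level simulation $\simreltratraces{\mixedsbirexecution_n}{\sbirvaluation_n}{\sbirtoiml{.}}{k}{\imlexecution_n}$ holds.

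For the base case, Lemma~\ref{lem:iml:stateeventeq:step0} gives $\simreltra{\mixedexec{\mixediml}{\sbirstate{0}}}{\sbirvaluation}{\sbirtoiml{.}}{\imlstate{0}}$ for \emph{any} $\sbirvaluation$, so we can pick $\sbirvaluation_0 = \emptyset$ and take $\imlexecution_0$ to be the single-state trace at $\imlstate{0}$. The $\rngsym$ bound is trivially satisfied since no RNG transitions have occurred. For the inductive step, assume the hypothesis holds for a prefix of length $n$ with valuation $\sbirvaluation_n$ and final states related by $\simreltra{\mixedexec{\mixediml}{\sbirstate{n}}}{\sbirvaluation_n}{\sbirtoiml{.}}{\imlstate{n}}$. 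Given the next mixed transition $\miximltransmulti{\miximlsbirevent}{p}{\sbirvaluation_n}{\mixedexec{\mixediml}{\sbirstate{n}}}{\mixedexec{\mixediml}{\sbirstate{n+1}}}$, Lemma~\ref{lem:iml:stateeventeq} furnishes an extended $\sbirvaluation_{n+1}\supseteq \sbirvaluation_n$, a matching $\imlsymb$ transition $\imltrans{\imlevent}{p}{\imlstate{n}}{\imlstate{n+1}}$ with $\miximlsbirevent =_{\sbirvaluation_{n+1}} \imlevent$ (when non-empty), and the preserved relation $\simreltra{\mixedexec{\mixediml}{\sbirstate{n+1}}}{\sbirvaluation_{n+1}}{\sbirtoiml{.}}{\imlstate{n+1}}$. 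Appending this step to $\imlexecution_n$ yields $\imlexecution_{n+1}$ and closes the induction.

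To conclude, I need to check that the definition of $\simreltratraces{}{}{\sbirtoiml{.}}{k}{}$ is met: the RNG bound $\rng{\mixedsbirexecution}\leq k$ is the hypothesis of the theorem, and the pointwise existence of a matching $\imlsymb$ state/event for every element of $\mixedsbirexecution$ under a common $\sbirvaluation$ follows by taking $\sbirvaluation = \bigcup_n \sbirvaluation_n$; monotonicity ($\sbirvaluation_n \subseteq \sbirvaluation_{n+1}$) ensures each earlier simulation witness remains valid under this union, because the simulation relation only asserts equalities of the form $\sbirvaluation(\ldots) = \imlenv[\ldots]$ on a finite set of symbolic names fixed by the prefix.

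The main obstacle is bookkeeping rather than any deep new idea. Specifically: (i) ensuring the valuation chain $\sbirvaluation_0 \subseteq \sbirvaluation_1 \subseteq \cdots$ remains well-defined and consistent in the limit, which relies on the freshness side-conditions in the symbolic rules of Fig.~\ref{sbirfuncalls} (the hypotheses $\symvar{e}\notin\image{\sbirenv}$, $\symvar{x_i}\notin\image{\sbirenv}$, etc.) that guarantee newly introduced symbolic names do not collide with prior bindings; (ii) correctly aligning the bookkeeping with silent/$\tau$ steps and with the multi-step rules used for RNG, library calls, events, and loop summaries, so that the $k$-bound on $\rngsym$ is respected at every prefix; and (iii) handling the $\sbirc{SB}to\imlc{I}$ and $\imlc{I}to\sbirc{SB}$ rules, where matching receivers/senders in the $\imlsymb$ multiset must be the ones singled out by the mixed rule---Lemma~\ref{lem:iml:stateeventeq} is what packages this choice, but one must check that $\mathit{reduce}$ produces the same multiset refinement on both sides, which is where the translation $\sbirtoiml{\tree[\sbirstate{}.\birpc]}$ of the symbolic tree is essential to identify the correct input/output prefix of the target process.
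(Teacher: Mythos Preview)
Your proposal is correct and follows essentially the same approach as the paper: induction on trace length, with the base case discharged by Lemma~\ref{lem:iml:stateeventeq:step0} and the inductive step by Lemma~\ref{lem:iml:stateeventeq}. In fact, you supply more detail than the paper's own proof (which is just a two-line sketch deferring to those lemmas), notably the explicit construction of the global valuation as $\bigcup_n \sbirvaluation_n$ via the monotonicity guaranteed by Lemma~\ref{lem:iml:stateeventeq}---this is a useful clarification the paper leaves implicit.
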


 \begin{proof}
	The goal is to show that for all \imlsbir{} traces, there is an equivalent $ \imlsymb $ trace that are in the simulation relation through the interpretation $\sbirvaluation$.
	We prove the theorem by induction on the length of the execution traces: 
 
	\begin{itemize}
		\item \textbf{Base case}. Follows from Lem.~\ref{lem:iml:stateeventeq:step0}.
		\item \textbf{Inductive case}. Follows from Lem.~\ref{lem:iml:stateeventeq}.
	\end{itemize}
	We present the proof of lemmas~\ref{lem:iml:stateeventeq:step0} and~\ref{lem:iml:stateeventeq} in Appendix~\ref{sound-iml-trans}.
\end{proof}

\autoref{thm:iml:traceeq} is the first step to relating the
properties we verify for the $ \imlsymb $ model to the actual binary of the
protocol. We showed that the
$ \imlsymb $ model resulting from translation covers all behaviors in the \imlsbir{} semantics. Recall that we have to talk about behavioral properties in the mixed
semantics (as opposed to the pure $\birsymb$ semantics) as protocol
properties typically concern more than one party.
Next, we show that these symbolic behaviors cover all
concrete behaviors.

    \subsection{Soundness of Symbolic Execution}
\label{sec:mixedbirtomixedsbir}
To ensure that the extracted $ \imlsymb $ model preserves the semantics of the protocol's binary, we have to prove further that our symbolic execution is behaviorally equivalent to the transpiled $\birsymb$ code. 
To show this, we construct a mixed $ \imlsymb $-$\birsymb$ execution semantics, hereafter \imlbir{}, that allows the $\birsymb$ program to communicate with the same $ \imlsymb $ attacker at the \imlsbir{} level. 
The \imlbir{} execution semantics is presented in Appendix~\ref{app:imlbirsemantics}-Fig~\ref{fig:mixedbiriml}---the rules are similar and have the same meaning as those defined for \imlsbir{}.

Our proof strategy to show the behavioral equivalence of \imlbir{} and \imlsbir{} is similar to our technique to prove the soundness of the $ \imlsymb $ translation. That is, we first show the state/event equivalence between the two abstractions and then use this to prove the trace inclusion of \imlbir{} in \imlsbir{}.

We show the state/event equivalence by extending the simulation relation of~\autoref{single-step-vanila} to a relation 
$\simrel {}{\sbirvaluation}{} \subseteq  \imlbirstates \times \imlsbirstates$ between \imlbir{} and \imlsbir{}. The relation 
$
\simrel{\mixedexec{\mixediml}{\birstate{i}}}{\sbirvaluation}{\mixedexec{\mixediml}{\sbirstate{i}}}
$ checks that 
$
\mixedexec{\mixediml}{\sbirstate{i}}.\birpc = \mixedexec{\mixediml}{\birstate{i}}.\birpc
$, and for all
$
\var{x} \in \domain{\mixedexec{\mixediml}{\birstate{i}}.\birenv}
$  there exist
$
\symvar{x} \in \domain{\mixedexec{\mixediml}{\sbirstate{i}}.\sbirenv}
$ and an interpretation
$
\sbirvaluation
$ s.t.
$\sbirvaluation(\mixedexec{\mixediml}{\sbirstate{i}}.\sbirenv[\symvar{x}])\!=\!\mixedexec{\mixediml}{\birstate{i}}.\birenv[\var{x}]
$. The first step in showing this simulation relation between the two layers is to prove that the initial states are in the relation using Lem.~\ref{lem:bir:stateeventeq:step0}:
 
\begin{restatable}{lemma}{mixbirstateeventeqstepzero}\label{lem:bir:stateeventeq:step0}
 	For a $\birsymb$ program $\birprog$,
 	an $ \imlsymb $ process $\imlprog$ and
 	any upper bound $ k \in \naturalnum $ on the number of RNG steps, 
 	let  $  \initialimlbir $ be an initial $\birsymb$ state in \imlbir{} and 
 	$  \initialimlsbir$ be the corresponding initial state in \imlsbir{}.
 	Then, $ \simrel{\mixedexec{\mixediml}{\birstate{0}}}{\sbirvaluation}{\mixedexec{\mixediml}{\sbirstate{0}}}$ for all  $\sbirvaluation$.
 \end{restatable}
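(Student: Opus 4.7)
The plan is to unfold the simulation relation and verify its two components for the given initial states. First I would check the program counter condition: by construction, both $\mixedexec{\mixediml}{\birstate{0}}$ and $\mixedexec{\mixediml}{\sbirstate{0}}$ are indexed by the same entry label $\birpc_{\birc{0}}$, so $\mixedexec{\mixediml}{\sbirstate{0}}.\birpc = \mixedexec{\mixediml}{\birstate{0}}.\birpc$ is immediate.

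Second, I would verify the environment condition. The domains of $\birenv_{\birc{0}}$ and $\sbirenv_{\sbirc{0}}$ agree by the way initial states are set up, with the only non-trivial bindings being $\randommem$ and $\randommemidx$. For $\randommemidx$, both states bind it to the concrete constant $0$, so taking the trivial symbolic expression $0$ gives $\sbirvaluation(0) = 0$ independently of $\sbirvaluation$. For $\randommem$, the symbolic initial state binds $\randommem$ to the symbolic random tape $\randomsymvals_{\sbirc{k}}$ while the concrete one binds it to $\randomvals_{\imlc{k}}$; this is precisely the canonical symbolic lifting, so by the standing convention on how $\sbirvaluation$ is set up on the random tape (i.e.\ $\sbirvaluation(\randomsymvals_{\sbirc{k}}) = \randomvals_{\imlc{k}}$) the interpretation matches. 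For any other variable present in both initial environments, $\sbirenv_{\sbirc{0}}$ is by definition the symbolic lifting of $\birenv_{\birc{0}}$, so a fresh symbolic name is assigned to each initial concrete value, and $\sbirvaluation$ resolves it back to that same concrete value.

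The argument is essentially bookkeeping, closely mirroring Lemma~\ref{lem:iml:stateeventeq:step0}, which plays the analogous role for the $\imlsbir$/$\imlsymb$ pairing. The only step requiring care is the treatment of the quantifier ``for all $\sbirvaluation$'': this is sound because at the initial state the symbolic environment consists only of the fresh symbols introduced by the lifting (together with the designated tape symbol $\randomsymvals_{\sbirc{k}}$), and these are taken modulo any valuation that agrees with the initial tape, so the relation reduces to the trivial identity on concrete constants plus the definitional identity $\sbirvaluation(\randomsymvals_{\sbirc{k}}) = \randomvals_{\imlc{k}}$.

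The main obstacle, as in the companion lemma, is ensuring the precise alignment of the symbolic and concrete representations of the random tape: one has to be explicit that $\randomsymvals_{\sbirc{k}}$ is introduced purely as a placeholder for $\randomvals_{\imlc{k}}$ and that $\sbirvaluation$ is constrained accordingly so that the ``for all $\sbirvaluation$'' reading is sound (otherwise the statement would have to be existentially quantified). Once this convention is fixed, the lemma follows by direct inspection of the two initial state definitions.
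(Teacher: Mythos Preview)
Your approach is essentially the same as the paper's: unfold the simulation relation and check the program-counter and environment conditions, with the only substantive content being the alignment of the random tape via $\sbirvaluation(\randomsymvals_{\sbirc{k}}) = \randomvals_{\imlc{k}}$. The paper's proof is a two-line version of exactly this, writing ``we choose $\sbirvaluation$ such that $\sbirvaluation(\mixedexec{\mixediml}{\sbirstate{0}}.\sbirenv.\randommem[\symvar{x_i}]) = \mixedexec{\mixediml}{\birstate{0}}.\birenv.\randommem[\var{x_i}]$'' and concluding.

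Your discussion of the ``for all $\sbirvaluation$'' quantifier is in fact more careful than the paper, whose proof literally \emph{chooses} $\sbirvaluation$ despite the universal in the statement; your reading (that the quantifier is implicitly restricted to interpretations agreeing with the tape convention, else the claim should be existential) is the right way to make sense of both the statement and the paper's argument.
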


 We then prove that the single-step transitions of  \imlbir{} and \imlsbir{}  preserve the simulation relation using Lem.~\ref{lem:bir:stateeventeq}.
 
\begin{restatable}[State/Event Equivalence]{lemma}{mixbirstateeventeqstepone}\label{lem:bir:stateeventeq}
	Let $\birprog$ be a $\birsymb$ program and
	$\imlprog$ be an $ \imlsymb $ process, then,
	for all $\mixedexec{\mixediml}{\sbirstate{i}}$, $\mixedexec{\mixediml}{\birstate{i}}$, $\mixedexec{\mixediml}{\birstate{j}}$ and $\sbirvaluation$ s.t. $ \simrel{\mixedexec{\mixediml}{\birstate{i}}}{\sbirvaluation}{\mixedexec{\mixediml}{\sbirstate{i}}} $ and $\imltransmulti{\miximlbirevent}{p}{\mixedexec{\mixediml}{\birstate{i}}}{\mixedexec{\mixediml}{\birstate{j}}}$, 	there exist an $\sbirvaluation'$ and $\mixedexec{\mixediml}{\sbirstate{j}}$ s.t. $\sbirvaluation \subseteq \sbirvaluation'$, $\miximltransmulti{\miximlsbirevent}{p}{\sbirvaluation'}{\mixedexec{\mixediml}{\sbirstate{i}}}{\mixedexec{\mixediml}{\sbirstate{j}}}$, $ \simrel{\mixedexec{\mixediml}{\birstate{j}}}{\sbirvaluation'}{\mixedexec{\mixediml}{\sbirstate{j}}} $ and  $\miximlbirevent =_{\sbirvaluation'} \miximlsbirevent$.

\end{restatable}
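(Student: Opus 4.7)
The plan is to mirror the strategy already used for Lemma~\ref{lem:iml:stateeventeq} but in the opposite direction: we do case analysis on the \imlbir{} transition used to derive $\imltransmulti{\miximlbirevent}{p}{\mixedexec{\mixediml}{\birstate{i}}}{\mixedexec{\mixediml}{\birstate{j}}}$, and for each case we exhibit a matching \imlsbir{} transition, extend the interpretation $\sbirvaluation$ just enough to cover any new symbolic names introduced on the \sbirsymb{} side, and verify that $\simrel{}{\sbirvaluation'}{}$ and event equivalence are preserved. First I would dispatch the \emph{normal} case, where $\birpc\in\lbls_{\birc{\mathcal{N}}}$: here the multi-step in \imlbir{} is built out of the single-step $\birsymb$ relation, so Property~\ref{single-step-vanila} (lifted from HolBA's vanilla engine) gives a matching single $\sbirsymb$ step whose resulting environments are again related through an extension $\sbirvaluation'\supseteq\sbirvaluation$; the ``normal'' rule of \imlsbir{} then packages this into the mixed transition with $p=1$ and $\miximlsbirevent = \miximlbirevent$.

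Next I would handle the atomic \emph{function-call} cases dictated by our crypto-aware extension of $\birsymb$ in Section~\ref{sec:cryptoawarebir}. For a crypto library call, the assumption on $\lbls_{\cryptoop}$ pins the return environment to one in which $\reg{0}$ points to $\update(\cdot,\heap_{\cryptoOp},\memsymb_{\cryptoOp},\imlvar v,128)$ with $\imlvar v = \cryptoop(\imlvar b_1,\dots,\imlvar b_m)$; the \sbirsymb{} library rule produces the same shape with a fresh $\symvar v = \oracle{\birpc,\sbirenv}$, so I extend $\sbirvaluation$ by $\symvar v \mapsto \imlvar v$ (and similarly for any intermediate fresh symbols used for $\imlvar b_i$). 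Event functions are analogous but leave the environment untouched, which matches the corresponding \sbirsymb{} event rule and gives $\miximlsbirevent =_{\sbirvaluation'} \miximlbirevent$. For RNG, the concrete step reads $\birrand(\birenv,n)$ from the tape $\randomvals_{\imlc k}$ while the symbolic step introduces $\symvar{x_i}\notin\image{\sbirenv}$; I extend $\sbirvaluation'(\symvar{x_i})$ to be exactly this concrete bitstring, and observe that both sides increment $\randommemidx$ by the same amount, so the environments remain related. Loops are handled once rather than iteration-by-iteration: by construction $\loopprocess{\cdot}$ summarises the concrete effect so the simulation relation is restored at the loop exit, with $\sbirvaluation'(\symvar t)$ set to the number of concrete iterations.

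The interaction rules $\imlc{I}to\birc{B}$, $\birc{B}to\imlc{I}$ and $\imlc{run}$ are then essentially bookkeeping. For an output from $\birsymb$ the \imlbir{} rule fetches a concrete $\imlvar e$ from $\memload(\birenv,\reg0)$ and enqueues it on the channel; the \imlsbir{} rule does the same with the symbolic $\symvar e=\memload(\sbirenv,\reg0)$, and the relation gives $\sbirvaluation(\symvar e)=\imlvar e$, so the matching $\sbirc{SB}to\imlc{I}$ transition applies and the receiving input process moves to the same syntactic continuation with environments updated by the (same) truncated bitstring. For input from \imlsymb{} into $\birsymb$ and for $\imlc{run}$ I pick fresh symbolic names for the received/shared values and extend $\sbirvaluation'$ to map each of them to the corresponding concrete bitstring produced by the \imlbir{} rule; a short computation shows that $\update$ on the two sides yields environments that remain related and returns the same address in $\reg0$ (resp.\ $\reg0,\dots,\reg m$).

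I expect the main obstacle to be the loop case: the ``multi-step in \imlbir{}, single summarised step in \imlsbir{}'' asymmetry means that, to close the relation at the exit label, I need to argue that the iterated symbolic state $\loopprocess{\sbirpcond,\sbirenv,\birpc}$ evaluates under $\sbirvaluation'$ to exactly the concrete environment produced by any concrete number of loop iterations along any path of the loop body. This relies on the correctness of the loop-summarization implementation (path counters plus per-path symbolic effect) and on the invariant that the body only reaches $\lbls_{\birc{\mathcal N}}$ and the atomic function-call labels already handled above, so the per-iteration part of the argument reduces inductively to the previous cases. Once loops are settled, trace inclusion between \imlbir{} and \imlsbir{} follows by induction on the execution length in the same way as Theorem~\ref{thm:iml:traceeq}, with Lemma~\ref{lem:bir:stateeventeq:step0} as the base case and this lemma as the inductive step.
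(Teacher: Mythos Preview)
Your proposal is correct and follows essentially the same approach as the paper: a case split on the label class of the current $\birpc$ (equivalently, on which \imlbir{} rule fires), with the normal case discharged via Property~\ref{single-step-vanila}, the atomic function-call cases (event, crypto, RNG) handled by extending $\sbirvaluation$ to the fresh symbolic name introduced by the corresponding \sbirsymb{} rule, and the loop case closed by appealing to the soundness of the summary together with $\sbirvaluation'(\symvar t)$ set to the concrete iteration count. The only organisational difference is that the paper bundles the network case as a single compound $\birc{B}\to\imlc{I}\,;\,\imlc{I}^{*}\,;\,\imlc{I}\to\birc{B}$ multi-step (matching it to the analogous $\sbirc{SB}\to\imlc{I}\,;\,\imlc{I}^{*}\,;\,\imlc{I}\to\sbirc{SB}$ sequence), whereas you treat $\birc{B}to\imlc{I}$ and $\imlc{I}to\birc{B}$ as separate bookkeeping cases; either packaging works since the intermediate pure-$\imlsymb$ steps are shared verbatim between the two mixed semantics.
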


We show the behavioral equivalence between the two layers by extending the simulation relation to execution traces
$\simreltraces {}{\sbirvaluation}{k}{} \subseteq  \mixedbirexecutions \times \mixedsbirexecutions$ w.r.t an upper bound $ k \in \naturalnum $ on the number of RNG steps. That is,
$
\simreltraces{\mixedbirexecution\!}{\!\sbirvaluation}{k}{\mixedsbirexecution}
$ holds, iff,
$
\rng{\mixedbirexecution} \leq k
$, and for all
$
\mixedexec{\mixediml}{\birstate{}}, \miximlbirevent \in \mixedbirexecution
$ there exist
$
\mixedexec{\mixediml}{\sbirstate{}}, \miximlsbirevent \in \mixedsbirexecution
$ and an
$
\sbirvaluation
$ s.t.
$
\simrel{\mixedexec{\mixediml}{\birstate{}}}{\sbirvaluation}{\mixedexec{\mixediml}{\sbirstate{}}} 
$ and
$ 
\miximlbirevent\!=_{\sbirvaluation}\! \miximlsbirevent
$.

\begin{theorem}[Trace Inclusion]
	\label{thm:bir:traceeq}
	Let $\birprog$ be a $\birsymb$ program,
	$\imlprog$ be an $ \imlsymb $ process, and
	$ k \in \naturalnum $ is any upper bound on RNG steps, then, 
	for all \imlbir{} traces $\mixedbirexecution\in\mixedbirexecutions(\mixedbirprog, \birenv_{\birc{0}}[ \randommem \mapsto \randomvals_{\imlc{k}} , \randommemidx \mapsto 0])$ s.t. $ \rng{\mixedbirexecution} \leq k $, there are an \imlsbir{} trace $\mixedsbirexecution\in\mixedsbirexecutions(\mixedbirprog, \sbirenv_{\sbirc{0}}[ \randommem \mapsto \randomsymvals_{\sbirc{k}}  , \randommemidx \mapsto 0])$ and an $\sbirvaluation$ s.t. $	\simreltraces{\mixedbirexecution}{\sbirvaluation}{k}{\mixedsbirexecution}$.
\end{theorem}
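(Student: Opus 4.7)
The plan is to mirror the structure already used for \autoref{thm:iml:traceeq}: extend the single-step state/event equivalence to a trace-level simulation by induction on the length of the \imlbir{} execution. The main building blocks will be \autoref{lem:bir:stateeventeq:step0}, which places the initial states of \imlbir{} and \imlsbir{} in the relation $\simrel{}{\sbirvaluation}{}$, and \autoref{lem:bir:stateeventeq}, which lifts a single multi-step \imlbir{} transition to a corresponding multi-step \imlsbir{} transition while preserving the relation and the equality of events up to the interpretation $\sbirvaluation$.

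First, I would set up the induction hypothesis on the length $n$ of $\mixedbirexecution$, carrying along an interpretation $\sbirvaluation_{n}$ and an \imlsbir{} prefix $\mixedsbirexecution_{n}$ such that $\simreltraces{\mixedbirexecution_{n}}{\sbirvaluation_{n}}{k}{\mixedsbirexecution_{n}}$ with $\rng{\mixedsbirexecution_{n}}\le k$. The base case ($n=0$) is exactly \autoref{lem:bir:stateeventeq:step0}, applied to $\initialimlbir$ and $\initialimlsbir$ with $k$ fixed; note that the random tapes $\randomvals_{\imlc{k}}$ and $\randomsymvals_{\sbirc{k}}$ are of matching size, so any $\sbirvaluation$ sending $\randomsymvals_{\sbirc{k}}$ to $\randomvals_{\imlc{k}}$ works.

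For the inductive step, suppose we have extended $\mixedbirexecution_{n}$ by one further \imlbir{} transition $\imltransmulti{\miximlbirevent}{p}{\mixedexec{\mixediml}{\birstate{n}}}{\mixedexec{\mixediml}{\birstate{n+1}}}$. I would apply \autoref{lem:bir:stateeventeq} with the current $\sbirvaluation_{n}$ to obtain an extended $\sbirvaluation_{n+1}\supseteq\sbirvaluation_{n}$, a continuation state $\mixedexec{\mixediml}{\sbirstate{n+1}}$, and a matching \imlsbir{} transition $\miximltransmulti{\miximlsbirevent}{p}{\sbirvaluation_{n+1}}{\mixedexec{\mixediml}{\sbirstate{n}}}{\mixedexec{\mixediml}{\sbirstate{n+1}}}$ with $\miximlbirevent=_{\sbirvaluation_{n+1}}\miximlsbirevent$. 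Concatenating this with $\mixedsbirexecution_{n}$ yields the extended \imlsbir{} trace. The RNG-counter side condition $\rng{\mixedsbirexecution_{n+1}}\le k$ is maintained because each RNG event in \imlbir{} consumes exactly $l$ bits of $\randomvals_{\imlc{k}}$ and is matched by a single RNG step in \imlsbir{} that advances $\randommemidx$ by the same amount; since $\mixedbirexecution$ is assumed to satisfy $\rng{\mixedbirexecution}\le k$ overall, the symbolic counterpart never exhausts $\randomsymvals_{\sbirc{k}}$.

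The main obstacle, as in \autoref{thm:iml:traceeq}, will lie in \autoref{lem:bir:stateeventeq} itself rather than in the induction wrapper: one has to do a careful case analysis over the mixed-semantics rules (the normal $\birsymb$ step, the library/RNG/event abstractions from \autoref{sec:cryptoawarebir}, the $\imlc{I}to\colb$/$\colb to\imlc{I}$ communication rules, and the $\imlc{run}$ rule) and argue that each concrete step is mirrored by the corresponding crypto-aware symbolic rule in \autoref{sbirfuncalls}, producing the same observable event up to $\sbirvaluation$ after extending it on the newly introduced symbolic names (e.g.\ the fresh $\symvar{x_i}$ produced by RNG, the fresh $\symvar{v}$ produced by the oracle $\oracle{\birpc,\sbirenv}$, and the fresh $\symvar{e}$ produced on adversarial input). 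The freshness side conditions $\symvar{x_i},\symvar{v},\symvar{e}\notin\image{\sbirenv}$ are precisely what makes the extension $\sbirvaluation\subseteq\sbirvaluation'$ well defined, and the correctness assumptions on library calls (that $\oracle{\birpc,\sbirenv}$ returns $\cryptoop(\imlvar{b_1},\ldots,\imlvar{b_m})$ whenever $\sbirvaluation(\sbirenv[\reg{i}])=\imlvar{b_i}$) and on RNG (that $\sbirrand$ symbolically lifts $\birrand$) are exactly what is needed for the event equality $\miximlbirevent=_{\sbirvaluation'}\miximlsbirevent$ to go through. The inductive case of the theorem then follows, completing the proof.
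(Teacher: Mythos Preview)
Your proposal is correct and matches the paper's own proof: an induction on the length of the \imlbir{} trace, with \autoref{lem:bir:stateeventeq:step0} handling the base case and \autoref{lem:bir:stateeventeq} handling the inductive step, while the interpretation $\sbirvaluation$ is monotonically extended along the way. The additional detail you give about threading the RNG bound and about the case analysis inside \autoref{lem:bir:stateeventeq} is accurate and consistent with how the paper discharges that lemma in the appendix.
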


\begin{proof}
	\autoref{thm:bir:traceeq} shows that for all \imlbir{} traces, there is an equivalent \imlsbir{} trace through a properly chosen interpretation $\sbirvaluation$.
	We prove \autoref{thm:bir:traceeq} by induction on the length of the traces. 
	\begin{itemize}
		\item \textbf{Base case}. The base case can be proved using Lem.~\ref{lem:bir:stateeventeq:step0}.
		\item \textbf{Inductive case}. The inductive step can be proved using Lem.~\ref{lem:bir:stateeventeq}.
	\end{itemize}
	Appendix~\ref{mixedbirtomixedsbirsoundness} presents the proof of lemmas~\ref{lem:bir:stateeventeq:step0} and~\ref{lem:bir:stateeventeq}.
\end{proof}

\autoref{thm:bir:traceeq}
shows that, for an appropriately chosen interpretation and random memory, 
symbolic and concrete executions of a $\birsymb$ program are behaviorally equivalent.
This holds in the mixed $ \imlsymb $-($ \sbirc{S} $)$ \birsymb $ semantics, i.e.,
when coupled with the same $ \imlsymb $ attacker and protocol partners.

\section{Security properties}\label{sec:secproperties}

From the simulation results between concrete $\birsymb$, symbolic $\birsymb$ and extracted $ \imlsymb $, we will now conclude our target result, which argues that probabilistic security results translate across these levels of abstraction. The security properties we consider, i.e., authentication and weak secrecy, are safety properties over event traces.
Specifically, we consider a security property $ \traceproperty $ as a polynomially decidable prefix-closed set of event traces.

\paragraph{Example.}
    For SSH, we show authentication
    between the events
    $ \mathit{Acpt_S}(PK_S, PK_C) $
    (in the server model derived from the TinySSH binary)
    and
    $ \mathit{Acpt_C}(PK_S, PK_C) $ (in the client model implemented based on the SSH specification)
    where $ PK_S $ and $ PK_C $ are the server's public key and the client's public key, respectively. 
\[
\begin{array}{l}
	\mathbf{Auth} =	\{ \eventtrace  \in \traceproperty \ | \ \forall i \in \naturalnum : \eventtrace[i] = \mathit{Accept_S}(PK_S, PK_C)\\ 
	\hspace{1mm}\begin{array}{l}	
		\;\;\;\;\;\;\;\;\;\;\implies \left( \exists j \in \naturalnum: j < i \ \land \ \eventtrace[j] = \mathit{Accept_C}(PK_S, PK_C)\!\right)
	\end{array}\!\}
\end{array}
\]

We quantify the probability of a protocol remaining secure by
considering the complementary probability: the sum of the
probabilities of each violation. To avoid double counting, we only sum
over the set of shortest violating prefixes, i.e.,
$\tracepropertyneg = \{ \eventtrace \notin \traceproperty | \forall
\eventtrace' . \text{$\eventtrace'$ is prefix of $\eventtrace$}\!\implies\! \eventtrace'\!\in\!\traceproperty \}$.
As security properties are prefix-closed, this captures the
probability of a violation.
The system we analyze consists of the protocol implementations in $\birsymb$. 
 Say $ {T^{\alpha}} $ denotes a set of event traces obtained from the respective set of execution traces $\mathcal{R}^{\alpha}\filter{\mathit{events}} $, $\mathit{pr}$ is a probability distribution function that computes the probability of an event trace and $\imlvals_{{n}}^{{k}}$ is a set of bit strings for generating $k$ random numbers of length $n$,  then:

 \begin{definition}[$\birsymb$ insecurity]\label{def:BIR-insec}
    For a $\birsymb$ program $\birprog$,
    an $ \imlsymb $ process $\imlprog$,
    a security parameter $\secparam\in\naturalnum$,
    and 
    $k\in \naturalnum$ the size of $\birsymb$'s random memory,
    the 
    insecurity of $\mixedbirprog$ w.r.t. $\traceproperty$ is:
    $
	\insecurity{\mixedbirprog,n,k}{\traceproperty} = 
        2^{-n \cdot k} \cdot \sum_{\substack{
         \randomvals_{\imlc{k}} \in \imlvals_{{n}}^{{k}} \\
\mixedbirtrace \in \mixedbirtraces_{ \randomvals_{\imlc{k}}} \cap  \tracepropertyneg }}
    \birpr(\mixedbirtrace)
    $
    where $\mixedbirtraces_{ \randomvals_{\imlc{k}}}
    =  \mixedbirtraces(\mixedbirprog, \birenv_{\birc{0}}[ \randommem \mapsto  \randomvals_{\imlc{k}}  , \randommemidx \mapsto 0])$.
\end{definition}

After translating $\birprog$ into  $\sbirtoiml{\birprog}$, we define insecurity in terms of $ \imlsymb $'s probabilistic semantics.

\begin{definition}[$ \imlsymb $ insecurity]\label{def:IML-insec}
    The insecurity of an $ \imlsymb $ process $\imlprog$ w.r.t a
    trace property $\traceproperty$
    and a security parameter $n$ is 
    $
    \insecurity{\imlprog, n}{ \traceproperty} = 
    \sum_{\imltrace \in \imltraces(\imlprog,n)\cap \tracepropertyneg } \imlpr(\imltrace)
    $
    where 
$\imlpr(
\imltrans{}{p_1}{\imlstate{0}}{\imlstate{1}}
\cdots
\imltrans{}{p_n}{\imlstate{n-1}}{\imlstate{n}}
) = \prod\limits_{1 \leq i \leq n} \imlc{p_i} $. 
\end{definition} 

Note that definitions \ref{def:BIR-insec} and \ref{def:IML-insec}
coincide on \imlbir{} processes 
that do not contain the
$\imlc{run}$-construct, as in this case, the \textsc{RNG} rule (like
any other $\birsymb$ rule) can never be applied and thus $k$ be chosen to be
$0$. This applies to the \imlbir{} processes resulting from our
translation.
\autoref{thm:soundness} shows the translation is sound. Note that $\imlprog$ contains $\birsymb$ programs (via the $\imlc{run}$ construct), but also $ \imlsymb $ processes that represent communication partners and the network attacker.
\begin{restatable}[Translation preserves attacks]{theorem}{thmsoundness}\label{thm:soundness}
    Given a $\birsymb$ program $\birprog$,
    an $ \imlsymb $ process $\imlprog$,
    a security parameter $n \in \naturalnum$,
    a trace property $ \traceproperty$
    and
    an upper bound $k\in \naturalnum$
    on the number of \textsc{RNG} steps in $\mixedbirtraces(\mixedbirprog, \birenv_{\birc{0}}[ \randommem \mapsto  \randomvals_{\imlc{k}}  , \randommemidx \mapsto 0])$, 
    we get that
    \begin{center}
    $
	\insecurity{\mixedbirprog,n,k}{\traceproperty} \le
    \insecurity{\imltransprog, n}{ \traceproperty}
    $.
        \end{center}
\end{restatable}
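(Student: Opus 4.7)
The plan is to push every violating concrete BIR trace through the two trace-inclusion theorems already established, argue that $\tracepropertyneg$ is preserved along the way, and then reconcile the two different sources of randomness on the left- and right-hand sides.

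First I would unfold the definition of $\insecurity{\mixedbirprog,n,k}{\traceproperty}$ and, for each pair $(\randomvals_{\imlc{k}}, \mixedbirtrace)$ appearing in the sum, apply \autoref{thm:bir:traceeq} to obtain a mixed symbolic execution $\mixedsbirexecution$ and an interpretation $\sbirvaluation$ with $\simreltraces{\mixedbirexecution}{\sbirvaluation}{k}{\mixedsbirexecution}$. Then, by \autoref{thm:iml:traceeq}, I would lift $\mixedsbirexecution$ to an $\imlsymb$ trace $\imlexecution \in \imltraces(\imltransprog,n)$ with an extended interpretation $\sbirvaluation'\supseteq \sbirvaluation$ such that $\simreltratraces{\mixedsbirexecution}{\sbirvaluation'}{\sbirtoiml{.}}{k}{\imlexecution}$. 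Composing the two simulations yields, for every event of $\mixedbirtrace$, an $\sbirvaluation'$-equivalent event on $\imltrace$; in particular the event trace of $\imltrace$ is the $\sbirvaluation'$-interpretation of the event trace of $\mixedbirtrace$, so they coincide as sequences of bitstring events. Because $\traceproperty$ (and therefore $\tracepropertyneg$) is a property of event traces, this immediately transports violations: if $\mathit{tr}(\mixedbirtrace) \in \tracepropertyneg$ then $\mathit{tr}(\imltrace) \in \tracepropertyneg$ as well.

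The next step is to match up probabilities. Here I would invoke the auxiliary \autoref{lem:thm4lems} (the bundled measure-preservation lemma referenced in \autoref{fig:overview}). Concretely, I would define the map $\Phi(\randomvals_{\imlc{k}}, \mixedbirtrace) = \imltrace$ constructed above and partition the domain by the image trace. For a fixed $\imltrace$ using $r \le k$ RNG steps, I would show that the preimage consists of exactly those random tapes whose first $r \cdot n$ bits match the freshness events $\imlfreshev{\cdot}$ along $\imltrace$, with the remaining $(k-r)\cdot n$ bits arbitrary; the concrete trace $\mixedbirtrace$ is then uniquely determined because \imlbir{} is deterministic modulo the tape and the $\imlsymb$ attacker schedule, which is already fixed by $\imltrace$. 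This gives $\sum_{(\randomvals_{\imlc{k}},\mixedbirtrace) \in \Phi^{-1}(\imltrace)} 2^{-nk}\cdot \mathit{pr}(\mixedbirtrace) = 2^{n(k-r)} \cdot 2^{-nk} = 2^{-nr} = \imlpr(\imltrace)$, since every non-RNG \imlsymb{} step carries probability $1$ and each RNG step contributes the factor $2^{-n}$ (cf. \autoref{fig:imltranssemantics}).

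Combining the two pieces, I would rewrite the LHS by summing first over image traces $\imltrace$ and then over preimages, bound the inner sum by $\imlpr(\imltrace)$ using the identity above, and conclude $\insecurity{\mixedbirprog,n,k}{\traceproperty}\le \sum_{\imltrace \in \imltraces(\imltransprog,n)\cap \tracepropertyneg}\imlpr(\imltrace) = \insecurity{\imltransprog,n}{\traceproperty}$. The inequality (rather than equality) comes from the fact that $\Phi$ need not be surjective: the \imlsymb{} process $\imlprog$ may admit traces whose actions are not realisable by $\mixedbirprog$ at all (e.g.\ branches that the attacker takes against purely \imlsymb{} partners, or \imlsymb{} traces exceeding the $k$-bounded randomness of the concrete side). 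The main obstacle is precisely this probability-matching step: one must argue carefully that distinct concrete traces $\mixedbirtrace$ give distinct $(\randomvals_{\imlc{k}}, \mixedbirtrace)$ pairs, that the attacker's nondeterminism is entirely accounted for by the \imlsymb{} side (so nothing is double-counted), and that the uniform measure on random tapes really does factor as $2^{-nr}$ through the RNG discipline defined in \autoref{sec:cryptoawarebir}; everything else is a routine rearrangement of sums and an application of the two trace-inclusion theorems.
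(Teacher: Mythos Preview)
Your approach is essentially the paper's: compose the two trace-inclusion results and then reconcile the probability mass contributed by the $\birsymb$ random tape against the $\imlc{new}$-steps on the $\imlsymb$ side. The paper, however, runs the argument in the opposite direction---starting from $\insecurity{\imltransprog,n}{\traceproperty}$ and peeling off injections down to $\insecurity{\mixedbirprog,n,k}{\traceproperty}$---and packages the probability bookkeeping into four explicit sub-lemmas (Lemmas~\ref{lem:same-random-steps}, \ref{lem:extra-randomness}, \ref{lem:traceinclusion}, \ref{lem:traceinclusion-bir}) rather than a single fibre-counting step. Its key device is the elementary fact that for injective $\zeta$ one has $\sum_{a} f(\zeta(a))\le \sum_b f(b)$; this sidesteps having to characterise the preimage $\Phi^{-1}(\imltrace)$ exactly, which in your version requires the additional (true but not completely trivial) claim that all unused-randomness extensions of a given tape land on the \emph{same} $\imltrace$ under the choice function $\Phi$.

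One concrete slip: your displayed identity $\sum_{\Phi^{-1}(\imltrace)} 2^{-nk}\cdot\mathit{pr}(\mixedbirtrace) = 2^{n(k-r)}\cdot 2^{-nk} = 2^{-nr} = \imlpr(\imltrace)$ silently assumes $\mathit{pr}(\mixedbirtrace)=1$. That fails whenever the $\imlsymb$ attacker or an $\imlsymb$ partner executes $\imlc{new}$ inside the mixed run; those \textsc{INew} steps contribute factors of $2^{-n}$ to $\mathit{pr}(\mixedbirtrace)$ \emph{and} reappear unchanged in $\imlpr(\imltrace)$. The correct identity is $2^{-nr}\cdot\mathit{pr}(\mixedbirtrace) = \imlpr(\imltrace)$---this is precisely the content of Lemma~\ref{lem:same-random-steps}---and carrying the factor through repairs the computation without altering your conclusion.
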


    Via~\cite[Thm.~4.3,Thm.~5.2]{aizatulin2015verifying} we obtain a bound for $ \insecurity{\imltransprog, n}{ \traceproperty}$ from either of the backends, ProVerif or CryptoVerif.
	In cryptography, probability bounds are expressed as asymptotic functions in the security parameter. 
	CryptoVerif provides a symbolic expression of such a probability bound and, furthermore, proves that the bound is negligible, i.e., it decreases faster than the inverse of any polynomial. 
	On the other hand, ProVerif only confirms the existence of a negligible bound. 
	In both cases, the existence of this negligible upper bound ensures $\insecurity{\mixedbirprog,n,k}{\traceproperty}$ is negligible.
 
We  present the proof of~\autoref{thm:soundness} in Appendix~\ref{sec:properties}. 
	Based on definitions~\ref{def:BIR-insec} and~\ref{def:IML-insec}, we calculate the probability distribution in both $ \imlsymb $ and \imlbir{}. 
	While the probability of all transitions except for random number generation is 1, we need to demonstrate other requirements, such as extra randomness, injective event trace inclusion, etc.
	To this end, we present lemmas in~\autoref{lem:thm4lems} that show these requirements, which is necessary to prove~\autoref{thm:soundness}.

\begin{table*}
\centering
\resizebox{2\columnwidth}{!}{
\begin{tabular}{ll|c|c|c|c|c|c|c|ccc|c|c}
\multicolumn{2}{c|}{\multirow{2}{*}{Protocol}} & \multirow{2}{*}{\begin{tabular}[c]{@{}c@{}}ARM\\ Loc\end{tabular}} & \multirow{2}{*}{\begin{tabular}[c]{@{}c@{}}Verified\\  Code Size\end{tabular}} & \multirow{2}{*}{\begin{tabular}[c]{@{}c@{}}Feasible\\ Path\end{tabular}} & \multirow{2}{*}{\begin{tabular}[c]{@{}c@{}}Infeasible\\ Path\end{tabular}} & \multirow{2}{*}{\begin{tabular}[c]{@{}c@{}}$\imlsymb$\\ Loc\end{tabular}} & \multirow{2}{*}{\begin{tabular}[c]{@{}c@{}}CryptoVerif\\ (CV) Loc\end{tabular}} & \multirow{2}{*}{\begin{tabular}[c]{@{}c@{}}ProVerif\\ (PV) Loc\end{tabular}} & \multicolumn{3}{c|}{Time (Second)} & \multirow{2}{*}{Verified in} & \multirow{2}{*}{Primitives} \\
\multicolumn{2}{c|}{} &  &  &  &  &  &  &  & \multicolumn{1}{c|}{$\imlsymb$} & \multicolumn{1}{c|}{CV} & PV &  &  \\ \hline
\multicolumn{2}{l|}{RPC} & 1.8K & 0.659K & 8 & 178 & 23 & 236 & 102 & \multicolumn{1}{c|}{10} & \multicolumn{1}{c|}{0.035} & 0.012 & CV \& PV & UF-CMA MAC \\
\multicolumn{2}{l|}{RPC-enc} & 53K & 0.294K & 28 & 348 & 53 & 313 & 118 & \multicolumn{1}{c|}{9} & \multicolumn{1}{c|}{0.073} & 0.047 & CV \& PV & IND-CPA INT-CTXT AE \\
\multicolumn{2}{l|}{CSur} & 0.7K & 0.382K & 11 & 237 & 29 & 277 & 177 & \multicolumn{1}{c|}{6} & \multicolumn{1}{c|}{0.656} & 0.035 & flaw & IND-CCA2 PKE \\
\multicolumn{2}{l|}{NSL} & 2.8K & 0.595K & 23 & 455 & 56 & 296 & 204 & \multicolumn{1}{c|}{35} & \multicolumn{1}{c|}{2.740} & 0.052 & flaw, verified & IND-CCA2 PKE \\
\multicolumn{2}{l|}{Simple MAC} & 1.5K & 0.294K & 13 & 149 & 29 & 207 & 101 & \multicolumn{1}{c|}{8} & \multicolumn{1}{c|}{0.047} & 0.033 & CV \& PV & UF-CMA MAC \\
\multicolumn{2}{l|}{Simple XOR} & 2.4K & 0.100K & 2 & 50 & 7 & 141 & -- & \multicolumn{1}{c|}{4} & \multicolumn{1}{c|}{0.40} & -- & CV & XOR \\
\multicolumn{2}{l|}{TinySSH} & 18K & 0.476K & 136 & 1079 & 87 & 286 & 190 & \multicolumn{1}{c|}{55} & \multicolumn{1}{c|}{0.077} & 0.079 & CV \& PV & CRHF-CDH$^{\bf \dagger}$ \& UF-CMA SIGN \\
\multicolumn{1}{c|}{\multirow{2}{*}{WG}} & Initiator & \multirow{2}{*}{27K} & \multirow{2}{*}{1.323K} & 68 & 1482 & 181 & \multirow{2}{*}{--} & \multirow{2}{*}{222} & \multicolumn{1}{c|}{52} & \multicolumn{1}{c|}{\multirow{2}{*}{--}} & \multirow{2}{*}{59.646} & \multirow{2}{*}{PV} & \multirow{2}{*}{DH-X25519$^{\bf \ddagger}$ \& ROM-hash$^{\star}$} \\
\multicolumn{1}{c|}{} & Responder &  &  & 153 & 1389 & 464 &  &  & \multicolumn{1}{c|}{47} & \multicolumn{1}{c|}{} &  &  & 
\end{tabular}%
}
\caption{Case studies. ARM assembly includes crypto code and the code that is used in preprocessing, e.g., to compute the control flow of the program required in loop summarization. Primitives are standard, except Collision-resistant hash based on computational Diffie-Hellman ($\dagger$), Curve25519~\cite{langley2016elliptic} ($ \ddagger $) and hash function in the Random Oracle Model~\cite{bellare1993random} ($ \star $). Also, WG = WireGuard.}
\label{fig:testcases}
\vspace{-1em}
\end{table*}

\section{Evaluation}
\label{sec:evaluation}
We have implemented \framework{} on the HOL4 theorem prover~\cite{hol4} using its metalanguage SML. 
\framework{} relies on HolBA's semantics-preserving transpiler and symbolic execution~\cite{DBLP:journals/scp/LindnerGM19,proofproducingsymbexecution}. We significantly extended the HolBA vanilla symbolic execution to handle crypto primitives, communication with the attacker, indirect jumps, and loops, which are essential to verify the security of protocols. 
We also adapted the \textsc{Csec-modex}'s pipeline  to process \framework{}-generated $ \imlsymb $. \autoref{fig:testcases} shows the list of protocol implementations that we used in our evaluation.

\paragraph{\bf Simple case studies.}
Apart from the XOR case study discussed in~\autoref{sec:diffimlmodels}, we also verified other case studies of~\cite{aizatulinComputationalVerificationProtocol2012,aizatulinExtractingVerifyingCryptographic2011} to evaluate \framework{}.
The only exception to~\cite{aizatulinComputationalVerificationProtocol2012,aizatulinExtractingVerifyingCryptographic2011} is \emph{smart metering protocol} which is not open source.
For other cases, we obtained the same result. Additionally, in contrast to~\cite{aizatulinExtractingVerifyingCryptographic2011}, which could not handle CSur, we successfully verified this case study.

 RPC implements the MAC-based remote procedure call protocol~\cite{bengtson2011refinement}. We verified the client request and the server response authenticity under the MAC unforgeability assumption against chosen-message attacks with symbolic and computational guarantees. RPC-enc is an implementation of the RPC protocol that uses authenticated encryption. We also verified the secrecy of the payloads  (which is not protected by the MAC-based RPC) with an assumption that authenticated encryption is indistinguishable against the chosen-plaintext attack and provides ciphertext integrity.
CSur is the Needham-Schroeder public-key authentication protocol~\cite{needhamUsingEncryptionAuthentication1978}. We verified the secrecy and authentication properties for the CSur binary. Our analysis confirmed that CSur is vulnerable to attack in~\cite{loweAttackNeedhamSchroederPublickey1995} and leaks protocol parties’ nonces.
Similar to~\cite{aizatulinComputationalVerificationProtocol2012}, we also removed the assumption (i.e., all cryptographic material plus nonce are tagged) used in~\cite{aizatulinExtractingVerifyingCryptographic2011} for the Needham-Schroeder-Lowe (NSL) case study to obtain the computational soundness result. We confirmed the flaw in the protocol discovered in~\cite{aizatulinComputationalVerificationProtocol2012}: if the nonce of the second party is not tagged and is sent separately, it can be (mis)used as the first protocol message.

Simple MAC implements the first half of the RPC protocol in which a single payload is concatenated with its MAC~\cite{bengtson2011refinement}. We verified the payload authenticity under the unforgeability of MAC against the chosen-message attack assumption. 
For simple XOR case study, we verified the secrecy of the payload with CryptoVerif.
We did not attempt verification with ProVerif, as the analysis of theories with XOR requires extra effort~\cite{kusters2008reducing}, while CryptoVerif's attacker model is strictly stronger.

\paragraph{\bf Verification of TinySSH and WireGuard.} We further evaluated \framework{} by verifying TinySSH and WireGuard protocols.
TinySSH is a minimalistic SSH server that implements a subset of SSHv2
features and ships with its own crypto library.
 To formulate
authentication properties, which ought to hold for any communication
partner to the TinySSH server that
conforms to the SSH protocol specification, we 
modeled the client side of the SSH protocol in $ \imlsymb $; agents at the other end of the communication line are manually developed.
We manually specified the cryptographic assumptions about the primitives used by the TinySSH implementation in CryptoVerif and ProVerif templates.
We verified mutual authentication with ProVerif and CryptoVerif.

WireGuard implements virtual private networks akin to IPSec and OpenVPN. It is quite recent and was incorporated into the Linux kernel (stable) in March 2020. We have automatically extracted, \emph{for the first time}, the WireGuard's ProVerif model from its linux implementation binary---all other existing models are hand-written and derived from the specification~\cite{donenfeld2017formal,lipp2019mechanised,kobeissi2019noise}.
Existing WireGuard's symbolic models often utilize pattern matching to verify authentication.
This involves constructing the entire message from the initial stage and subsequently comparing it to the received message from the network, which requires further justification.
In contrast, our extracted model from the implementation closely represents the actual behavior of WireGuard, as it deconstructs the network message using ChaCha20Poly1305 Decryption for the purpose of authenticating ChaCha20Poly1305 Aauthenticated Encryption with Associated Data (AEAD).
As a result, our model obviates the need for additional caution when verifying the authentication property. 
We model the handshake and first transport message, after which the key exchange is concluded, and prove the protocol participants mutually agree on the resulting keys.

	TinySSH and WireGuard are case studies with indirect jumps in their binary, and TinySSH is the only one that features multiple sessions. The other case studies from~\cite{aizatulinComputationalVerificationProtocol2012,aizatulinExtractingVerifyingCryptographic2011} do not contain loops as~\cite{aizatulinComputationalVerificationProtocol2012,aizatulinExtractingVerifyingCryptographic2011} did not support them. 
	We handle the loops  inside the implementation of TinySSH using the summarization technique~\autoref{sec:cryptoawaresbir}, which is automated by \framework{}---these loops handle, e.g., reading key files of variable size.
	TinySSH spawns a new server for each incoming TCP connection by using \texttt{inetd}, which we correctly cover in our template files. 
	We hence handle multiple sessions, including potential replay attacks, correctly.

\section{Concluding Remarks}
We introduced \framework{} to analyze the binary of cryptographic protocols. \framework{} enables sound verification of authentication and weak secrecy for protocols' machine code using the model extraction technique.
We have automated extracting models from the binary of protocols and formally proved that the extracted $\imlsymb$ model preserves the protocol's behavior at its machine-code level. Additionally, we showed that the verified properties for the $\imlsymb$ model could be transferred back up to a mixed execution semantics, where the $\birsymb$ translation of the protocol
can be coupled and communicate with an $\imlsymb$ attacker.

The problem of deciding secrecy is generally undecidable~\cite{mitchell1999undecidability} even in dedicated protocol specification languages whose semantics are designed for verification. Hence the verification tool at the backend must necessarily sacrifice automation or completeness. 
In practice, ProVerif sacrifices completeness but verifies many typical protocols.
It is as easy to construct programs that are not recognized as secure by ProVerif as it is to write programs that purposefully break our abstractions in symbolic execution (e.g., by using a randomly generated key to guide the control flow). 
Lacking a syntactic criterion of what protocol implementations should or should not do, we have evaluated the automation and completeness of \framework{} via case studies while ensuring soundness through proofs.

The current implementation of \framework{} has a few limitations. 
At the protocol verification level, our limits are currently those of the protocol verification backend. These are constantly improving and we are currently working on targeting Tamarin~\cite{meierTAMARINProverSymbolic2013} as well.
An example of our limitations at the code handling level is \textbf{infinite loops}.
The soundness of handling loops using the replication operator $\imlc{!^m P}$ can 
hold as long as enough randomness can be stored in the memory, which is finite. 
Hence, the number of RNG steps is limited by $ \birsymb $'s finite memory in combination with its deterministic semantics. 
Thus, $\birsymb$ needs to be extended with external non-determinism to handle infinite loops.
Moreover, similar to other related works, we trust the implementation of cryptographic primitives. This means if there is any bug in these primitives that violates the protocols' security, these violations will not be detected.
The exclusion of these primitives from our verification is because their verification requires a different methodology (e.g., weakest precondition propagation). This is different from the goal we set ourselves in the paper. Also, while the \framework{'s} approach is generic, at present, we can analyze only ARM and RISC-V binaries. To extend our analysis to other architecture, the only part that needs to be extended is the HolBA transpiler.
Finally, we have tried to reduce manual effort, yet, there still remains a necessity for the user to initialize and steer the verification process. Mainly, the user specifies (i) The code fragments they want to verify: the code-under-analysis to the lifter, and the function names that are trusted (libraries) or untrusted (attacker/network) to the symbolic execution engine; (ii) the symbolic model of cryptographic functions, and (iii) the security assumptions on the cryptographic primitives and security queries in the verifiers' template files.

In the future, we plan to mechanize our proofs in a proof assistant such as HOL4~\cite{hol4}. Moreover, to better cover top-level loops, it is necessary to handle non-monotonous global states, i.e., state changes that alter how a protocol reacts to future messages. 
This problem was recognized in protocol verification, for instance, in the more recent Tamarin verifier~\cite{meierTAMARINProverSymbolic2013} or ProVerif's global state extensions~\cite{chevalLittleMoreConversation2018}.
It would thus be promising to skip Aizatulin's toolchain altogether and target one of these tools.
Tamarin's multiset-rewrite calculus is well-suited for modeling state machines. Moreover, one could essentially reuse the state-access axioms from SAPIC~\cite{kremerAutomatedAnalysisSecurity} to handle persistent data stored in the heap.

\section*{Acknowledgments}
This work was supported in part by a gift from Intel; and by the German Federal Ministry of Education and Research (BMBF) (FKZ: 13N1S0762). We thank the anonymous reviewers for their valuable feedback during the review process. We also thank Andreas Lindner for helping with the \framework{} implementation.

\bibliographystyle{ACM-Reference-Format}
\balance
\bibliography{refs}

\begin{full}
\newpage
\appendix
\section{Appendix}
\subsection{\imlsymb{} Semantics}
\label{app:imlsemantics}
\autoref{fig:imlsemantics} presents the $ \imlsymb $ semantics.
The inference rules for random number generation and sending a message on the channel $\imlchannel$ are explained in~\autoref{fig:imltranssemantics}.
The construct $\imlc{event}(\imlvar{d_1},\dots,\imlvar{d_m})$ is used to raise an event $ \imlev{\imlvar{b_1}, \dots , \imlvar{b_m}} $ during the execution where $ \imlevsym $ is an event symbol and $ \imlvar{b_1}, \dots , \imlvar{b_m} $ are bitstrings stored for expressions $ \imlvar{d_1},\dots,\imlvar{d_m} $ in the $ \imlsymb $ environment $ \imlenv $.
The $ \imlc{if} \ \imlvar{e} \ \imlc{then} \ \imlprocessp \ \imlc{else} \ \imlprocessp' $ construct reduces to $ \imlprocessp $ if the expression $ \imlvar{e} $ evaluates to \emph{true} with respect to the $ \imlsymb $ environment $ \imlenv $, otherwise reduces to $ \imlprocessp' $.
The construct $ \imlc{assume} \ \imlvar{e};\imlprocessp $ reduces to $ \imlprocessp $ only in the case that the expression $ \imlvar{e} $ evaluates to \emph{true} with respect to the $ \imlsymb $ environment $ \imlenv $.
To execute the construct $ \imlc{let} \ \imlvar{x = e} \ \imlc{in} \ \imlprocessp $, the bitstring $ \imlvar{b} $ stored for the expression $ \imlvar{e} $ in the $ \imlsymb $ environment $ \imlenv $ is fetched. Subsequently, the value of the variable $ \imlvar{x} $ is updated with the bitstring $ \imlvar{b} $ in the environment $ \imlenv $ and the construct reduces to $ \imlprocessp $.

\begin{figure}[t]
	\adjustbox{varwidth=\linewidth,scale=0.88}{%
			\centering
\begin{prooftree}
		 		\hypo{ \begin{matrix}
		 				\imleval{\imlvar{e}}_{\imlenv} = \text{true}
		 		\end{matrix}}
		 		\infer1[]{ \imltrans{}{1}{(\imlenv , \imlc{assume} \ \imlvar{e};\imlprocessp), \imlmultisetprocs}{(\imlenv , \imlprocessp), \imlmultisetprocs} }
		 	\end{prooftree}
		 	 \\[10pt] 
		 	\begin{prooftree}
		 		\hypo{ \begin{matrix}
		 				\imleval{\imlvar{e}}_{\imlenv} = \imlvar{b} \in \{ \imlvals \cup \{\bot\} \}
		 		\end{matrix}}
		 		\infer1[]{ \imltrans{}{1}{(\imlenv ,\imlc{let} \ \imlvar{x = e} \ \imlc{in} \ \imlprocessp), \imlmultisetprocs}{(\imlenv[\imlvar{x} \mapsto \imlvar{b}] , \imlprocessp), \imlmultisetprocs} }
		 	\end{prooftree}
		 	\\[10pt] 
		 		\begin{prooftree}
		 		\hypo{ \begin{matrix}
		 				\imleval{\imlvar{e}}_{\imlenv} = \text{true}
		 		\end{matrix}}
		 		\infer1[]{ \imltrans{}{1}{(\imlenv , \imlc{if} \ \imlvar{e} \ \imlc{then} \ \imlprocessp \ \imlc{else} \ \imlprocessp'), \imlmultisetprocs}{(\imlenv , \imlprocessp), \imlmultisetprocs} }
		 	\end{prooftree}
		 	\\[10pt] 
		 	\begin{prooftree}
		 		\hypo{ \begin{matrix}
		 				\imleval{\imlvar{e}}_{\imlenv} = \text{false}
		 		\end{matrix}}
		 		\infer1[]{ \imltrans{}{1}{(\imlenv ,\imlc{if} \ \imlvar{e} \ \imlc{then} \ \imlprocessp \ \imlc{else} \ \imlprocessp'), \imlmultisetprocs}{(\imlenv , \imlprocessp'), \imlmultisetprocs} }
		 	\end{prooftree}
\\[10pt] 
\begin{prooftree}
	\hypo{\type = \texttt{fixed}_n\ \text{for some}\ n \in \naturalnum}
	\hypo{|\imlvar{b}| = n}
	\infer2[]{ \imltrans{\imlfreshev{\imlvar{b}}}{\probdist{n}}{(\imlenv,  \imlc{new} \ \imlvar{x} : \type; \imlprocessp), \imlmultisetprocs}{ (\imlenv[\imlvar{x} \mapsto \imlvar{b}],  \imlprocessp), \imlmultisetprocs}}
\end{prooftree}
\\[10pt] 
 \begin{prooftree}
 		\hypo{\forall j \leq m : \imleval{\imlvar{d_j}}_{\imlenv} = \imlvar{b_j} \neq \bot}
 		\infer1[]{ \imltrans{\imlev{\imlvar{b_1}, \dots , \imlvar{b_m}}}{1}{(\imlenv,\imlc{event}(\imlvar{d_1},\dots,\imlvar{d_m}); \imlprocessp), \imlmultisetprocs}{ (\imlenv, \imlprocessp), \imlmultisetprocs}}
 \end{prooftree}
\\[10pt] 
\begin{prooftree}
	\hypo{ \begin{matrix}
			\imleval{\imlvar{e}}_{\imlenv} = \imlvar{b} \neq \imlc{\bot}  \quad	\imlvar{b}' = \mathit{truncate}(\imlvar{b}, \mathit{maxlen}(\imlchannel)) \\[5pt]
			\forall j \leq m : \imleval{\imlvar{e_j}}_{\imlenv} = \imlvar{b_j} \neq \imlc{\bot}  \quad \imlmultisetprocs' =  \mathit{reduce}(\{ (\imlenv, \imlprocessq) \}) \\[5pt]
			\exists ! (\nextimlenv, \imlprocessq') \in \imlmultisetprocs	 : \imlprocessq' = \imlc{in}(\imlchannel[\imlvar{e{\rmcolor{'}}_1},\dots,\imlvar{e{\rmcolor{'}}_m}], \imlvar{x}');\imlprocessp'   \land   \forall j \leq m : \imleval{\imlvar{e_j}'}_{\nextimlenv} = \imlvar{b_j} \neq \imlc{\bot}
	\end{matrix}}			
	\infer1[]{ \imltrans{}{1}{(\imlenv, \imlc{out}(\imlchannel[\imlvar{e_1},\dots,\imlvar{e_m}], \imlvar{e});\imlprocessq), \imlmultisetprocs}{(\nextimlenv[\imlvar{x}' \mapsto \imlvar{b}'], \imlprocessp'), \imlmultisetprocs \uplus \imlmultisetprocs' \setminus \{ (\nextimlenv, \imlprocessq') \} }}
\end{prooftree}
	}	
	\caption{The semantics of $ \imlsymb $~\cite[p.~23]{aizatulin2015verifying}.}
	\label{fig:imlsemantics}
	\vspace{-1em}
\end{figure}

\begin{figure}[t]
	\begin{minipage}[b]{0.45\linewidth}
		\centering
		\adjustbox{varwidth=\linewidth,scale=0.9}{%
			\begin{tcolorbox}[
				titlerule=1pt, boxsep=0pt, colframe=gray, halign=left, valign=center,
				left=0pt, right=0pt, top=0pt, bottom=0pt,  lower separated=false,
				title={\hspace{0.7em}Simplified C Code}
				]
				\begin{lstlisting}[style=cstyle]
do { 
 k = receive();
 if (!k) assert(false); 
} while (!newkey(k));\end{lstlisting}
			\end{tcolorbox}
		}
		\caption{A simplified loop from the TinySSH binary.}
		\label{fig:loop-C}
	\end{minipage}
	\hfill
	\begin{minipage}[b]{0.5\linewidth}
		\adjustbox{varwidth=\linewidth,left,scale=0.75}{%
			\tikzset{every picture/.style={line width=0.75pt}} %
		\begin{tikzpicture}[draw, minimum size=5ex, x=0.75pt,y=0.75pt,yscale=-1,xscale=1]
			
			\node[circle,draw](a){$ a $} [grow = up]
			child{node[circle,draw] (c){$ c $}}
			child{
				node[circle,draw] (b){$ b $}
				child {
					node[circle,draw] (d){$ d $}
					child {
						node[circle,draw] (f){$ f $}
						child {node[rectangle,draw] (g){end}}
					}
					child {node[circle,draw] (e){$ e $}	
					}
				}
			};
			
			\node[rectangle,draw,above=0.5cm of a] (start) {start};
			\draw[->,>=angle 60,thick] (start) -- node[anchor=west, xshift=0.1cm] {$\var{k}$} (a);
			\draw[->,>=stealth, thick] (e.west) to [bend right=80,looseness=1.5] (a.west) ;
			\draw[->,>=angle 60,thick] (a) -- node[anchor=east] {$ ! \bnfc{newkey} (\var{k}) $} (b);
			\draw[->,>=angle 60,thick] (a) -- node[anchor=west] {$ \bnfc{newkey} (\var{k}) $} (c);
			\draw[->,>=angle 60,thick] (b) -- node[anchor=east] {$ \var{k} = \bnfc{receive}() $} (d);
			\draw[->,>=angle 60,thick] (d) -- node[anchor=east, xshift=-0.8cm] {$\var{k} \neq 0$} (f);
			\draw[->,>=angle 60,thick] (d) -- node[anchor=west, xshift=0.8cm] {$\var{k} = 0$} (e);
			\draw[->,>=angle 60,thick] (f) -- node[anchor=west] {$ \bnfc{assert}(false) $} (g);
		\end{tikzpicture}
	}
		\caption{Flowgraph of the example loop.}
		\label{fig:loop-tree}
	\end{minipage}
\end{figure}

\subsection{Loop Summarization Example}
\label{loopsummary}
Our symbolic execution engine handles \emph{bounded loops} using the summarization technique of Strej\v{c}ek~\cite{abstractingpc2012}.
We show by an example how this method is used in our symbolic execution to summarize loops.
\autoref{fig:loop-C} is (the simplified version of) a loop from the TinySSH's implementation, and \autoref{fig:loop-tree} indicates its flowgraph.
Our example loop consists of nodes $ \{ a, b, d,  e \} $, with $ a $ being the entry node of the loop, and a single path $ \looppath_{\sbirc{1}} = abdea $.
The program continues to execute from node $ c $ if a new key is received.

The loop’s effect can be described by an iterated symbolic environment $ \itrsbirenv{\vec {\sbirc{t}}}$ for $ \vec {\sbirc{t}} $ iterations of the loop.
The $ \vec {\sbirc{t}} $ is a vector of path counters, which in our example, consists of a single counter $ t_1 $ that is assigned to $ \looppath_{\sbirc{1}} $.
The only variable which is modified within the loop is $ \var{k} $. In our  example, the value of $\var{k}$ is independent of the number of loop iterations and the initial values of program variables,  and it can be described by a symbolic value $ \itrsbirenv{\vec {\sbirc{t}}}[\var{k}] = \symvar{k}_{_{\vec {\sbirc{t}}}}$.

The looping condition describes the necessary condition to keep looping along the acyclic paths inside a given loop. For our example, the looping condition, $\itrsbirpcond{\vec {\sbirc{t}}} = \sbirpcond_{\sbirc{1}}$, consists of a single formula $\sbirpcond_{\sbirc{1}}$ which describes the required condition to loop along the acyclic path $\looppath_{\sbirc{1}}$.
The formula $ \sbirpcond_{\sbirc{1}} $ is based on two assertions, namely, $ ! \bnfc{newkey} (\var{k}) $ and $ \var{k} \neq 0 $.
Let's say that in the symbolic environment, the stored values of $ \var{k} $ in $ (t'_1 - 1)$-st and $ t'_1$-st iteration of the path $ \looppath_{\sbirc{1}} $  are $ \symvar{k}_{\sbirc{t'_1 - 1}} $ and $ \symvar{k}_{\sbirc{t'_1}} $, respectively.
Then, to get $ t_1 $ iterations of $ \looppath_{\sbirc{1}} $, the formula $ \sbirpcond_{\sbirc{1}} $ says that for each $ t’_1 $ satisfying $ 0 < t'_1 \leq t_1 $, $ \symvar{k}_{\sbirc{t'_1 - 1}}$ is not a new key, $ \symvar{k}_{\sbirc{t'_1}}$ is received from the channel, and it is not equal to zero:

\begin{math}
	\begin{array}{l}
		\begin{array}{l}
			\itrsbirpcond{\vec {\sbirc{t}}} \equiv \sbirpcond_{\sbirc{1}}
		\end{array}	\\[5pt]
\begin{array}{l}
	\sbirpcond_{\sbirc{1}} \equiv \forall  t'_1 . \
	( 0 < t'_1 \leq t_1 \to \\ \hspace{1.3cm} ( \ ! \bnfc{newkey} (\symvar{k}_{\sbirc{t'_1 - 1}}) \land \symvar{k}_{\sbirc{t'_1}} = \bnfc{receive}() \ \land \ \symvar{k}_{\sbirc{t'_1}} \neq 0  ) ) 
\end{array}	\\[5pt]
\end{array}
\end{math}\\

The loop summary is characterized by $ \tuple{ \itrsbirenv{\vec {\sbirc{t}}} , \itrsbirpcond{\vec {\sbirc{t}}}} $ which we attach it to the symbolic state at the entry node $ a $. 
Afterward, we proceed to execute the program from the exit point $ c $.

\subsection{\imlbir{} Semantics}
\label{app:imlbirsemantics}
\autoref{fig:mixedbiriml} presents the $ \imlsymb $-$\birsymb$ mixed execution semantics.

\begin{figure*}
  \adjustbox{varwidth=\linewidth,scale=1}{%
   \[
	\begin{array}{c}
	\begin{prooftree}
				\hypo{ \begin{matrix}
				\birpc \in \lbls_{\birc{\mathcal{N}}} \uplus \entr_{\birc{\mathcal{C}}} \uplus \entr_{\birc{\mathcal{R}}} \uplus \entr_{\birc{\mathcal{E}}} 
		\end{matrix}}
		\hypo{ \begin{matrix}
				\birtrans{\birevent}{}{\tuple{ {\birenv}, \birpc}}{\tuple{ {\birenv}', \birpc'}} 
		\end{matrix}}
		\infer2[normal]{\birprog \vdash \imltrans{\birevent}{1}{\tuple{\birenv, \birpc}, \mixmultisetprocsimlbir}{\tuple{ {\birenv}', \birpc'}, {\mixmultisetprocsimlbir}}}
	\end{prooftree}
     
    \\[20pt]
    \begin{prooftree}
	   \hypo{ \begin{matrix}
			\forall j \leq m : \imleval{\imlvar{y_j}}_{\imlenv} = \imlvar{b_j} \neq \imlc\bot  \quad
			\forall j \leq m : (\birenv_{\birc{j+1}} , \var{a_j})= \update(\birenv_{\birc{j}}, \heap, \memsymb, \imlvar{b_j},128)\\
	\end{matrix}}
	   \infer1[run]{\birprog \vdash \imltrans{}{1}{(\imlenv,  \imlc{run}( \birpc, (\imlvar{y_1},\dots,\imlvar{y_m}))), {\mixmultisetprocsimlbir}}{\tuple{\birenv_{\birc{m+1}}[\reg{0} \mapsto \var{a_1},\dots,\reg{m-1} \mapsto \var{a_m}], \birpc}, {\mixmultisetprocsimlbir} }}
    \end{prooftree}
   
   \\[25pt]
   \begin{prooftree}
	\hypo{ \begin{matrix}
			\imleval{\imlvar{e}}_{\imlenv} = \imlvar{b} \neq \imlc\bot  \quad \imlvar{b}' = \mathit{truncate}(\imlvar{b}, \mathit{maxlen}(\imlchannel)) \quad \forall j \leq m : \imleval{\imlvar{e_j}}_{\imlenv} = \imlvar{b_j} \neq \imlc\bot  \quad {\mixmultisetprocsimlbir}' =  \mathit{reduce}(\{ (\imlenv,  \imlprocessq) \}) \\[5pt] 
			\exists ! (\tuple{\birenv, \birpc}, \imlchannel[{\var{e_1}}',\dots,{\var{e_m}}']) \in {\mixmultisetprocsimlbir}	 : \birpc \in \entr_{\birc{\attacker_r}} \ \land	\ \forall j \leq m : \birenv[{\var{e_j}}']\! =\! \imlvar{b_j} \neq \imlc\bot   \\[5pt] 
			(\birenv{'},\var{a})\!=\! \update(\birenv, \heap_{\birc{\mathcal{A}}}, \advmem, \imlvar{b}', 128) \quad
			\birpc'\! =\! \pcinc{\birenv, \birpc} \quad {\mixmultisetprocsimlbir}''\! =\!  \{ (\tuple{\birenv, \birpc}, \imlchannel[{\var{e_1}}',\dots,{\var{e_m}}']) \}
	\end{matrix}}
	   \infer1[$ \imlc{I}to\birc{B}$]{\birprog \vdash \imltrans{\Input{\imlvar{b}',({\var{e_1'}},\dots,{\var{e_m'}})}}{1}{(\imlenv,  \imlc{out}(\imlchannel[\imlvar{e_1},\dots,\imlvar{e_m}], \imlvar{e});\imlprocessq), {\mixmultisetprocsimlbir}}{(\tuple{ \birenv{'}[\reg{0} \mapsto \var{a}], \birpc'}, \imlchannel[{\var{e_m}}',\dots,{\var{e_m}}']), {\mixmultisetprocsimlbir} \uplus {\mixmultisetprocsimlbir}' \setminus {\mixmultisetprocsimlbir}'' }}
    \end{prooftree}

    \\[40pt]
    \begin{prooftree}
	   \hypo{ \begin{matrix}
			\birpc \in \entr_{\birc{\attacker_s}} \quad \birpc' = \pcinc{\birenv, \birpc} \quad  {\mixmultisetprocsimlbir}' =  \{ (\birenv, \birpc') \} \quad \var{a} = \birenv[\reg{0}]	\quad
	\imlvar{b} = \memload(\birenv,\var{a}) \quad \forall j \leq m : \birenv[\var{e_j}] = \imlvar{b_j} \neq \imlc\bot   \\[5pt] 
			\exists ! (\imlenv,  \imlprocessq)\! \in\! {\mixmultisetprocsimlbir}	 \!:\! \imlprocessq\! =\! \imlc{in}(\imlchannel[{\imlvar{e_1}}',\dots,{\imlvar{e_m}}'], \imlvar{x});\imlprocessp  \land \forall j \leq m : \imleval{{\imlvar{e_j}}'}_{\imlenv} = \imlvar{b_j} \neq \imlc\bot \quad \imlvar{b}' = \mathit{truncate}(\imlvar{b}, \mathit{maxlen}(\imlchannel))
	\end{matrix}}
	   \infer1[$ \birc{B}to\imlc{I}$]{\birprog \vdash \imltrans{\Output{\imlvar{b},(\var{e_1},\dots,\var{e_m})}}{1}{(\tuple{\birenv, \birpc}, \imlchannel[\var{e_1},\dots,\var{e_m}]), {\mixmultisetprocsimlbir}}{(\imlenv[\imlvar{x} \mapsto \imlvar{b}'],  \imlprocessp), {\mixmultisetprocsimlbir} \uplus {\mixmultisetprocsimlbir}' \setminus \{ (\imlenv,   \imlprocessq) \} }}
    \end{prooftree}
   \end{array}
\]
}
\caption{The mixed semantics of $\birsymb$ and $ \imlsymb $ shown by \imlbir{}.}
\label{fig:mixedbiriml}
\end{figure*}

\subsection{Soundness of Translation into  $ \imlsymb $ }
\label{sound-iml-trans}
\mixsbirstateeventeqstepzero*

\begin{proof}
	By construction of our symbolic tree we know that  for $ \birpc_{\birc{0}} $, we have $ \sbirtoiml{\tree} $. Thus, from  translation rule $ \sbirtoiml{\tree} = \imlprocessq^{\mathit{full}}$ in~\autoref{fig:sbirtoiml} we get that $ \imlstate{0}.\imlprocessp = \imlprocessq^{\mathit{full}} $. 
	Then, we can conclude that $ \simreltra{ \tuple{True, \sbirenv_{\sbirc{0}}[ \randommem \mapsto \randomsymvals_{\sbirc{k}} , \randommemidx \mapsto 0], \birpc_{\birc{0}}}}{\sbirvaluation}{\sbirtoiml{.}}{(\imlenv_{\imlc{0}}, \imlprocessq^{\mathit{full}})}$. 
\end{proof}

\mixsbirstateeventeqstepone*

Proof of lemma~\ref{lem:iml:stateeventeq} is done by a case split on the type of label sets in the $\birsymb$ program $ \birprog $.
\begin{proof}
	We prove the statement by a case split based on the type of the program counter $ \mixedexec{\mixediml}{\sbirstate{i}}.\birpc $. 
	\begin{itemize}
			\item $	\birpc \in \lbls_{\birc{\mathcal{E}}}$ \\
			Since $\simreltra{\mixedexec{\mixediml}{\sbirstate{i}}}{\sbirvaluation}{\sbirtoiml{.}}{\imlstate{i}}$,
			we know that $\imlstate{i}.\imlprocessp = \sbirtoiml{\tree[\sbirstate{i}.\birpc]}$.
			Based on the translation rules in~\autoref{fig:sbirtoiml}, we get that $ \imlstate{i}.\imlprocessp = \imlc{event}(\symvar{d_1},$  $\dots,\symvar{d_m});\sbirtoiml{\tree'} $. 
			Also, by construction of the symbolic execution tree we know that there exist $\tree_\sbirc{0}$ and $\tree_\sbirc{1}$ such that $\tree = \tree_\sbirc{0} \bnfconcat(\birpc, \symevent{\symvar{d_1},\dots,$  $\symvar{d_m}})\bnfconcat(\birpc', \nodeevent)\bnfconcat\tree_\sbirc{1}$.
			Therefore, based on the operational semantics of $ \imlsymb $ processes~\cite[p.~23]{aizatulin2015verifying} and translation rules in~\autoref{fig:sbirtoiml}, we get $ \imltrans{\imlev{\imlvar{b_1},\ldots,\imlvar{b_{m}}}}{1}{\imlstate{i}}{\imlstate{j}}$ and $\imlstate{j} = (\imlenv,  \sbirtoiml{\tree[\sbirstate{j}.\birpc']}), \imlmultisetprocs $.
				
			Based on the semantics of \imlsbir{},~\autoref{fig:mixedsymbiml}, we find $ \mixedexec{\mixediml}{\sbirstate{j}} = \tuple{\sbirpcond,\sbirenv, \birpc'}, \mixmultisetprocsimlsbir$ that is in the relation $\miximltrans{\miximlsbirevent}{1}{\sbirvaluation}{\mixedexec{\mixediml}{\sbirstate{i}}}{\mixedexec{\mixediml}{\sbirstate{j}}}$ such that $\miximlsbirevent = \symevent{\symvar{d_1},\dots,\symvar{d_m}} $.
			
			Finally, we choose $\sbirvaluation'$ by extending $\sbirvaluation$ according to the executed operation and such that $  \sbirvaluation'(\symvar{d_j})=\!\imlvar{b_j} $ for $ 0 < j \leq m $. 
			Therefore, we can conclude that $ \simreltra{\tuple{\sbirpcond,\sbirenv, \birpc'}, \mixmultisetprocsimlsbir}{\sbirvaluation'}{\sbirtoiml{.}}{(\imlenv,  \sbirtoiml{\tree[\sbirstate{i}.\birpc']}), \imlmultisetprocs}$ and  $\symevent{\symvar{d_1},\dots,\symvar{d_m}} =_{\sbirvaluation'} \imlev{\imlvar{b_1},\ldots,\imlvar{b_{m}}}$.\\
		
			\item $	\birpc \in \lbls_{\birc{\mathcal{C}}}$ \\
			Since $\simreltra{\mixedexec{\mixediml}{\sbirstate{i}}}{\sbirvaluation}{\sbirtoiml{.}}{\imlstate{i}}$,
			we know that $\imlstate{i}.\imlprocessp = \sbirtoiml{\tree[\sbirstate{i}.\birpc]}$.
			Based on the translation rules in~\autoref{fig:sbirtoiml}, we get that $ \imlstate{i}.\imlprocessp =\imlc{let}\ \imlvar{x} \ \imlc= \ \symvar{v}\ $  $\imlc{in}\ \sbirtoiml{\tree'} $. Also, by construction of the symbolic execution tree we know that there exist $\tree_\sbirc{0}$ and $\tree_\sbirc{1}$ such that $\tree = \tree_\sbirc{0}\bnfconcat(\birpc, \symcrypto{\symvar{v}})\bnfconcat(\birpc', \nodeevent)\bnfconcat\tree_\sbirc{1}$.
			
			Therefore, based on the operational semantics of $ \imlsymb $ processes~\cite[p.~23]{aizatulin2015verifying} and translation rules in~\autoref{fig:sbirtoiml}, we get that $ \imltrans{}{1}{\imlstate{i}}{\imlstate{j}}$ and $\imlstate{j} = (\imlenv[\imlvar{x} \mapsto \imlvar{b}],  \sbirtoiml{\tree[\sbirstate{j}.\birpc']}), \imlmultisetprocs $.
			Based on the semantics of \imlsbir{},~\autoref{fig:mixedsymbiml}, we find $ \mixedexec{\mixediml}{\sbirstate{j}} = \tuple{\sbirpcond, {\sbirenv}{''}, \birpc'}, $  $\mixmultisetprocsimlsbir$ that is in the transition relation $\miximltrans{\miximlsbirevent}{1}{\sbirvaluation}{\mixedexec{\mixediml}{\sbirstate{i}}}{\mixedexec{\mixediml}{\sbirstate{j}}}$ such that $\miximlsbirevent = \symcrypto{\symvar{v}}$ and $ \sbirenv{''} = \sbirenv{'}[\reg{0} \mapsto \var{a}] $ where $(\sbirenv{'},\var{a})=\update(\sbirenv, \heap_{\mathit{\cryptoOp}}, \memsymb_{\mathit{\cryptoOp}}, \symvar{v}, 128)$.
			
			Finally, we choose $\sbirvaluation'$ by extending $\sbirvaluation$ according to the executed operation and such that $  \sbirvaluation'(\symvar{v})=\!\imlvar{b}$.
			Therefore, we can conclude that $ \simreltra{\tuple{\sbirpcond, \sbirenv{'}[\reg{0} \mapsto \var{a}], \birpc'}, \mixmultisetprocsimlsbir}{\sbirvaluation'}{\sbirtoiml{.}}{(\imlenv[\imlvar{x} \mapsto \imlvar{b}],  \sbirtoiml{\tree[\sbirstate{i}.\birpc']}), \imlmultisetprocs}$. \\

			\item $	\birpc \in \lbls_{\birc{\mathcal{R}}}$ \\
			Since $\simreltra{\mixedexec{\mixediml}{\sbirstate{i}}}{\sbirvaluation}{\sbirtoiml{.}}{\imlstate{i}}$,
			we know that $\imlstate{i}.\imlprocessp = \sbirtoiml{\tree[\sbirstate{i}.\birpc]}$.
			Based on the translation rules in~\autoref{fig:sbirtoiml}, we get that $ \imlstate{i}.\imlprocessp = \imlc{new}\ \symvar{x};\ $  $ \sbirtoiml{\tree'} $. Also, by construction of the symbolic execution tree we know that there exist $\tree_\sbirc{0}$ and $\tree_\sbirc{1}$ such that $\tree = \tree_\sbirc{0}\bnfconcat(\birpc, \symfreshv{\symvar{x}})\bnfconcat(\birpc', \nodeevent)\bnfconcat\tree_\sbirc{1}$.
			
			Therefore, based on the operational semantics of $ \imlsymb $ processes~\cite[p.~23]{aizatulin2015verifying} and translation rules in~\autoref{fig:sbirtoiml}, we get that $ \imltrans{\imlfreshev{\imlvar{b}}}{\probdist{n}}{\imlstate{i}}{\imlstate{j}}$ and $\imlstate{j} = (\imlenv[\symvar{x} \mapsto \imlvar{b}],  \sbirtoiml{\tree[\sbirstate{j}.\birpc']}), \imlmultisetprocs $.
			Based on the semantics of \imlsbir{},~\autoref{fig:mixedsymbiml}, we find $ \mixedexec{\mixediml}{\sbirstate{j}} = \tuple{\sbirpcond, {\sbirenv}{''}, \birpc'}, $  $\mixmultisetprocsimlsbir$ that is in the relation $\miximltrans{\miximlsbirevent}{1}{\sbirvaluation}{\mixedexec{\mixediml}{\sbirstate{i}}}{\mixedexec{\mixediml}{\sbirstate{j}}}$ such that $\miximlsbirevent = \symfreshv{\symvar{x}} $ and $ \sbirenv{''} = {\sbirenv{'}}[\reg{0} \mapsto \var{a}; \randommemidx \mapsto \sbirenv[\randommemidx] + {l}]  $  where $ (\sbirenv{'},\var{a}) = \update(\sbirenv, \heap, \memsymb, \symvar{x},128) $ and $ \symvar{x} = \sbirrand(\sbirenv, n) $. 
			
			Finally, we choose $\sbirvaluation'$ by extending $\sbirvaluation$ according to the executed operation and such that $  \sbirvaluation'(\symvar{x})=\!\imlvar{b}$.
			Therefore, we can conclude that $ \simreltra{\tuple{\sbirpcond, \sbirenv{'}[\reg{0} \mapsto \var{a}; \randommemidx \mapsto \sbirenv[\randommemidx] + {l}], \birpc'}, \mixmultisetprocsimlsbir}{\sbirvaluation'}{\sbirtoiml{.}}{(\imlenv[\symvar{x} \mapsto \imlvar{b}],  \sbirtoiml{\tree[\sbirstate{i}.\birpc']}), \imlmultisetprocs}$ and \\
			 $\imlfreshev{\imlvar{b}} =_{\sbirvaluation'} \symfreshv{\symvar{x}}$. \\
			
			\item $	\birpc_\birc{1} \in \lbls_{\birc{\attacker_s}} \ \land \ \birpc_\birc{2} \in \lbls_{\birc{\attacker_r}} $ \\
			In this case, since $\birpc_\birc{1} \in \lbls_{\birc{\attacker_s}}$, we have an $ \imlsymb $ output process $\imlc{out}$ in the $ \imlsymb $ state $ \imlstate{i} $. 
			Based on IOut rule in the operational semantics of $ \imlsymb $ processes~\cite[p.~23]{aizatulin2015verifying}, we get that there is an $ \imlsymb $ input process $\imlc{in}$ in the multiset of executing input processes that is ready to receive input from the channel. Therefore, there exists a node in our symbolic execution tree such that its program counter is in the label set $ \lbls_{\birc{\attacker_r}} $.
			
			More formally, since $\simreltra{\mixedexec{\mixediml}{\sbirstate{i}}}{\sbirvaluation}{\sbirtoiml{.}}{\imlstate{i}}$,
			we know that $\imlstate{i}.\imlprocessp = \sbirtoiml{\tree[\sbirstate{i}.\birpc_\birc{1}]}$.
			Based on the translation rules in~\autoref{fig:sbirtoiml}, we get that $ \imlstate{i}.\imlprocessp = \imlc{out}(\imlchannel[\symvar{e_1},\dots,\symvar{e_m}], \symvar{e});$ $\sbirtoiml{\tree'}$. 
			Based on the operational semantics of $ \imlsymb $ processes~\cite[p.~23]{aizatulin2015verifying}, we get that $ \imltrans{}{1}{\imlstate{i}}{\imlstate{j}}$ and $\imlstate{j} = (\nextimlenv[\symvar{v} \mapsto \imlvar{b}'],  \imlprocessp'), \imlmultisetprocs \uplus \imlmultisetprocs' \setminus \imlmultisetprocs'' $ such that $\imlmultisetprocs'' = \{ (\nextimlenv,  \imlc{in}(\imlchannel[{\symvar{e_1}}',\dots,{\symvar{e_m}}'], \symvar{v});\imlprocessp') \} $.

			Therefore, there exist $\tree_\sbirc{0}$ and $\tree_\sbirc{1}$ such that $\tree' = \tree_\sbirc{0}\bnfconcat(\birpc_\birc{2},\syminput{\symvar{v},$  $({\symvar{e_1}}',\dots, {\symvar{e_m}}')} )\bnfconcat(\birpc', \nodeevent)\bnfconcat\tree_\sbirc{1}$ and it holds that $\imlc{in}(\imlchannel[{\symvar{e_1}}',\dots,$  ${\symvar{e_m}}'], \symvar{v});\imlprocessp' = \sbirtoiml{\syminput{\symvar{v},({\symvar{e_1}}',\dots,{\symvar{e_m}}')}}; \sbirtoiml{\nodeevent}$.
			
			Since we have an event node $(\birpc_\birc{2},\syminput{\symvar{v},({\symvar{e_1}}',\dots,{\symvar{e_m}}')} )$, we get that there exist an intermediate state $\mixedexec{\mixediml}{\sbirstate{l}} \in \mixmultisetprocsimlsbir $ such that $\mixedexec{\mixediml}{\sbirstate{l}} = \{(\tuple{\sbirpcond, {\sbirenv}', \birpc_\birc{2}}, \imlchannel[{\symvar{e_1}}'$  $,\dots,{\symvar{e_m}}'])\}$ and $ \birpc_\birc{2} \in \lbls_{\birc{\attacker_r}}$.
			
			Based on the semantics of \imlsbir{},~\autoref{fig:mixedsymbiml}, we find $ \mixedexec{\mixediml}{\sbirstate{j}} = (\tuple{\sbirpcond, {\sbirenv}{'''}, \birpc'}, \imlchannel[{\symvar{e_1}}',\dots,{\symvar{e_m}}']), \mixmultisetprocsimlsbir  \uplus {\mixmultisetprocsimlsbir}' \setminus {\mixmultisetprocsimlsbir}''$ that is in the transition relation $\miximltransmulti{\miximlsbirevent}{1}{\sbirvaluation}{\mixedexec{\mixediml}{\sbirstate{i}}}{\mixedexec{\mixediml}{\sbirstate{j}}}$ such that $\miximlsbirevent = \symoutput{\symvar{e}, (\symvar{e_1},\dots,\symvar{e_m})};\dots;\syminput{\symvar{v},({\symvar{e_1}}',\dots,{\symvar{e_m}}')}$ and $ \sbirenv{'''} = \sbirenv{''}$  $[\reg{0} \mapsto \var{a}] $ where $(\sbirenv{''},\var{a})=\update(\sbirenv{'}, \heap_{\birc{\mathcal{A}}}, \advmem, \symvar{v}, 128)$ and $ \birpc' = \pcinc{\sbirpcond, \sbirenv{'}, \birpc_\birc{2}}$
			
			Finally, we choose $\sbirvaluation'$ by extending $\sbirvaluation$ according to the executed operation and such that $  \sbirvaluation'(\symvar{v})=\!\imlvar{b}'$.
			Thus, we can conclude that $ \simreltra{(\tuple{\sbirpcond, {\sbirenv}''[\reg{0} \mapsto \var{a}], \birpc'}, \imlchannel[{\symvar{e_1}}',\dots,{\symvar{e_m}}']), \mixmultisetprocsimlsbir  \uplus {\mixmultisetprocsimlsbir}' \setminus {\mixmultisetprocsimlsbir}''}{\sbirvaluation'}{\sbirtoiml{.}}{(\nextimlenv[\symvar{v} \mapsto \imlvar{b}'],  \sbirtoiml{\tree[\sbirstate{i}.\birpc']}), \imlmultisetprocs \uplus \imlmultisetprocs' \setminus \imlmultisetprocs'' }$. \\
			
			\item $	\birpc \in \lbls_{\birc{\mathcal{N}}} \ \land \ \birpc \notin \lbls_{\birc{\mathcal{L}}}$ \\
			We show this case by a case split based on $ \birprog(\mixedexec{\mixediml}{\sbirstate{i}}.\birpc) $. 
			\begin{itemize}
				\item $ \birprog(\mixedexec{\mixediml}{\sbirstate{i}}.\birpc) \neq \branchingnode(\nodecond, \tree_\sbirc{1}, \tree_\sbirc{2})$\\
				Since $\simreltra{\mixedexec{\mixediml}{\sbirstate{i}}}{\sbirvaluation}{\sbirtoiml{.}}{\imlstate{i}}$,
				we know that $\imlstate{i}.\imlprocessp = \sbirtoiml{\tree[\sbirstate{i}.\birpc]}$.
				By construction of the symbolic execution tree we know that there exist $\tree_\sbirc{0}$ and ${\tree_\sbirc{1}}'$ such that $\tree = \tree_\sbirc{0}\bnfconcat(\birpc, \sbirc\tau)\bnfconcat(\birpc', \nodeevent)\bnfconcat{\tree_\sbirc{1}}'$.
				
				Therefore, based on the operational semantics of $ \imlsymb $ processes~\cite[p.~23]{aizatulin2015verifying} and translation rules in~\autoref{fig:sbirtoiml}, we get that $\imlstate{j} = ({\imlenv}',  \sbirtoiml{\tree[\sbirstate{j}.\birpc']}), \imlmultisetprocs $ and $ \imltrans{}{1}{\imlstate{i}}{\imlstate{j}}$.
				Based on the semantics of \imlsbir{},~\autoref{fig:mixedsymbiml}, we find $ \mixedexec{\mixediml}{\sbirstate{j}} = \tuple{ \sbirpcond', {\sbirenv}', \birpc'}, \mixmultisetprocsimlsbir$ that is in the relation $\miximltrans{}{1}{\sbirvaluation}{\mixedexec{\mixediml}{\sbirstate{i}}}{\mixedexec{\mixediml}{\sbirstate{j}}}$.

				Finally, we choose $\sbirvaluation'$ by extending $\sbirvaluation$ according to the executed operation and such that $\sbirvaluation'(\mixedexec{\mixediml}{\sbirstate{j}}.\sbirenv)= \imlstate{j}.\imlenv$.
				Therefore, we can conclude that $ \simreltra{\tuple{ \sbirpcond', {\sbirenv}', \birpc'}, \mixmultisetprocsimlsbir}{\sbirvaluation'}{\sbirtoiml{.}}{({\imlenv}',  $  $\sbirtoiml{\tree[\sbirstate{i}.\birpc']}), \imlmultisetprocs }$. \\
				
				\item $ \birprog(\mixedexec{\mixediml}{\sbirstate{i}}.\birpc) = \branchingnode(\nodecond, \tree_\sbirc{1}, \tree_\sbirc{2})$ case $ True $ (case $False$)\\
				Since $\simreltra{\mixedexec{\mixediml}{\sbirstate{i}}}{\sbirvaluation}{\sbirtoiml{.}}{\imlstate{i}}$,
				we know that $\imlstate{i}.\imlprocessp = \sbirtoiml{\tree[\sbirstate{i}.\birpc]}$.
				Based on the translation rules in~\autoref{fig:sbirtoiml}, we get that $ \imlstate{i}.\imlprocessp = \imlc{if} \ \sbirtoiml{\nodecond} \ \imlc{then} \ \sbirtoiml{\tree_\sbirc{1}} \ \imlc{else}\ \sbirtoiml{\tree_\sbirc{2}}$. 
				Also, by construction of the symbolic execution
				tree we know that there exist $\tree_\sbirc{0}$, ${\tree_\sbirc{1}}'$, and
				${\tree_\sbirc{2}}'$ s.t.
				$\tree = \tree_\sbirc{0}\bnfconcat
				(\birpc, \branchingnode(\nodecond, ((\birpc', -)\bnfconcat{\tree_\sbirc{1}}'),
				\tree_\sbirc{2}))$ 
				($\tree = \tree_\sbirc{0}\bnfconcat
				(\birpc, \branchingnode(\nodecond, \tree_\sbirc{1},
				((\birpc'',-)\bnfconcat{\tree_\sbirc{2}}')))$).
				
				Therefore, based on the operational semantics of $ \imlsymb $ processes~\cite[p.~23]{aizatulin2015verifying} and translation rules in~\autoref{fig:sbirtoiml}, we get that $ \imlstate{j} = (\imlenv,  \sbirtoiml{\tree[\sbirstate{j}.\birpc']}), \imlmultisetprocs$ 
				$ \left( \imlstate{j} = (\imlenv,  \sbirtoiml{\tree[\sbirstate{j}.\birpc'']}), \imlmultisetprocs \right)$ and $ \imltrans{}{1}{\imlstate{i}}{\imlstate{j}}$.
				
				Based on the semantics of \imlsbir{},~\autoref{fig:mixedsymbiml}, we find 
				$ \sbirstate{j} = \tuple{\sbirpcond, {\sbirenv}, \birpc'}$
				$ \left(\sbirstate{j} = \tuple{\sbirpcond, {\sbirenv}, \birpc''}\right)$ that is in the transition relation $\miximltrans{}{1}{\sbirvaluation}{\mixedexec{\mixediml}{\sbirstate{i}}}{\mixedexec{\mixediml}{\sbirstate{j}}}$.
				Therefore, we can conclude that $ \simreltra{\tuple{ \sbirpcond, {\sbirenv}, \birpc'}, \mixmultisetprocsimlsbir}{\sbirvaluation'}{\sbirtoiml{.}}{(\imlenv,  \sbirtoiml{\tree[\sbirstate{i}.\birpc']}), \imlmultisetprocs }\\
				\left( \simreltra{\tuple{ \sbirpcond, {\sbirenv}, \birpc''}, \mixmultisetprocsimlsbir}{\sbirvaluation'}{\sbirtoiml{.}}{(\imlenv,  \sbirtoiml{\tree[\sbirstate{i}.\birpc'']}), \imlmultisetprocs } \right)$. \\
			\end{itemize}

	\item $ \birprog(\mixedexec{\mixediml}{\sbirstate{i}}.\birpc) \in \lbls_{\mathcal{L}}$\\
	Since $\simreltra{\mixedexec{\mixediml}{\sbirstate{i}}}{\sbirvaluation}{\sbirtoiml{.}}{\imlstate{i}}$,
	we know that $\imlstate{i}.\imlprocessp = \sbirtoiml{\tree[\sbirstate{i}.\birpc]}$.
	Based on the translation rules in~\autoref{fig:sbirtoiml}, we get that $ \imlstate{i}.\imlprocessp = \imlc{!}^{\symvar{t}\leq \imlc{m}} $  $\loopbody{\sbirstate{i}.\birpc}$. 
	Also, by construction of the symbolic execution
	tree we know that there exist $\tree_\sbirc{0}$, and
	$\tree_\sbirc{1}$ such that
	$\tree = \tree_\sbirc{0}\bnfconcat 
	(\birpc, \symloop{\symvar{t}})\bnfconcat(\birpc', \nodeevent)\bnfconcat\tree_\sbirc{1}$ such that $ \birpc' = \pcexit{\birpc} $. 
	
	Therefore, based on the operational semantics of $ \imlsymb $ processes~\cite[p.~23]{aizatulin2015verifying} and translation rules in~\autoref{fig:sbirtoiml}, we get  
	$ \imlstate{j} = (\nextimlenv,  \sbirtoiml{\tree[\sbirstate{j}.\birpc']}),\imlmultisetprocs $ and $ \imltrans{}{1}{\imlstate{i}}{\imlstate{j}}$ such that $ \nextimlenv[\symvar{t}] = \imlvar{b} $.
	Based on the semantics of \imlsbir{},~\autoref{fig:mixedsymbiml} and the computed loop summary, we find 
	$ \mixedexec{\mixediml}{\sbirstate{j}} = \tuple{{\sbirpcond}', {\sbirenv}', \birpc'}, \mixmultisetprocsimlsbir$ that is in the transition relation $\miximltrans{\miximlsbirevent}{1}{\sbirvaluation}{\mixedexec{\mixediml}{\sbirstate{i}}}{\mixedexec{\mixediml}{\sbirstate{j}}}$, $\miximlsbirevent = \symloop{\symvar{t}} $, and $ \birpc' = \pcexit{\birpc} $.

	Finally, for every $ \imlsymb $ variable $ \imlvar{x} $ and symbolic $\birsymb$ variable $ \symvar{x} $ that $ \mixedexec{\mixediml}{\!\sbirstate{j}\!}.\sbirenv[\symvar{x}] \neq \mixedexec{\mixediml}{\!\sbirstate{i}\!}.\sbirenv[\symvar{x}]$ and $ \imlstate{j}.\imlenv[\imlvar{x}] \neq \imlstate{i}.\imlenv[\imlvar{x}]$, we choose $\sbirvaluation'$ by extending $\sbirvaluation$ according to the executed operation such that $  \sbirvaluation'(\mixedexec{\mixediml}{\!\sbirstate{j}\!}.\sbirenv[\symvar{x}])= \imlstate{j}.\imlenv[\imlvar{x}]$. 
	Moreover, for the symbolic counter $ \symvar{t} $ which represents the number of iterations of the loop in symbolic execution and the number of the process replication $ \imlvar{b} $, we have $ \sbirvaluation' (\symvar{t}) = \imlvar{b}$.
	Therefore, we can conclude that $ \simreltra{\tuple{{\sbirpcond}', {\sbirenv}', \birpc'}, \mixmultisetprocsimlsbir}{\sbirvaluation'}{\sbirtoiml{.}}{(\nextimlenv,  \sbirtoiml{\tree[\sbirstate{j}.\birpc']}), \imlmultisetprocs }$. 
	\end{itemize}
\end{proof}

\subsection{Soundness of Symbolic Execution}
\label{mixedbirtomixedsbirsoundness}
 
\mixbirstateeventeqstepzero*
 
 \begin{proof}
 	We choose $\sbirvaluation$ such that for $ 0 < i \leq k $, $ \sbirvaluation( \mixedexec{\mixediml}{\sbirstate{0}}.\sbirenv.\randommem[\symvar{x_i}]) $  $
 	= \mixedexec{\mixediml}{\birstate{0}}.\birenv.\randommem[\var{x_i}] $.
 	Finally, we can conclude that $ \simrel{ \tuple{\birenv_\birc{0}[ \randommem \mapsto \randomvals_\imlc{k} , \randommemidx $  $\mapsto 0], \birpc_\birc{0}}}{\sbirvaluation}{ \tuple{True, \sbirenv_\sbirc{0}[ \randommem \mapsto \randomsymvals_\sbirc{k} , \randommemidx \mapsto 0], \birpc_\birc{0}}}$. 
 \end{proof}

\mixbirstateeventeqstepone*

Proof of lemma~\ref{lem:bir:stateeventeq} is done by a case split on the type of label sets in the $\birsymb$ program $ \birprog $.
\begin{proof}
	We prove the statement by a case split based on the type of the program counter $ \mixedexec{\mixediml}{\birstate{i}}.\birpc $. 
	\begin{itemize}
		\item $	\birpc \in \lbls_{\birc{\mathcal{E}}}$ \\
		Based on the mixed semantics of $ \imlsymb $ and $ \birsymb $,~\autoref{fig:mixedbiriml}, we get that $ \mixedexec{\mixediml}{\birstate{j}} = \tuple{{\birenv}', \birpc'}, {\mixmultisetprocsimlbir}$ and $\imltrans{\miximlbirevent}{1}{\mixedexec{\mixediml}{\birstate{i}}}{\mixedexec{\mixediml}{\birstate{j}}}$.
		Based on the definition of event functions in~\autoref{ev-fun}, we have $ \birpc' =  \pcinc{{\birenv}, \birpc}  $, $ \miximlbirevent = \event{\imlvar{b_1},\ldots,\imlvar{b_m}} $ and $\birenv{'} = \birenv$.
		Since $ \simrel{\mixedexec{\mixediml}{\birstate{i}}}{\sbirvaluation}{\mixedexec{\mixediml}{\sbirstate{i}}} $,
		we get that $ \mixedexec{\mixediml}{\sbirstate{i}}.\birpc \in \lbls_{\birc{\mathcal{E}}}$.
		Based on the semantics of \imlsbir{},~\autoref{fig:mixedsymbiml}, we find $ \mixedexec{\mixediml}{\sbirstate{j}} = \tuple{\sbirpcond,\sbirenv, \birpc'}, {\mixmultisetprocsimlsbir}$ that is in the transition relation $\miximltrans{\miximlsbirevent}{1}{\sbirvaluation}{\mixedexec{\mixediml}{\sbirstate{i}}}{\mixedexec{\mixediml}{\sbirstate{j}}}$ such that $\miximlsbirevent = \symevent{\symvar{d_1},\dots,\symvar{d_m}} $.
		
		Therefore, we choose $\sbirvaluation'$ by extending $\sbirvaluation$ according to the executed operation and such that $  \sbirvaluation'(\symvar{d_j})=\!\imlvar{b_j} $ for $ 0 < j \leq m $. 
		Moreover, since the environment of both $\birstate{j}$ and $\sbirstate{j}$ are the same as $\birstate{i} $ and $ \sbirstate{i}$, respectively, we get that $ \mixedexec{\mixediml}{\sbirstate{j}}.\sbirenv =_{\sbirvaluation'} \mixedexec{\mixediml}{\birstate{j}}.\birenv $.
		Finally, we can conclude that $ \simrel{\tuple{\birenv, \birpc'}, {\mixmultisetprocsimlbir}}{\sbirvaluation'}{\tuple{\sbirpcond,\sbirenv, \birpc'}, $  ${\mixmultisetprocsimlsbir}}$ and $\event{\imlvar{b_1},\ldots,\imlvar{b_m}} =_{\sbirvaluation'} \symevent{\symvar{d_1},\dots,\symvar{d_m}}$. \\
	
		\item $	\birpc \in \lbls_{\birc{\mathcal{C}}}$ \\
		Based on the mixed semantics of $ \imlsymb $ and $ \birsymb $,~\autoref{fig:mixedbiriml}, we get that $ \mixedexec{\mixediml}{\birstate{j}} = \tuple{{\birenv}{''}, \birpc'}, {\mixmultisetprocsimlbir}$ and $\imltrans{\miximlbirevent}{1}{\mixedexec{\mixediml}{\birstate{i}}}{\mixedexec{\mixediml}{\birstate{j}}}$.
		Based on the definition of crypto calls in~\autoref{lib-fun}, we have $ \miximlbirevent = \crypto{\imlvar{v}} $, $ \birpc' =  \pcinc{{\birenv}, \birpc}  $ and $ \birenv{''} = \birenv{'}[\reg{0} \mapsto \var{a}] $ where $(\birenv{'},\var{a})=\update(\birenv, \heap_{\cryptoOp}, \memsymb_{\cryptoOp},$   $\imlvar{v}, 128)$.
		
		Since $ \simrel{\mixedexec{\mixediml}{\birstate{i}}}{\sbirvaluation}{\mixedexec{\mixediml}{\sbirstate{i}}} $,
		we get that $ \mixedexec{\mixediml}{\sbirstate{i}}.\birpc \in \lbls_{\birc{\mathcal{C}}}$.
		Based on the semantics of \imlsbir{},~\autoref{fig:mixedsymbiml}, we find $ \mixedexec{\mixediml}{\sbirstate{j}} = \tuple{\sbirpcond, {\sbirenv}{''}, \birpc'}$  $, {\mixmultisetprocsimlsbir}$ that is in the transition relation $\miximltrans{\miximlsbirevent}{1}{\sbirvaluation}{\mixedexec{\mixediml}{\sbirstate{i}}}{\mixedexec{\mixediml}{\sbirstate{j}}}$ such that $\miximlsbirevent = \symcrypto{\symvar{v}}$ and $ \sbirenv{''} = \sbirenv{'}[\reg{0} \mapsto \var{a}] $ where $(\sbirenv{'},\var{a})=\update(\sbirenv, \heap_{\cryptoOp}, \memsymb_{\cryptoOp}, $  $\symvar{v}, 128)$.
		
		Therefore, we choose $\sbirvaluation'$ by extending $\sbirvaluation$ according to the executed operation and such that $  \sbirvaluation'(\symvar{v})=\!\imlvar{v}$.
		Hence, we get that $ \mixedexec{\mixediml}{\sbirstate{j}}.\sbirenv =_{\sbirvaluation'} \mixedexec{\mixediml}{\birstate{j}}.\birenv $.
		Finally, we can conclude that $ \simrel{\tuple{\birenv{''}, \birpc'}, {\mixmultisetprocsimlbir}}{\sbirvaluation'}{ \tuple{\sbirpcond, {\sbirenv}'', \birpc'},  {\mixmultisetprocsimlsbir}}$ and $\crypto{\imlvar{v}} =_{\sbirvaluation'} \symcrypto{\symvar{v}}$. \\
	
		\item $	\birpc \in \lbls_{\birc{\mathcal{R}}}$ \\
		Based on the mixed semantics of $ \imlsymb $ and $ \birsymb $,~\autoref{fig:mixedbiriml}, we get that $ \mixedexec{\mixediml}{\birstate{j}} = \tuple{{\birenv}'', \birpc'}, {\mixmultisetprocsimlbir}$ and $\imltrans{\miximlbirevent}{1}{\mixedexec{\mixediml}{\birstate{i}}}{\mixedexec{\mixediml}{\birstate{j}}}$.
		Based on the definition of RNG in~\autoref{rng-fun}, we have $ \miximlbirevent = \freshv{\imlvar{x_d}} $, $ d = \left\lfloor \frac{\birenv[\randommemidx]}{l}  \right\rfloor +1$, $\birenv{''} = \birenv{'}[\reg{0} \mapsto \var{a}; \randommemidx \mapsto \birenv[\randommemidx] + {l}]$ where $(\birenv{'},\var{a}) = \update(\birenv, \heap, \memsymb, \imlvar{x_d},128)$, $ \imlvar{x_d} = \birrand(\birenv, n) $ and $ \birpc' =  \pcinc{{\birenv}, \birpc}$.

		Since $ \simrel{\mixedexec{\mixediml}{\birstate{i}}}{\sbirvaluation}{\mixedexec{\mixediml}{\sbirstate{i}}} $,
		we get that $ \mixedexec{\mixediml}{\sbirstate{i}}.\birpc \in \lbls_{\birc{\mathcal{R}}}$.
		Based on the semantics of \imlsbir{},~\autoref{fig:mixedsymbiml}, we find $ \mixedexec{\mixediml}{\sbirstate{j}} = \tuple{\sbirpcond, {\sbirenv}'', \birpc'}$  $, {\mixmultisetprocsimlsbir}$ that is in the relation $\miximltrans{\miximlsbirevent}{1}{\sbirvaluation}{\mixedexec{\mixediml}{\sbirstate{i}}}{\mixedexec{\mixediml}{\sbirstate{j}}}$ such that $\miximlsbirevent = \symfreshv{\symvar{x}_{\sbirc{d'}}}$, $ {d}' = \left\lfloor \frac{\sbirenv[\randommemidx]}{l}  \right\rfloor +1$ and $ \sbirenv{''} = {\sbirenv{'}}[\reg{0} \mapsto \var{a}; \randommemidx \mapsto \sbirenv[\randommemidx] + {l}]  $  where $ (\sbirenv{'},\var{a}) = \update(\sbirenv, \heap, \memsymb, \symvar{x}_{\sbirc{d'}},128) $ and $ \symvar{x}_{\sbirc{d'}} = \sbirrand(\sbirenv, n) $.
		
		Since $ \simrel{\mixedexec{\mixediml}{\birstate{i}}}{\sbirvaluation}{\mixedexec{\mixediml}{\sbirstate{i}}} $ and $ \mixedexec{\mixediml}{\sbirstate{i}}.\sbirenv =_{\sbirvaluation} \mixedexec{\mixediml}{\birstate{i}}.\birenv $, we get that $ d' = d $.
		Finally, we choose $\sbirvaluation'$ by extending $\sbirvaluation$ according to the executed operation and such that $  \sbirvaluation'(\symvar{x_d})=\!\imlvar{x_d}$.
		Hence, we get that $ \mixedexec{\mixediml}{\sbirstate{j}}.\sbirenv =_{\sbirvaluation'} \mixedexec{\mixediml}{\birstate{j}}.\birenv $.
		Therefore, we can conclude that $ \simrel{\tuple{\birenv{''}, \birpc'}, {\mixmultisetprocsimlbir}}{\sbirvaluation'}{ \tuple{\sbirpcond, {\sbirenv}'', \birpc'},  {\mixmultisetprocsimlsbir}}$ and $\freshv{\imlvar{x_d}} =_{\sbirvaluation'} \symfreshv{\symvar{x_d}}$. \\
		
		\item $	\birpc_\birc{1} \in \lbls_{\birc{\attacker_s}} \ \land \ \birpc_\birc{2} \in \lbls_{\birc{\attacker_r}} $ \\
		In this case, $ \imltransmulti{\miximlbirevent}{p}{\mixedexec{\mixediml}{\birstate{i}}}{\mixedexec{\mixediml}{\birstate{j}}} $ amounts to 3 sub-transitions as follows.
		First, using the $ \birc{B}to\imlc{I}$ rule in~\autoref{fig:mixedbiriml}, we have the transition relation $\imltrans{\Output{\imlvar{b},(\var{e_1},\dots,\var{e_m})}}{1}{\mixedexec{\mixediml}{\birstate{i}}}{\mixedexec{\mixediml}{\imlstate{x}}}$  such that $ \mixedexec{\mixediml}{\birstate{i}}.\birpc_\birc{1} \in \lbls_{\birc{\attacker_s}}$.
		
		Then, based on the operational semantics of $ \imlsymb $ output processes~\cite[p.~23]{aizatulin2015verifying}, the transition $ \imltransmulti{}{p}{\mixedexec{\mixediml}{\imlstate{x}}}{\mixedexec{\mixediml}{\imlstate{y}}} $ takes place.
		Finally, we have $\imltrans{\Input{\imlvar{b}',({\var{e'_1}},\dots,{\var{e'_m}})}}{1}{\mixedexec{\mixediml}{\imlstate{y}}}{\mixedexec{\mixediml}{\birstate{j}}}$ using $ \imlc{I}to\birc{B}$ rule in~\autoref{fig:mixedbiriml} such that there exist a $\birsymb$ state $ (\tuple{\birenv, \birpc_\birc{2}}, \imlchannel[{\var{e'_1}},$  $\dots,{\var{e'_m}}])$ in the multiset of executing states $ \mixedexec{\mixediml}{\imlstate{y}}.{\mixmultisetprocsimlbir} $ such that $ \birpc_\birc{2} \in \entr_\birc{\attacker_r} $. 
		
		Since $ \simrel{\mixedexec{\mixediml}{\birstate{i}}}{\sbirvaluation}{\mixedexec{\mixediml}{\sbirstate{i}}} $, we need to find a mixed symbolic and $ \imlsymb $ state $ \mixedexec{\mixediml}{\sbirstate{j}} $ such that $ \simrel{\mixedexec{\mixediml}{\birstate{j}}}{\sbirvaluation}{\mixedexec{\mixediml}{\sbirstate{j}}} $ and it is reachable from intermediate states $ \mixedexec{\mixediml}{\imlstate{x}} $ and $ \mixedexec{\mixediml}{\imlstate{y}} $ using $ \sbirc{SB}to\imlc{I}$ and $ \imlc{I}to\sbirc{SB}$ rules in~\autoref{fig:mixedsymbiml}, respectively.
		
		Therefore, since $ \birpc_\birc{1} \in \lbls_\birc{\attacker_s}  $, based on the mixed semantics of $ \imlsymb $ and $ \birsymb $,~\autoref{fig:mixedbiriml}, we get that there exist an $ \imlsymb $ state $ \mixedexec{\mixediml}{\imlstate{x}} $ such that $\imltrans{\Output{\imlvar{b},(\var{e_1},\dots,\var{e_m})}}{1}{\mixedexec{\mixediml}{\birstate{i}}}{\mixedexec{\mixediml}{\imlstate{x}}}$ and $ \mixedexec{\mixediml}{\imlstate{x}} = (\imlenv[\imlvar{x} \mapsto \imlvar{b}'], \imlprocessp), {\mixmultisetprocsimlbir} \setminus {\mixmultisetprocsimlbir}' $ such that $ \imlvar{b}' = \mathit{truncate}(\imlvar{b}, $  $\mathit{maxlen}(\imlchannel)) $.
		
		Since $ \simrel{\mixedexec{\mixediml}{\birstate{i}}}{\sbirvaluation}{\mixedexec{\mixediml}{\sbirstate{i}}} $,
		we get that $ \mixedexec{\mixediml}{\sbirstate{i}}.\birpc_\birc{1} \in \lbls_\birc{\attacker_s}$.
		Based on the semantics of \imlsbir{},~\autoref{fig:mixedsymbiml}, we find $ \mixedexec{\mixediml}{\imlstate{x}} =(\imlenv[\imlvar{x} \mapsto \imlvar{b}'], \imlprocessp), {\mixmultisetprocsimlsbir} \setminus  {\mixmultisetprocsimlsbir}'$that is in the transition relation $\mixedexec{\mixediml}{\sbirstate{i}}$  $\miximltrans{\miximlsbirevent}{1}{\sbirvaluation}{}{\mixedexec{\mixediml}{\imlstate{x}}}$ such that $\miximlsbirevent = \symoutput{\symvar{e},(\symvar{e_1},\dots,\symvar{e_m})}$.
		
		Since $ \birpc_\birc{2} \in \lbls_{\attacker_r}  $, based on the mixed semantics of $ \imlsymb $ and $ \birsymb $,~\autoref{fig:mixedbiriml}, we get that there exist an $ \imlsymb $ state $ \imlstate{y} $ such that $\imltrans{\Input{\imlvar{b}',({\var{e'_1}},\dots,{\var{e'_m}})}}{1}{\mixedexec{\mixediml}{\imlstate{y}}}{\mixedexec{\mixediml}{\birstate{j}}}$ and
		$ \mixedexec{\mixediml}{\birstate{j}} = (\tuple{ \birenv{'}[\reg{0} \mapsto \var{a}], \birpc'}, \imlchannel[{\var{e'_1}},\dots,{\var{e'_m}}]), {\mixmultisetprocsimlbir} \uplus {\mixmultisetprocsimlbir}' \setminus {\mixmultisetprocsimlbir}'' $  such that $ (\birenv{'},\var{a}) = \update(\birenv, \heap_{\birc{\mathcal{A}}}, \advmem, \imlvar{b}', 128) $ and $ \birpc' = \pcinc{\birenv, \birpc_\birc{2}} $.
		
		Since $ \simrel{\mixedexec{\mixediml}{\birstate{i}}}{\sbirvaluation}{\mixedexec{\mixediml}{\sbirstate{i}}} $,
		we get that $ \mixedexec{\mixediml}{\sbirstate{i}}.\birpc_\birc2 \in \lbls_\birc{\attacker_r}$.
		Based on the semantics of \imlsbir{},~\autoref{fig:mixedsymbiml}, we find $ \mixedexec{\mixediml}{\sbirstate{j}} =(\tuple{\sbirpcond, \sbirenv{'}[\reg{0} $  $ \mapsto \var{a}], \birpc'}, \imlchannel[{\symvar{e_1}'},\dots,{\symvar{e_m}'}]), {\mixmultisetprocsimlsbir} \uplus {\mixmultisetprocsimlsbir}' \setminus {\mixmultisetprocsimlsbir}''$ that is in the relation $\miximltrans{\miximlsbirevent}{1}{\sbirvaluation}{\mixedexec{\mixediml}{\imlstate{y}}}{\mixedexec{\mixediml}{\sbirstate{j}}}$ such that $ \birpc' = \pcinc{\sbirpcond, \sbirenv, \birpc_\birc{2}}$, $\miximlsbirevent = \syminput{{\symvar{e}}',({\symvar{e_1}'},\dots,{\symvar{e_m}'})}$ and 
        $ (\sbirenv{'},\var{a}) = \update(\sbirenv, \heap_{\birc{\mathcal{A}}}, $  $\advmem, {\symvar{e}}', 128) $.

		Moreover, we choose $\sbirvaluation'$ by extending $\sbirvaluation$ according to the executed operation and such that $  \sbirvaluation'(\symvar{e})=\!\imlvar{b}$, $  \sbirvaluation'({\symvar{e}}')=\!\imlvar{b}'$, $ \sbirvaluation'(\symvar{e_j})=\!\var{e_j} $ and $ \sbirvaluation'({\symvar{e_j}}')=\!{\var{e_j}}'$ for $ 1 \leq j \leq m $.
		Hence, we get that $ \mixedexec{\mixediml}{\sbirstate{j}}.\sbirenv =_{\sbirvaluation'} \mixedexec{\mixediml}{\birstate{j}}.\birenv $.
		Therefore, we can conclude that $ \simrel{(\tuple{ \birenv{'}[\reg{0} \mapsto \var{a}], \birpc'}, \imlchannel[{\var{e'_1}},\dots,{\var{e'_m}}]), {\mixmultisetprocsimlbir} \uplus {\mixmultisetprocsimlbir}' \setminus {\mixmultisetprocsimlbir}''}{\sbirvaluation'}{(\tuple{\sbirpcond, \sbirenv{'}[\reg{0} \mapsto \var{a}], \birpc'}, \imlchannel[{\symvar{e_1}'},\dots,{\symvar{e_m}'}]), {\mixmultisetprocsimlsbir} \uplus {\mixmultisetprocsimlsbir}' \setminus {\mixmultisetprocsimlsbir}''}$, $\Output{\imlvar{b},(\var{e_1},\dots,\var{e_m})} =_{\sbirvaluation'} \symoutput{\symvar{e},(\symvar{e_1},\dots,\symvar{e_m})}$, and $ \Input{\imlvar{b}',({\var{e'_1}},$  $\dots,{\var{e'_m}})} =_{\sbirvaluation'} \syminput{{\symvar{e}}',({\symvar{e_1}'},\dots,{\symvar{e_m}'})} $. \\

		\item $	\birpc \in \lbls_{\birc{\mathcal{N}}} \ \land \ \birpc \notin \lbls_{\birc{\mathcal{L}}}$ \\
		Based on the mixed semantics of $ \imlsymb $ and $ \birsymb $,~\autoref{fig:mixedbiriml}, we get that $ \mixedexec{\mixediml}{\birstate{j}} = \tuple{{\birenv}', \birpc'}, {\mixmultisetprocsimlbir}$ and $\imltrans{\miximlbirevent}{1}{\mixedexec{\mixediml}{\birstate{i}}}{\mixedexec{\mixediml}{\birstate{j}}}$.
		Since $ \simrel{\mixedexec{\mixediml}{\birstate{i}}}{\sbirvaluation}{\mixedexec{\mixediml}{\sbirstate{i}}} $,
		we get that $ \mixedexec{\mixediml}{\sbirstate{i}}.\birpc \in \lbls_{\birc{\mathcal{N}}}$.
		Based on the semantics of \imlsbir{},~\autoref{fig:mixedsymbiml}, we find $ \mixedexec{\mixediml}{\sbirstate{j}} =  \tuple{ \sbirpcond', {\sbirenv}', \birpc'}, {\mixmultisetprocsimlsbir}$ that is in the transition relation $\miximltrans{}{1}{\sbirvaluation}{\mixedexec{\mixediml}{\sbirstate{i}}}{\mixedexec{\mixediml}{\sbirstate{j}}}$.
		
		Based on Lindner's work~\cite{DBLP:journals/scp/LindnerGM19} (see ~\autoref{single-step-vanila}), there exist an interpretation  $\sbirvaluation' \supseteq \sbirvaluation$ such that $ \mixedexec{\mixediml}{\sbirstate{j}}.\sbirenv =_{\sbirvaluation'} \mixedexec{\mixediml}{\birstate{j}}.\birenv $.
		Therefore, we can conclude that $ \simrel{\tuple{ {\birenv}', \birpc'}, {\mixmultisetprocsimlbir}}{\sbirvaluation'}{\tuple{ \sbirpcond', {\sbirenv}', \birpc'}, $  $ {\mixmultisetprocsimlsbir}}$. \\
		
		\item $	\birpc \in \lbls_{\birc{\mathcal{L}}}$ \\
		Based on the mixed semantics of $ \imlsymb $ and $ \birsymb $,~\autoref{fig:mixedbiriml}, we get that $ \mixedexec{\mixediml}{\birstate{j}} = \tuple{{\birenv}', \birpc'}, {\mixmultisetprocsimlbir}$ and $\imltransmulti{\Loop{\var{t}}}{1}{\mixedexec{\mixediml}{\birstate{i}}}{\mixedexec{\mixediml}{\birstate{j}}}$ and $ \birpc' = \pcexit{\birpc} $.
		Since $ \simrel{\mixedexec{\mixediml}{\birstate{i}}}{\sbirvaluation}{\mixedexec{\mixediml}{\sbirstate{i}}} $,
		we get that $ \mixedexec{\mixediml}{\sbirstate{i}}.\birpc \in \lbls_{\birc{\mathcal{L}}}$.
		Based on the semantics of \imlsbir{},~\autoref{fig:mixedsymbiml} and the computed loop summary\footnote{Note that the concrete $\birsymb$ uses unrolling for loops, while our symbolic execution, $ \sbirsymb $, uses the loop summary to handle bounded loops.}, we find $ \mixedexec{\mixediml}{\sbirstate{j}} =  \tuple{ \sbirpcond', {\sbirenv}', \birpc'},$  ${\mixmultisetprocsimlsbir}$ that is in the transition relation $\miximltransmulti{\symloop{\symvar{t}}}{1}{\sbirvaluation}{\mixedexec{\mixediml}{\sbirstate{i}}}{\mixedexec{\mixediml}{\sbirstate{j}}}$ such that $ \birpc' = \pcexit{\birpc} $. 
		Therefore, we get that both loops in $ \mixedexec{\mixediml}{\birstate{j}} $ and $ \mixedexec{\mixediml}{\sbirstate{j}} $ terminate at the same position $ \birpc' $. 
		
		Finally, for every $\birsymb$ variable $ \var{x} $ and symbolic $\birsymb$ variable $ \symvar{x} $ that $\mixedexec{\mixediml}{\!\sbirstate{j}\!}.\sbirenv[\symvar{x}] \neq \mixedexec{\mixediml}{\!\sbirstate{i}\!}.\sbirenv[\symvar{x}]$ and $ \mixedexec{\mixediml}{\!\birstate{j}\!}.\birenv[\var{x}] \neq \mixedexec{\mixediml}{\!\birstate{i}\!}.\birenv[\var{x}]$, we choose $\sbirvaluation'$ by extending $\sbirvaluation$ according to the executed operation such that $  \sbirvaluation'(\mixedexec{\mixediml}{\!\sbirstate{j}\!}.\sbirenv[\symvar{x}])= \mixedexec{\mixediml}{\!\birstate{j}\!}.\birenv[\var{x}]$. 
		Moreover, for the symbolic counter $ \symvar{t} $ which represents the number of iterations of the loop in symbolic execution and the concrete counter $ \var{t} $, we have $ \sbirvaluation' (\symvar{t}) = \var{t}$.
		Therefore, we can conclude that $ \simrel{\tuple{ {\birenv}', \birpc'}, {\mixmultisetprocsimlbir}}{\sbirvaluation'}{\tuple{ \sbirpcond', {\sbirenv}', \birpc'},  {\mixmultisetprocsimlsbir}}$ and $ \symloop{\symvar{t}} =_{\sbirvaluation'} \Loop{\var{t}} $. \\

	\end{itemize}
\end{proof}

\subsection{Security properties}\label{sec:properties}

\subsubsection*{Preliminaries}

Trace property $ \traceproperty $ is a polynomially decidable prefix-closed set of event traces.

\begin{math}
	\begin{array}{l}
		\{ \eventtrace  \in \traceproperty \ | \ \forall i \in \naturalnum : \eventtrace[..i] \in \traceproperty\\ 
		\hspace{30mm}\begin{array}{l}	
			\implies \left( \exists j \in \naturalnum: j < i \ \land \ \eventtrace[..j] \in \traceproperty \right)
		\end{array} \}
	\end{array}
\end{math}\\

Here, we use the following notation.
Given a trace
$ \eventtrace = \eventsym_0 \eventsym_1 \dots $
and index $ i $,
we define $ \eventtrace[i] = \eventsym_i$ and
$ \eventtrace[..i] = \eventsym_0 \eventsym_1 \dots \eventsym_i$.

\thmsoundness*
\begin{proof}
    \newcommand{\mDef}[1]{\text{Def.~\ref{#1}}}
    \newcommand{\mLem}[1]{\text{Lem.~\ref{#1}}}
    \newcommand{\mThm}[1]{\text{Thm.~\ref{#1}}}

Throughout the following proof, we use the following property of sums:
if $\zeta: A \to B$ is an injection,
then $
\sum_{a\in A} f(\zeta(a))
\le 
\sum_{b\in B} f(b)
$. 
Recall also that, if $\zeta: A \to B$ is injective,
then $\zeta|_{A'}$ is injective for any $ A' $.
Also, for brevity, let $\mixedsbirtraces=\sbirtracesproof$ and
$\mixedbirtraces = \birtracesproof$
in the follow up. \\

$\insecurity{\imlsystemproof, n}{ \traceproperty}$
\begin{align*}
    \stackrel{\mDef{def:IML-insec}}{=} &
    \sum_{\imltrace \in \imltraces(\imlsystemproof,n)\cap \tracepropertyneg } \imlpr(\imltrace)
    \\
    \\
    \intertext{Note that $ \simreltratraces{\mixedsbirexecution}{\sbirvaluation}{k}{\sbirtoiml{\cdot}}{\imlexecution}$
    implies $\sbirvaluation(\mixedsbirtrace) \in \tracepropertyneg \iff \imltrace \in \tracepropertyneg$}
\stackrel{\mLem{lem:traceinclusion},\mLem{lem:same-random-steps}}{\ge} &
    \sum_{
        \substack{
            \mixedsbirtrace\in\mixedsbirtraces
            \\
            \sbirvaluation: \randomsymvals_{\sbirc{l}} \to \imlvals^k_n
            \\
        \text{s.t.~}\exists \sbirvaluation'.\
        \sbirvaluation'|_{\randomsymvals_{\sbirc{l}}} = \sbirvaluation 
        \\
        \land \sbirvaluation'(\mixedsbirtrace) \in \tracepropertyneg 
        \\
        \land \rng{\mixedsbirtrace} = l
    }
}  2^{-n l} \imlpr(\mixedsbirtrace)
    \intertext{Let $k$ be the maximal $l$ in the sum. We extend the
    range of $\sbirvaluation$, which increases the range of the sum by
$2^{n(k-l)}$:}
    = &
    \sum_{
        \substack{
            \mixedsbirtrace\in\mixedsbirtraces
            \\
            \sbirvaluation: \randomsymvals_{\sbirc{k}} \to \imlvals^k_n
            \\
        \text{s.t.~}\exists \sbirvaluation'.\
        \sbirvaluation'|_{\randomsymvals_{\sbirc{l}}} = \sbirvaluation |_{\randomsymvals_{\sbirc{l}}}
        \\
        \land \sbirvaluation'(\mixedsbirtrace) \in \tracepropertyneg 
        \\
        \land \rng{\mixedsbirtrace} = l
    }
}  \underbrace{\frac{2^{-n l}}{2^{n(k - l)}}}_{= 2^{-nk}}\cdot \imlpr(\mixedsbirtrace)
\\
    \stackrel{\mLem{lem:extra-randomness}}{=} &
    \sum_{
        \substack{
            \mixedsbirtrace\in\mixedsbirtraces
            \\
            \sbirvaluation: \randomsymvals_{\sbirc{k}} \to \imlvals^k_n
            \\
        \text{s.t.~}\exists \sbirvaluation'.
        \sbirvaluation'|_{\randomsymvals_{\sbirc{k}}} = \sbirvaluation
        \\
        \land \sbirvaluation'(\mixedsbirtrace) \in \tracepropertyneg 
    }
}  2^{-nk} \imlpr(\mixedsbirtrace)\\
    = & 
2^{-nk} \cdot
    \sum_{
        \substack{
            \mixedsbirtrace\in\mixedsbirtraces
            \\
            \sbirvaluation: \randomsymvals_{\sbirc{k}} \to \imlvals^k_n
            \\
        \text{s.t.~}\exists \sbirvaluation'.
        \sbirvaluation'|_{\randomsymvals_{\sbirc{k}}} = \sbirvaluation
        \\
        \land \sbirvaluation'(\mixedsbirtrace) \in \tracepropertyneg 
    }
}  \imlpr(\mixedsbirtrace)
    \\
    \intertext{Note that $ \simreltraces{\mixedbirexecution}{\sbirvaluation}{k}{\mixedsbirexecution}$
    	implies $\sbirvaluation(\mixedsbirtrace) \in \tracepropertyneg \iff \mixedbirtrace \in \tracepropertyneg$}
\stackrel{\mLem{lem:traceinclusion-bir}}{\ge} &
2^{-nk} \cdot
    \sum_{\substack{
       \randomvals_{\imlc{k}} \in \imlvals_n^k \\
       \mixedbirtrace\in\mixedbirtraces \cap \tracepropertyneg
   }} \imlpr(\mixedbirtrace)
            \\
    \stackrel{\mDef{def:BIR-insec}}{=} &
\insecurity{\mixedbirprog,n,k}{\traceproperty}
\end{align*}
\end{proof}

\subsubsection{Lemmas required to prove \autoref{thm:soundness}}\label{lem:thm4lems}

\begin{lemma}[trace contains
    randomness]\label{lem:trace-contains-randomness}
    For a $\birsymb$ program $\birprog$,
    an $ \imlsymb $ process $\imlprog$,
    any upper bound $ k \in \naturalnum $ on the number of RNG steps, 
    all $ \imlsymb $ traces $ \imlexecution \in \imlexecutions(\imltransprog)$,
    all interpretations $ \sbirvaluation $ and $ \sbirvaluation'$,
    and all mixed $ \imlsymb $ and symbolic execution traces $\mixedsbirexecution\in\mixedsbirexecutions(\mixedbirprog, \sbirenv_{\sbirc{0}}[ \randommem \mapsto \randomsymvals_{\sbirc{k}}  , \randommemidx \mapsto 0])$  with $d$ the number of $\textsc{RNG}$ steps in $\mixedsbirexecution$ such that $\randomsymvals_\sbirc{d}$ be the first $d$ random symbolic values in
    $\randommem$ and $ d \leq k $,
    then,
    \[
        \simreltratraces{\mixedsbirexecution}{\sbirvaluation}{k}{\sbirtoiml{\cdot}}{\imlexecution}
        \land
    \simreltratraces{\mixedsbirexecution}{\sbirvaluation'}{k}{\sbirtoiml{\cdot}}{\imlexecution}
        \implies
        \sbirvaluation|_{\randomsymvals_\sbirc{d}}
        = \sbirvaluation'|_{\randomsymvals_\sbirc{d}}
    \]
\end{lemma}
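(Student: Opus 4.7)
The plan is to proceed by induction on the length of the mixed symbolic trace $\mixedsbirexecution$, maintaining as an invariant that after processing a prefix with $d'$ RNG steps, the two interpretations agree on $\randomsymvals_{\sbirc{d'}}$. The key observation is that each random symbolic value $\symvar{x_i} \in \randomsymvals_{\sbirc{k}}$ is introduced into the simulation obligations exactly once: at the $i$-th application of the \textsc{RNG} rule from~\autoref{sbirfuncalls}, where $\symvar{x_i} = \sbirrand(\sbirenv, n)$ appears in the symbolic event $\symfreshv{\symvar{x_i}}$.

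The base case is immediate: for a zero-length prefix, $d' = 0$ and $\randomsymvals_{\sbirc{0}}$ is empty, so the restrictions of $\sbirvaluation$ and $\sbirvaluation'$ trivially coincide. For the inductive step, consider extending a prefix by one transition step $\miximltrans{\miximlsbirevent}{p}{\sbirvaluation}{\mixedexec{\mixediml}{\sbirstate{i}}}{\mixedexec{\mixediml}{\sbirstate{i+1}}}$ whose matching $\imlsymb$ transition is $\imltrans{\imlevent}{p}{\imlstate{i}}{\imlstate{i+1}}$. If this step is not an \textsc{RNG} step, then $d'$ does not change and no new element of $\randomsymvals_{\sbirc{k}}$ enters the relevant prefix, so the conclusion follows from the induction hypothesis. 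If it is an \textsc{RNG} step, then $\miximlsbirevent = \symfreshv{\symvar{x_{d'+1}}}$ and, by the simulation relation for both $\sbirvaluation$ and $\sbirvaluation'$ (invoking the event-equivalence clause of~\autoref{lem:iml:stateeventeq}), $\imlevent$ must have the form $\imlfreshev{\imlvar{b}}$ with
\[
\sbirvaluation(\symvar{x_{d'+1}}) = \imlvar{b} = \sbirvaluation'(\symvar{x_{d'+1}}).
\]
Combined with the induction hypothesis this gives agreement on $\randomsymvals_{\sbirc{d'+1}}$.

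The main obstacle is ensuring that the correspondence between the $i$-th RNG step and the symbolic value $\symvar{x_i} \in \randomsymvals_{\sbirc{k}}$ is tight: one must verify that $\sbirrand$ deterministically picks the next unused entry of $\randommem$ (so the $d$-th RNG step really assigns $\symvar{x_d}$, not some other symbolic value) and that no other rule in~\autoref{fig:mixedsymbiml} or~\autoref{sbirfuncalls} ever places a fresh component of $\randomsymvals_{\sbirc{k}}$ into an environment or event. The first point is a direct consequence of the counter update $\randommemidx \mapsto \sbirenv[\randommemidx] + l$ in the \textsc{RNG} rule, and the second holds because every other rule that extends $\sbirvaluation$ introduces a fresh symbol $\symvar{v}, \symvar{e}, \symvar{d_j}$, etc., disjoint from $\randomsymvals_{\sbirc{k}}$ by the side condition $\symvar{v} \notin \image{\sbirenv}$. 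With these observations in hand, the induction argument above goes through routinely.
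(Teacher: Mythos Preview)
Your argument is correct and rests on the same observation as the paper's proof: each RNG step in $\mixedsbirexecution$ emits the event $\symfreshv{\symvar{x_i}}$, the matching $\imlsymb$ event is $\imlfreshev{\imlvar{b_i}}$, and event equivalence under both $\sbirvaluation$ and $\sbirvaluation'$ forces $\sbirvaluation(\symvar{x_i}) = \imlvar{b_i} = \sbirvaluation'(\symvar{x_i})$. The paper simply unfolds the trace and reads off this equality for every $i \le d$ in one shot, whereas you package the same reasoning as an induction on the trace length; your version is more explicit about why the $i$-th RNG step really produces $\symvar{x_i}$ (via the $\randommemidx$ counter), which the paper leaves implicit.

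One small point: you appeal to \autoref{lem:iml:stateeventeq} for the event-equivalence clause, but that lemma \emph{constructs} a simulating $\imlsymb$ step. Here both simulations $\simreltratraces{\mixedsbirexecution}{\sbirvaluation}{k}{\sbirtoiml{\cdot}}{\imlexecution}$ and $\simreltratraces{\mixedsbirexecution}{\sbirvaluation'}{k}{\sbirtoiml{\cdot}}{\imlexecution}$ are already given, so the event equivalence $\miximlsbirevent =_{\sbirvaluation} \imlevent$ (and likewise for $\sbirvaluation'$) comes straight from the definition of $\simreltratraces{}{}{k}{\sbirtoiml{\cdot}}{}$, not from \autoref{lem:iml:stateeventeq}. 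This is a citation issue only; the logic is unaffected.
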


\begin{proof}
	Since  the relation $ \simreltratraces{\mixedsbirexecution}{\sbirvaluation}{k}{\sbirtoiml{\cdot}}{\imlexecution} $ holds, we know that 
	$ \mixedsbirexecution = \miximltrans{\miximlsbirevent_1}{p_1}{\sbirvaluation}{\mixedexec{\mixediml}{\sbirstate{0}}}{} \dots  \miximltrans{\miximlsbirevent_{i}}{p_{i}}{\sbirvaluation}{}{\mixedexec{\mixediml}{\sbirstate{i}}} \miximltrans{\symfreshv{\symvar{x_1}}}{1}{\sbirvaluation}{}{\mixedexec{\mixediml}{\sbirstate{i+1}}} \\
	\miximltrans{\miximlsbirevent_{i+2}}{p_{i+2}}{\sbirvaluation}{}{} \dots \miximltrans{\miximlsbirevent_{j}}{p_{j}}{\sbirvaluation}{}{\mixedexec{\mixediml}{\sbirstate{j}}} 
	\miximltrans{\symfreshv{\symvar{x_d}}}{1}{\sbirvaluation}{}{\mixedexec{\mixediml}{\sbirstate{j+1}}}
	\miximltrans{\miximlsbirevent_{j+2}}{p_{j+2}}{\sbirvaluation}{}{} \dots \\
	\miximltrans{}{p_m}{\sbirvaluation}{}{\mixedexec{\mixediml}{\sbirstate{m}}} $ 
	where $ \miximlsbirevent_i $ does not come from a RNG($n$) call and $ \symvar{x_1} , \dots , \symvar{x_d} = \randomsymvals_\sbirc{d} $
	and
	$\imlexecution = \imltrans{\imlevent_\imlc1}{p_1}{\imlstate{0}}{} \dots \imltrans{\imlevent_\imlc{i}}{p_{i}}{}{\imlstate{i}} \imltrans{\imlfreshev{\imlvar{b_1}}}{\probdist{n}}{}{\imlstate{i+1}} \dots \\
	 \imltrans{\imlevent_\imlc{j}}{p_{j}}{}{\imlstate{j}} \imltrans{\imlfreshev{\imlvar{b_d}}}{\probdist{n}}{}{\imlstate{j+1}}  \dots 
	\imltrans{}{p_m}{}{\imlstate{m}}$.
	For all $ 0 < i \leq d $, then, $ \imlvar{b_i} = \sbirvaluation(\symvar{x_i}) $ (1).
	Since  $ \simreltratraces{\mixedsbirexecution}{\sbirvaluation'}{k}{\sbirtoiml{\cdot}}{\imlexecution} $, for the random symbolic values $ \symvar{x_1} , \dots , \symvar{x_d} $, we get that $ \imlvar{b_i} = \sbirvaluation'(\symvar{x_i}) $ for $ 0 < i \leq d $ (2).
	From (1) and (2), we conclude that $ \sbirvaluation|_{\randomsymvals_\sbirc{d}} = \sbirvaluation'|_{\randomsymvals_\sbirc{d}} $.
	
\end{proof}

\begin{lemma}\label{lem:same-random-steps}
       For a $\birsymb$ program $\birprog$,
       an $ \imlsymb $ process $\imlprog$,
       any upper bound $ k \in \naturalnum $ on the number of RNG steps, 
       a security parameter $\secparam\in\naturalnum$,
       all $ \imlsymb $ traces $ \imlexecution \in \imlexecutions(\imltransprog)$,
       all interpretations $ \sbirvaluation $,
       all mixed $ \imlsymb $ and symbolic execution traces $\mixedsbirexecution\in\mixedsbirexecutions(\mixedbirprog, \sbirenv_{\sbirc{0}}[ \randommem \mapsto \randomsymvals_{\sbirc{k}}  , \randommemidx \mapsto 0])$,
       if $\simreltratraces{\mixedsbirexecution}{\sbirvaluation}{k}{\sbirtoiml{\cdot}}{\imlexecution}$
       , then, the number of \textsc{Inew} steps in $\imlexecution$
       and the number of \textsc{RNG} plus the number of \textsc{INew} steps in $\mixedsbirexecution$ are equal. 
       Moreover, $\imlpr(\trace{\imlexecution}) = \imlpr(\trace{\mixedsbirexecution}) \cdot 2^{-n \cdot \rng{\mixedsbirexecution}}$. 
\end{lemma}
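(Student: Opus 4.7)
The plan is to proceed by induction on the length of the mixed execution $\mixedsbirexecution$, using the step-by-step simulation already established by Lemma~\ref{lem:iml:stateeventeq} to line up transitions of $\mixedsbirexecution$ with transitions of $\imlexecution$. The induction hypothesis carries both assertions simultaneously: (i) the number of \textsc{INew} steps in the prefix of $\imlexecution$ equals the number of \textsc{RNG} plus \textsc{INew} steps in the prefix of $\mixedsbirexecution$, and (ii) $\imlpr(\trace{\imlexecution}) = \imlpr(\trace{\mixedsbirexecution}) \cdot 2^{-n \cdot \rng{\mixedsbirexecution}}$.

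The base case is trivial: empty traces contain no transitions, so both counters are zero and both probability products equal $1$. For the inductive step, I would extend by a single mixed transition $\miximltrans{\miximlsbirevent}{p}{\sbirvaluation}{\mixedexec{\mixediml}{\sbirstate{i}}}{\mixedexec{\mixediml}{\sbirstate{i+1}}}$ and invoke Lemma~\ref{lem:iml:stateeventeq} to obtain the matching $\imlsymb$ transition $\imltrans{\imlevent}{p'}{\imlstate{i}}{\imlstate{i+1}}$, then do a case split on which rule of \imlsbir{} fired. The cases group naturally:
\begin{itemize}
\item \textbf{\textsc{RNG} step in \imlsbir{}.} The rule fires with probability $p = 1$ (the symbolic extraction is deterministic) and produces a $\symfreshv{\symvar{x_i}}$ event. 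By the translation in Fig.~\ref{fig:sbirtoiml} and the construction in the proof of Lemma~\ref{lem:iml:stateeventeq}, the matching $\imlsymb$ step is an \textsc{INew} with probability $p' = 2^{-n}$. Thus both counts (RNG on the mixed side, INew on the IML side) grow by one, $\rng{\mixedsbirexecution}$ grows by one, and the IML probability picks up exactly the extra factor $2^{-n}$ demanded by (ii).
\item \textbf{$\sbirc{SB}to\imlc{I}$, $\imlc{I}to\sbirc{SB}$, library, event, normal, loop, branching steps.} Each of these fires with $p = p' = 1$, raises no new randomness event, and leaves both the IML-step count, the RNG count, and the probability product unchanged; (i) and (ii) both propagate straightforwardly.
\item \textbf{\textsc{INew} from a pure $\imlsymb$ reduction inside $\mixmultisetprocsimlsbir$.} Here $p = p' = 2^{-n}$ and $\rng{\mixedsbirexecution}$ is unchanged, while both counters in (i) grow by one; (ii) is preserved because the same $2^{-n}$ factor is multiplied on both sides.
\end{itemize}
In each case, (i) and (ii) are easily verified from the matching probabilities supplied by Lemma~\ref{lem:iml:stateeventeq}.

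The main obstacle is really a bookkeeping one: being careful that every rule of Fig.~\ref{fig:mixedsymbiml} that could, in principle, correspond to an \textsc{INew} on the IML side has been accounted for, and that no rule in the SBIR fragment silently contributes a probability factor different from $1$ other than when matched to an \textsc{INew}. This requires inspecting the translation rules in Fig.~\ref{fig:sbirtoiml} and confirming that $\sbirtoiml{\cdot}$ maps $\symfreshv{\symvar{x_i}}$ uniquely to $\imlc{new}$ and that no other symbolic event is translated into \textsc{INew}. Once this correspondence is established, (i) follows by a direct counting argument and (ii) by the telescoping product $\imlpr(\trace{\imlexecution}) = \prod_{k} p'_k = \prod_k p_k \cdot \prod_{\text{RNG-steps}} 2^{-n} = \imlpr(\trace{\mixedsbirexecution}) \cdot 2^{-n \cdot \rng{\mixedsbirexecution}}$.
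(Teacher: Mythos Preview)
Your case analysis and probability bookkeeping are essentially what the paper does, and the telescoping product at the end is exactly the right computation. The paper's argument is terser: it does not set up an explicit induction but simply reads off from the hypothesis $\simreltratraces{\mixedsbirexecution}{\sbirvaluation}{k}{\sbirtoiml{\cdot}}{\imlexecution}$ that every $\symfreshv{\symvar{x_i}}$ event in $\mixedsbirexecution$ corresponds to an $\imlfreshev{\sbirvaluation(\symvar{x_i})}$ in $\imlexecution$ and every pure $\imlfreshev{\imlvar{b}}$ carries over unchanged, then compares the per-step probabilities ($1$ versus $2^{-n}$ for \textsc{RNG}, $2^{-n}$ versus $2^{-n}$ for \textsc{INew}, $1$ versus $1$ otherwise) to conclude.

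One structural point worth tightening in your version: invoking Lemma~\ref{lem:iml:stateeventeq} is unnecessary here and slightly misdirected. That lemma \emph{produces} a matching $\imlsymb$ state from a mixed step, but in this lemma you are already \emph{given} a specific $\imlexecution$ together with the trace-level simulation relation as hypothesis. The event-by-event correspondence you need comes directly from the definition of $\simreltratraces{\cdot}{\sbirvaluation}{k}{\sbirtoiml{\cdot}}{\cdot}$, not from re-running the single-step lemma. If you build the matching $\imlsymb$ trace via Lemma~\ref{lem:iml:stateeventeq} you would have to argue separately that it coincides with the given $\imlexecution$. Once you replace that appeal with a direct unfolding of the trace-simulation hypothesis, your induction collapses to the same direct counting argument the paper gives.
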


\begin{proof}
	Since  $ \simreltratraces{\mixedsbirexecution}{\sbirvaluation}{k}{\sbirtoiml{\cdot}}{\imlexecution} $, we know that
	$ \symfreshv{\symvar{x_i}} $ or $ \imlfreshev{\imlvar{b}} $ steps in $ \mixedsbirexecution $ maps to $ \imlfreshev{\sbirvaluation(\symvar{x_i})} $ or $ \imlfreshev{\imlvar{b}} $ steps in $ \imlexecution $, respectively, for $ i \leq \rng{\mixedsbirexecution} $. The first statement is as follows:
	
	Based on the rule \textsc{INew} in the operational semantics of $ \imlsymb $ output processes~\cite[p.~23]{aizatulin2015verifying}, we get that
	the probability of generating the random number $ \imlvar{b} $ based on the $ \imlsymb $ transition relation (i.e. $\imltrans{}{\probdist{n}}{}{} $ semantics) with respect to security parameter $ n $ is $ \probdist{n} $. 
	Therefore, the $ \imlfreshev{\imlvar{b}} $ step has the probability $ \probdist{n} $ in both $ \mixedsbirexecution $ and $ \imlexecution $.
	Hence, for $ \imlfreshev{\imlvar{b}} $ steps, we have $ \imlpr(\trace{\imlexecution}) = \imlpr(\trace{\mixedsbirexecution}) $.
	
	Based on the rule RNG(${n}$) in the semantics of \imlsbir{},~\autoref{fig:mixedsymbiml}, we get that the probability of generating the symbolic random number $ \symvar{x} $ based on the mixed $ \imlsymb $ and symbolic transition relation (i.e., $\miximltrans{}{1}{\sbirvaluation}{}{}$ semantics) with respect to security parameter $ n $ is $ 1 $ but we translate $ \symfreshv{\symvar{x}} $ to $ \imlfreshev{\sbirvaluation(\symvar{x})} $ using interpretation $ \sbirvaluation $ which have the probability $ 1 \cdot \probdist{n} $.
	The number of $ \symfreshv{\symvar{x}} $ steps in $ \mixedsbirexecution $ are $ \rng{\mixedsbirexecution} $, hence, the probability of $ \symfreshv{\symvar{x}} $ steps is $ \imlpr(\trace{\mixedsbirexecution}) \cdot 2^{-n \cdot \rng{\mixedsbirexecution}}$.

	Therefore, we can conclude that $\imlpr(\trace{\imlexecution}) = \imlpr(\trace{\mixedsbirexecution}) \cdot 2^{-n \cdot \rng{\mixedsbirexecution}}$.
	
\end{proof}

\begin{lemma}[extra randomness]\label{lem:extra-randomness}
    For a $\birsymb$ program $\birprog$,
    an $ \imlsymb $ process $\imlprog$,
    any upper bound $ k \in \naturalnum $ on the number of RNG steps, 
    a security parameter $\secparam\in\naturalnum$,
    all $ \imlsymb $ traces $ \imlexecution \in \imlexecutions(\imltransprog)$,
    all interpretations $ \sbirvaluation $,
    all mixed $ \imlsymb $ and symbolic execution traces $\mixedsbirexecution\in\mixedsbirexecutions(\mixedbirprog, \sbirenv_{\sbirc{0}}[ \randommem \mapsto \randomsymvals_{\sbirc{k}}  , \randommemidx \mapsto 0])$
    with $ k = \rng{\mixedsbirexecution}$,
    and for any $l \ge k$,
    if
    $
        \simreltratraces{\mixedsbirexecution}{\sbirvaluation}{k}{\sbirtoiml{\cdot}}{\imlexecution}
        $
    , then, for all $\sbirvaluation'$ with
    $\domain{\sbirvaluation'}\subseteq \randomsymvals_{\sbirc{l}} \setminus
    \randomsymvals_{\sbirc{k}}$
    , we have, 
$
\simreltratraces{\mixedsbirexecution}{\sbirvaluation'|_{\randomsymvals_{\sbirc{k}}}}{k}{\sbirtoiml{\cdot}}{\imlexecution}
        $.
\end{lemma}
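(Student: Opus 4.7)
The plan is to prove this lemma by unfolding the trace-level simulation relation and showing that the relation only constrains the interpretation on symbolic values that actually appear in the states and events of $\mixedsbirexecution$, none of which are drawn from $\randomsymvals_{\sbirc{l}} \setminus \randomsymvals_{\sbirc{k}}$. Reading the statement charitably (and consistently with how it is used inside the main chain of inequalities in Theorem~\ref{thm:soundness}, which needs to enlarge the random tape from size $k$ to size $l$), the conclusion asserts that whatever assignment $\sbirvaluation'$ we choose on the \emph{unused} portion of the random tape, the simulation with the original $\sbirvaluation$ on $\randomsymvals_{\sbirc{k}}$ is unaffected; equivalently, the combined interpretation $\sbirvaluation \cup \sbirvaluation'$ still witnesses the simulation.

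First, I would invoke the \textsc{RNG}($n$) rule of Fig.~\ref{fig:mixedsymbiml} to establish, by induction on the length of $\mixedsbirexecution$, the invariant that after $i$ \textsc{RNG} transitions the environment $\sbirenv$ has read exactly the first $i$ entries of $\randommem$; therefore with $\rng{\mixedsbirexecution} = k$ every freshly generated random symbol is drawn from $\randomsymvals_{\sbirc{k}}$. Next I would unfold $\simreltratraces{\mixedsbirexecution}{\sbirvaluation}{k}{\sbirtoiml{\cdot}}{\imlexecution}$ into the conjunction of $\rng{\mixedsbirexecution} \le k$ (immediate since $k = \rng{\mixedsbirexecution}$) and the pointwise requirement that, for each $\mixedexec{\mixediml}{\sbirstate{}} \in \mixedsbirexecution$ and each event $\miximlsbirevent \in \mixedsbirexecution$, there are corresponding $\imlstate{}$ and $\imlevent$ in $\imlexecution$ with $\simreltra{\mixedexec{\mixediml}{\sbirstate{}}}{\sbirvaluation}{\sbirtoiml{.}}{\imlstate{}}$ and $\miximlsbirevent =_{\sbirvaluation} \imlevent$.

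The key observation is then that each such equality or state relation only evaluates $\sbirvaluation$ on symbols appearing in $\imagesym(\sbirenv)$ or in $\miximlsbirevent$, and by the side conditions of the rules in Fig.~\ref{fig:mixedsymbiml} (fresh names introduced by $\imlc{I}to\sbirc{SB}$, $\sbirc{SB}to\imlc{I}$, $\symcrypto{\cdot}$, $\symevent{\cdot}$, \textsc{loop}, and \textsc{RNG}) every such symbol is either a newly introduced fresh name or a member of $\randomsymvals_{\sbirc{k}}$; in no case does it belong to $\randomsymvals_{\sbirc{l}} \setminus \randomsymvals_{\sbirc{k}}$. Hence extending $\sbirvaluation$ with any assignment $\sbirvaluation'$ on $\randomsymvals_{\sbirc{l}} \setminus \randomsymvals_{\sbirc{k}}$ changes no evaluation that the simulation relation depends on, so $\simreltra{\mixedexec{\mixediml}{\sbirstate{}}}{\sbirvaluation \cup \sbirvaluation'}{\sbirtoiml{.}}{\imlstate{}}$ and $\miximlsbirevent =_{\sbirvaluation \cup \sbirvaluation'} \imlevent$ follow directly. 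A straightforward rewriting yields the stated conclusion.

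The main subtlety, and the only part that is more than definition chasing, is establishing the invariant that $\rng{\mixedsbirexecution} = k$ actually forces the execution to touch only the first $k$ slots of $\randommem$ (so that no symbol from the ``tail'' of the tape is silently referenced through the symbolic environment). This is where the induction over the trace, plus the bookkeeping imposed by $\randommemidx$ in the \textsc{RNG} rule, carries the weight of the argument; once this locality property is in hand, the rest of the proof is a short syntactic check on each clause of the simulation relation.
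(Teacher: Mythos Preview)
Your proposal is correct and follows essentially the same approach as the paper: both argue that because the \textsc{RNG} rule consumes entries of $\randommem$ in order via the counter $\randommemidx$, an execution with $\rng{\mixedsbirexecution}=k$ never introduces any symbol from $\randomsymvals_{\sbirc{l}}\setminus\randomsymvals_{\sbirc{k}}$ into the state or the events, so extending the interpretation on those unused slots leaves the simulation relation untouched. Your write-up is in fact more explicit than the paper's (you spell out the induction on trace length, unfold the simulation relation pointwise, and check the side conditions of all rules rather than just \textsc{RNG}), and your charitable reading of the awkwardly stated conclusion matches exactly how the lemma is consumed in the proof of Theorem~\ref{thm:soundness}.
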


\begin{proof}
	Since $ \simreltratraces{\mixedsbirexecution}{\sbirvaluation}{k}{\sbirtoiml{\cdot}}{\imlexecution} $, we know that the number of RNG steps in $\mixedsbirexecution$ is $ \rng{\mixedsbirexecution} \leq k $.
	Based on the semantics of \imlsbir{},
	\autoref{fig:mixedsymbiml}, we know that random symbolic values $ \symvar{x_i} \in  \randomsymvals_{\sbirc{k}} $ for $ 0 < i \leq k $ are used in RNG($n$) rule by order of $ i $. 
	The random symbolic values $ \symvar{x_j} \in  \randomsymvals_{\sbirc{l}} $ for $ l \ge j \ge k $ are not generated by RNG($n$) rule in~\autoref{fig:mixedsymbiml}.
	Therefore, the random symbolic values $ \symvar{x_j} \in  \randomsymvals_{\sbirc{l}} $ for $ l \ge j \ge k $ are not part of the mixed $ \imlsymb $ and symbolic state after RNG($n$) call.
	Hence, we can conclude that modification of $ \sbirvaluation $ does not affect trace equivalence and we have $ \simreltratraces{\mixedsbirexecution}{\sbirvaluation'}{k}{\sbirtoiml{\cdot}}{\imlexecution} $ such that $ \sbirvaluation \subseteq \sbirvaluation' $ and $\domain{\sbirvaluation'}\subseteq \randomsymvals_{\sbirc{l}} \setminus
	\randomsymvals_{\sbirc{k}}$.
	
\end{proof}

\begin{lemma}[Injective Event Trace Inclusion ($ \imlsymb $)]\label{lem:traceinclusion}
	For a $\birsymb$ program $\birprog$,
	an $ \imlsymb $ process $\imlprog$,
	any security parameter $n \in \naturalnum$,
	an upper bound on the number of nonces $k\in \naturalnum$,
    and for $\mixedsbirtraces= \sbirtracesproof$,
    there is an injective function $\zeta$ from
    \[
     \left\{
    \left(
        \begin{aligned}
    \mixedsbirtrace\in\mixedsbirtraces,
    \\
    \sbirvaluation: \randomsymvals_{\sbirc{k}} \to \imlvals^k_n 
    \end{aligned}
    \right)
    \suchthat
    \begin{aligned}
        & \rng{\mixedsbirtrace} = k \ \land  \\
        & \exists \sbirvaluation'.\
        \sbirvaluation'|_{\randomsymvals_{\sbirc{k}}} = \sbirvaluation
    \end{aligned}
\right\}        
    \]
    to $ \imltracesproof$
    such that
    \[
    \zeta(\mixedsbirtrace, \sbirvaluation)=\imltrace
    \implies
    \left(
     \begin{aligned}
    	& \exists \sbirvaluation', \mixedsbirexecution, \imlexecution. \\
    	& \quad \simreltratraces{\mixedsbirexecution}{\sbirvaluation'}{k}{\sbirtoiml{\cdot}}{\imlexecution} \ \land \\
    	& \trace{\mixedsbirexecution} = \mixedsbirtrace \ \land \ \trace{\imlexecution} = \imltrace
    \end{aligned}
	\right)
    \]
\end{lemma}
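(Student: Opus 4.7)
The plan is to construct $\zeta$ by invoking Thm.~\ref{thm:iml:traceeq} on any mixed symbolic execution realizing $\mixedsbirtrace$, and then to establish injectivity by showing (i) the IML trace determines the mixed symbolic trace via Lem.~\ref{lem:iml:stateeventeq} together with the determinism of the symbolic semantics, and (ii) the interpretation restricted to $\randomsymvals_{\sbirc{k}}$ is pinned down by Lem.~\ref{lem:trace-contains-randomness}.

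First I would define the map. Given $(\mixedsbirtrace, \sbirvaluation)$ in the domain, let $\sbirvaluation'$ be the witness extending $\sbirvaluation$ that exists by hypothesis, and pick any execution $\mixedsbirexecution \in \mixedsbirexecutions(\mixedbirprog, \sbirenv_{\sbirc{0}}[\randommem\mapsto \randomsymvals_{\sbirc{k}}, \randommemidx\mapsto 0])$ with $\trace{\mixedsbirexecution}=\mixedsbirtrace$. Because $\rng{\mixedsbirexecution}=k$, Thm.~\ref{thm:iml:traceeq} supplies an IML execution $\imlexecution \in \imltracesproof$ and an interpretation $\sbirvaluation''\supseteq\sbirvaluation'$ such that $\simreltratraces{\mixedsbirexecution}{\sbirvaluation''}{k}{\sbirtoiml{\cdot}}{\imlexecution}$; I set $\zeta(\mixedsbirtrace, \sbirvaluation)\defeq\trace{\imlexecution}$. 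The implication in the statement then holds by construction, with $\sbirvaluation''$, $\mixedsbirexecution$, $\imlexecution$ as witnesses. Well-definedness follows because two executions sharing the visible trace $\mixedsbirtrace$ differ only in interleavings of silent steps, and Lem.~\ref{lem:iml:stateeventeq} uniquely determines the IML event accompanying each visible symbolic step.

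For injectivity, I would suppose $\zeta(\mixedsbirtrace_1, \sbirvaluation_1)=\zeta(\mixedsbirtrace_2, \sbirvaluation_2)=\imltrace$ with witnesses $\mixedsbirexecution_i, \sbirvaluation''_i$. An induction on the length of $\imltrace$, using Lem.~\ref{lem:iml:stateeventeq} together with the fact that the crypto-aware rules of \autoref{sbirfuncalls} are deterministic once the branching choice and received bitstring (both recorded as events in $\imlexecution$) are fixed, shows that $\mixedsbirexecution_1$ and $\mixedsbirexecution_2$ generate the same event sequence, i.e., $\mixedsbirtrace_1=\mixedsbirtrace_2$. Applying Lem.~\ref{lem:trace-contains-randomness} with $d=k$ then yields $\sbirvaluation''_1|_{\randomsymvals_{\sbirc{k}}}=\sbirvaluation''_2|_{\randomsymvals_{\sbirc{k}}}$, and since $\sbirvaluation_i$ is exactly $\sbirvaluation''_i|_{\randomsymvals_{\sbirc{k}}}$, we conclude $\sbirvaluation_1=\sbirvaluation_2$.

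The hard part will be the first half of the injectivity argument: rigorously showing that the IML trace forces the mixed symbolic trace up to silent transitions. This relies on the symbolic execution introducing fresh names in a canonical order (the $\symvar{x_i}\notin\image{\sbirenv}$ side conditions and the monotonically increasing $\randommemidx$-counter in \autoref{sbirfuncalls}), so that matching IML events dictate matching symbolic events. The case analysis across the rules for RNG, attacker I/O, library calls, loop summaries, and conditional branching is routine but delicate, and is where most of the bookkeeping will sit.
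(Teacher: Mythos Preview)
Your construction of $\zeta$ via Thm.~\ref{thm:iml:traceeq} matches the paper's (the paper also skolemizes that theorem to obtain the map), but your injectivity argument takes a genuinely different route. The paper argues by contradiction via a flat case analysis: assuming two distinct pairs map to the same $\imltrace$, it splits on whether the first discrepancy is in the symbolic events or in the interpretations, and in each sub-case matches events position-wise against $\imltrace$ to derive a contradiction, invoking the canonical naming of fresh symbolic values explicitly at the leaf of the case tree. You instead factor the argument into two independent steps: first recover $\mixedsbirtrace_1=\mixedsbirtrace_2$ from determinism of the crypto-aware rules under canonical fresh-name generation, then invoke Lem.~\ref{lem:trace-contains-randomness} to pin down $\sbirvaluation$. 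Your decomposition is cleaner and reuses an existing lemma; the paper's version is more self-contained but re-derives the content of Lem.~\ref{lem:trace-contains-randomness} inline.

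Two technical points to tighten. First, Lem.~\ref{lem:trace-contains-randomness} is stated for a single fixed \emph{execution} $\mixedsbirexecution$, not merely a fixed event trace; after you establish $\mixedsbirtrace_1=\mixedsbirtrace_2$ you still have possibly distinct witnesses $\mixedsbirexecution_1,\mixedsbirexecution_2$ and $\imlexecution_1,\imlexecution_2$, so you cannot apply the lemma verbatim. The fix is easy: the lemma's proof only reads off the $\symfreshv{\symvar{x_i}}$/$\imlfreshev{\imlvar{b_i}}$ events position-wise, so it lifts immediately to the trace level. Second, Thm.~\ref{thm:iml:traceeq} as stated produces \emph{some} interpretation, not one extending a given $\sbirvaluation'$; the stronger form you use is implicit in its proof via Lem.~\ref{lem:iml:stateeventeq:step0} (which quantifies over all initial $\sbirvaluation$) and Lem.~\ref{lem:iml:stateeventeq} (which extends monotonically), so you should cite those directly. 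Finally, be explicit that the branching choice is \emph{not} recorded as an $\imlsymb$ event (the $\imlc{if}$ rule is silent); your determinism argument still goes through because with canonical fresh-name generation both branches would produce identical symbolic event names whenever they produce identical $\imlsymb$ events, but this is the step that actually carries the weight and deserves to be spelled out rather than folded into ``routine bookkeeping.''
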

\begin{proof}
	 From the skolemization of~\autoref{thm:iml:traceeq}, we define $ \zeta' : \mixedsbirexecutions \times \sbirvaluations \to \imlexecutions $ such that $ \zeta'(\mixedsbirexecution,\sbirvaluation')=\imlexecution $ implies $ \simreltratraces{\mixedsbirexecution}{\sbirvaluation'}{k}{\sbirtoiml{\cdot}}{\imlexecution} $.
	We define $ \zeta : \mixedsbirtraces \times \sbirvaluations \to \imltraces $ such that  $ \zeta(\mixedsbirtrace, \sbirvaluation)=\imltrace $.
	Hence, we choose arbitrary $ \mixedsbirexecution $ and $ \sbirvaluation' $ such that $ \trace{\mixedsbirexecution} = \mixedsbirtrace $
	and $ \trace{\zeta'(\mixedsbirexecution,\sbirvaluation')} = \trace{\imlexecution} = \imltrace $.
	
	Then, we prove the  function $\zeta$ is injective by contradiction and assume for arbitrary $ \mixedsbirtrace $, $ {\mixedsbirtrace}' $, $ \sbirvaluation $, and $ \sbirvaluation' $ such that $ \mixedsbirtrace \neq {\mixedsbirtrace}' \lor \sbirvaluation \neq \sbirvaluation' $, we have $ \zeta(\mixedsbirtrace,\sbirvaluation)=\imltrace = \zeta({\mixedsbirtrace}',\sbirvaluation') $.
	Therefore, $ \sbirvaluation(\mixedsbirtrace) = \sbirvaluation'({\mixedsbirtrace}') $ and for any event $ \miximlsbirevent $ where $ \mixedsbirtrace $ and $ {\mixedsbirtrace}' $ are different, we have $ \sbirvaluation(\miximlsbirevent) = \sbirvaluation'({\miximlsbirevent}') $.
	Hence, there exist two cases:
	\begin{enumerate}
		\item $ \miximlsbirevent \neq {\miximlsbirevent}' $
		\begin{enumerate}
		\item $ \sbirvaluation = \sbirvaluation' $:
		We assume that there exists a symbolic value $ \symvar{d}$ such that $ \sbirvaluation(\symevent{\symvar{d}}) \neq \sbirvaluation(\symfreshv{\symvar{d}})$.
		Based on $ \simreltratraces{\mixedsbirexecution}{\sbirvaluation}{k}{\sbirtoiml{\cdot}}{\imlexecution} $, we have $ 	\imlev{\sbirvaluation(\symvar{d})} \in \imltrace $ at the position $ j $ and $\symevent{\symvar{d}} \in$ $\mixedsbirtrace $ at the same position $ j $. 
		Similarly, we have $\imlfreshev{\sbirvaluation(\symvar{d})} \in \imltrace $ at the position $ j $ and $ \symfreshv{\symvar{d}} \in$ ${\mixedsbirtrace}' $ at the same position $ j $. 
		Hence, we get that we have two different $ \imltrace $ (i.e., at the position $ j $). 
		Since $ \miximlsbirevent \neq {\miximlsbirevent}' $ gives a contradiction, we deduce that $  \miximlsbirevent = {\miximlsbirevent}' $.\\
		\item $ \sbirvaluation \neq \sbirvaluation' $:
		We assume that there exists a symbolic value $ \symvar{d}$ such that $ \sbirvaluation(\symevent{\symvar{d}}) \neq \sbirvaluation'(\symfreshv{\symvar{d}})$.
		Based on $ \simreltratraces{\mixedsbirexecution}{\sbirvaluation}{k}{\sbirtoiml{\cdot}}{\imlexecution} $, we have $ 	\imlev{\sbirvaluation(\symvar{d})} $  $ \in \imltrace $ at the position $ j $ and $ \symevent{\symvar{d}} \in$ $\mixedsbirtrace $ at the same position $ j $. 
		Based on $ \simreltratraces{{\mixedsbirexecution}'}{\sbirvaluation'}{k}{\sbirtoiml{\cdot}}{\imlexecution} $, we have $\imlfreshev{\sbirvaluation'(\symvar{d})} \in \imltrace $ at the position $ j $ and $ \symfreshv{\symvar{d}} \in$ ${\mixedsbirtrace}' $ at the same position $ j $. 
		Hence, we get that we have two different $ \imltrace $ (i.e., at the position $ j $). 
		Since $ \miximlsbirevent \neq {\miximlsbirevent}' $ gives a contradiction, we deduce that $  \miximlsbirevent = {\miximlsbirevent}' $.\\
		\end{enumerate}
		\item $ \sbirvaluation \neq \sbirvaluation' \ \land \  \miximlsbirevent = {\miximlsbirevent}' $
		\begin{enumerate}
			\item $ \miximlsbirevent $ is a $ \symfreshv{\cdot}  $ event:
			We assume that there exists a random symbolic value $ \symvar{x_i} \in  \randomsymvals_{\sbirc{k}} $ for $ 0 < i \leq k $ such that $ \sbirvaluation(\symfreshv{\symvar{x_i}}) \neq \sbirvaluation'(\symfreshv{\symvar{x_i}})$.
			Based on $ \simreltratraces{\mixedsbirexecution}{\sbirvaluation}{k}{\sbirtoiml{\cdot}}{\imlexecution} $, we have $ 	\imlfreshev{\sbirvaluation(\symvar{x_i})} \in \imltrace $ at the position $ j $ and $ \symfreshv{\symvar{x_i}} \in$ $\mixedsbirtrace $ at the same position $ j $. Similarly, for $ {\mixedsbirtrace}' $ and $ \sbirvaluation' $ and the same position $ j $. Therefore, $ \sbirvaluation(\symvar{x_i}) = \sbirvaluation'(\symvar{x_i}) $. Hence, from $ \symvar{x_i} \in  \randomsymvals_{\sbirc{k}} $, we get that $ \sbirvaluation(\symfreshv{\symvar{x_i}}) = \sbirvaluation'(\symfreshv{\symvar{x_i}})$. Since $ \sbirvaluation \neq \sbirvaluation' $ gives a contradiction, we deduce that $ \sbirvaluation = \sbirvaluation' $.\\
			\item $ \miximlsbirevent $ is a $ \symevent{\cdot} $ event:
			We assume that there exists a symbolic value $ \symvar{d}$ such that $ \sbirvaluation(\symevent{\symvar{d}}) \neq \sbirvaluation'(\symevent{\symvar{d}})$.
			Based on $ \simreltratraces{\mixedsbirexecution}{\sbirvaluation}{k}{\sbirtoiml{\cdot}}{\imlexecution} $, we have $ 	\imlev{\sbirvaluation(\symvar{d})} \in \imltrace $ at the position $j$ and $ \symevent{\symvar{d}} \in$ $\mixedsbirtrace $ at the same position $ j $. Similarly, for $ {\mixedsbirtrace}' $ and $ \sbirvaluation' $ and the same position $ j $. Because we generate symbolic values in a canonical form, $ \sbirvaluation(\symvar{d}) = \sbirvaluation'(\symvar{d}) $. Therefore, we get that $ \sbirvaluation(\symevent{\symvar{d}}) = \sbirvaluation'(\symevent{\symvar{d}})$. Since $ \sbirvaluation \neq \sbirvaluation' $ gives a contradiction, we deduce that $ \sbirvaluation = \sbirvaluation' $.
		\end{enumerate}	
	\end{enumerate}
	Therefore, we can conclude that the function $\zeta$ is injective.

\end{proof}

\begin{lemma}[Injective Event Trace Inclusion (symbolic
    execution)]\label{lem:traceinclusion-bir}
    For a $\birsymb$ program $\birprog$,
    an $ \imlsymb $ process $\imlprog$,
    any security parameter $n \in \naturalnum$,
    an upper bound on the number of nonces $k\in \naturalnum$,
    and for $\mixedbirtraces= \birtracesproof$,
    there is an injective function $\zeta$ from

$ \{ ( \mixedbirtrace, \randomvals_{\imlc{k}} ) \mid  \randomvals_{\imlc{k}} \in \imlvals_n^k \ \land $ \\
$ \mixedbirtrace \in \birtracesproof \}  $

    to 
    \[
     \left\{
    \left(
        \begin{aligned}
    \mixedsbirtrace\in\mixedsbirtraces,
    \\
    \sbirvaluation: \randomsymvals_{\sbirc{k}} \to \imlvals^k_n 
    \end{aligned}
    \right)
    \suchthat
    \begin{aligned}
        \text{s.t.~}\exists \sbirvaluation'. &
        \sbirvaluation'|_{\randomsymvals_{\sbirc{k}}} = \sbirvaluation
    \end{aligned}
\right\}
    \]
    such that 
  
    $ \zeta( \mixedbirtrace, \randomvals_{\imlc{k}} ) = (\mixedsbirtrace,\sbirvaluation) $\\
     
    $ \implies \left(
    \begin{aligned}
    	& \imlpr(\mixedbirtrace) \cdot 2^{-n \cdot k } = \imlpr(\mixedsbirtrace) \cdot 2^{-n \cdot k } \ \land \\
    	& \trace{\mixedbirexecution} = \mixedbirtrace \ \land \ \trace{\mixedsbirexecution} = \mixedsbirtrace
    \end{aligned}
    \right) $
    
\end{lemma}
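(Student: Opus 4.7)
The plan is to follow the same pattern as Lemma~\ref{lem:traceinclusion} but in the other direction, substituting Theorem~\ref{thm:bir:traceeq} for Theorem~\ref{thm:iml:traceeq}. First I would skolemize Theorem~\ref{thm:bir:traceeq} to obtain a map $\zeta'$ that takes each concrete \imlbir{} execution $\mixedbirexecution$ (with at most $k$ RNG steps) starting from $\birenv_{\birc{0}}[\randommem\mapsto\randomvals_{\imlc{k}},\randommemidx\mapsto 0]$ and produces an \imlsbir{} execution $\mixedsbirexecution$ together with a witnessing interpretation $\sbirvaluation'$ satisfying $\simreltraces{\mixedbirexecution}{\sbirvaluation'}{k}{\mixedsbirexecution}$. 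I would then define $\zeta(\mixedbirtrace,\randomvals_{\imlc{k}})\defeq(\trace{\mixedsbirexecution},\sbirvaluation'|_{\randomsymvals_{\sbirc{k}}})$ by choosing any $\mixedbirexecution$ whose trace is $\mixedbirtrace$ and setting $\mixedsbirexecution$ and $\sbirvaluation'$ by $\zeta'$. Note that by construction of the initial states in \imlbir{} and \imlsbir{}, the restriction of the witnessing interpretation to $\randomsymvals_{\sbirc{k}}$ is forced to satisfy $\sbirvaluation'(\randomsymvals_{\sbirc{k}}[i])=\randomvals_{\imlc{k}}[i]$ for all $i\le k$, so this restriction really carries the tape as an element of $\imlvals^k_n$, which is the required codomain.

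For injectivity, suppose $\zeta(\mixedbirtrace_1,\randomvals_{\imlc{k},1})=\zeta(\mixedbirtrace_2,\randomvals_{\imlc{k},2})=(\mixedsbirtrace,\sbirvaluation)$. From the observation above, $\sbirvaluation|_{\randomsymvals_{\sbirc{k}}}$ completely determines the concrete tape through the canonical indexing of $\randomsymvals_{\sbirc{k}}$, so $\randomvals_{\imlc{k},1}=\randomvals_{\imlc{k},2}$. For the trace equality, I would use that $\simreltraces{\mixedbirexecution_j}{\sbirvaluation_j}{k}{\mixedsbirexecution_j}$ forces $\miximlbirevent=_{\sbirvaluation_j}\miximlsbirevent$ event-by-event, so each concrete event is uniquely recoverable by applying $\sbirvaluation$ to the corresponding symbolic event (crypto values, event arguments, I/O payloads, RNG outputs, and loop counters are all symbolic expressions that the interpretation instantiates). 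Hence the two concrete traces coincide with $\sbirvaluation(\mixedsbirtrace)$ and in particular with each other; the argument mirrors the case split by event type in Lemma~\ref{lem:traceinclusion} but is simpler because here the concrete side is fully determined by the symbolic side plus the interpretation.

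For the probability equality $\imlpr(\mixedbirtrace)\cdot 2^{-nk}=\imlpr(\mixedsbirtrace)\cdot 2^{-nk}$ the factor $2^{-nk}$ is identical on both sides, so it suffices to show $\imlpr(\mixedbirtrace)=\imlpr(\mixedsbirtrace)$. All $\birsymb$ transitions (including the \textsc{RNG} rule in Section~\ref{rng-fun}) and all $\sbirsymb$ transitions in Figure~\ref{sbirfuncalls} fire with probability $1$; only the $\imlsymb$-side \textsc{INew} rule from Figure~\ref{fig:imltranssemantics} contributes a factor $2^{-n}$. Since the simulation of Theorem~\ref{thm:bir:traceeq} preserves the labels of the mixed semantics step-by-step (including the $\imlc{I}to\birc{B}$/$\sbirc{SB}to\imlc{I}$ and normal IML transitions), \imlbir{} and \imlsbir{} perform the same number of \textsc{INew} steps, so both probabilities equal $2^{-n\cdot\#\textsc{INew}}$.

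The main obstacle I anticipate is the bookkeeping around the tape: making sure that the witnessing $\sbirvaluation'$ returned by Theorem~\ref{thm:bir:traceeq}, when restricted to $\randomsymvals_{\sbirc{k}}$, is truly in the canonical bijection with the tape used in the concrete initial state, including at positions not read by the execution (i.e.\ indices above $\rng{\mixedbirexecution}$). I plan to handle this by arguing, as in Lemma~\ref{lem:extra-randomness}, that the simulation relation is indifferent to the value of $\sbirvaluation'$ on unread symbolic randomness, so $\zeta'$ can be picked to extend the canonical tape-interpretation on all of $\randomsymvals_{\sbirc{k}}$. Once this is set up, both injectivity and probability preservation follow directly.
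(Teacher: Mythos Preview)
Your proposal is correct and follows essentially the same approach as the paper: skolemize Theorem~\ref{thm:bir:traceeq} to obtain $\zeta'$, project to event traces to define $\zeta$, argue injectivity by recovering both the tape and the concrete events from $(\mixedsbirtrace,\sbirvaluation)$, and establish the probability equality by observing that only \textsc{INew} steps carry non-unit weight and that the simulation preserves their count.

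The one noteworthy difference is how the tape is recovered in the injectivity argument. The paper argues via the $\freshv{\cdot}$/$\symfreshv{\cdot}$ events that appear in the traces: if $\randomvals_{\imlc{k}}\neq\randomvals'_{\imlc{k}}$ at some index $i$, the corresponding fresh events at the same position in $\mixedsbirtrace$ force $\sbirvaluation(\symvar{x_i})$ to equal both tape values, a contradiction. You instead recover the entire tape directly from $\sbirvaluation|_{\randomsymvals_{\sbirc{k}}}$ using the initial-state simulation (Lemma~\ref{lem:bir:stateeventeq:step0}) and the monotonicity $\sbirvaluation\subseteq\sbirvaluation'$ built into Theorem~\ref{thm:bir:traceeq}. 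Your route has the advantage that it cleanly covers tape positions beyond $\rng{\mixedbirexecution}$, which never surface as $\freshv{\cdot}$ events; the paper's argument is tacit on this point, whereas you make it explicit via the Lemma~\ref{lem:extra-randomness}-style freedom to fix $\sbirvaluation'$ on unread symbolic randomness. Both arguments are valid and the overall structure is the same.
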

\begin{proof}
	From the skolemization of~\autoref{thm:bir:traceeq}, we define $ \zeta' : \mixedbirexecutions \times \imlvals^k_n \to \mixedsbirexecutions \times \sbirvaluations $ such that $ \zeta'( \mixedbirexecution, \randomvals_{\imlc{k}} )
	= (\mixedsbirexecution,\sbirvaluation) $ implies $ \simreltraces{\mixedbirexecution}{\sbirvaluation}{k}{\mixedsbirexecution} $.
	We define $ \zeta : \mixedbirtraces \times \imlvals^k_n \to \mixedsbirtraces \times \sbirvaluations $ such that  $ \zeta( \mixedbirtrace, \randomvals_{\imlc{k}} )
	= (\mixedsbirtrace,\sbirvaluation) $.
	Hence, we choose arbitrary $ \mixedbirexecution $ and $ \randomvals_{\imlc{k}} $ such that $ \trace{\mixedbirexecution} = \mixedbirtrace $
	and $ \trace{\zeta'( \mixedbirexecution, \randomvals_{\imlc{k}} )} = \trace{\mixedsbirexecution,\sbirvaluation} = \mixedsbirtrace $.
	
	Then, we prove the  function $\zeta$ is injective by contradiction and assume for arbitrary $ \mixedbirtrace $, $ {\mixedbirtrace}' $, $ \randomvals_{\imlc{k}} $, and $ {\randomvals}'_{\imlc{k}} $ such that $ \mixedbirtrace \neq {\mixedbirtrace}' \lor \randomvals_{\imlc{k}} \neq {\randomvals}'_{\imlc{k}} $, we have $ \zeta( \mixedbirtrace, \randomvals_{\imlc{k}} )
	= (\mixedsbirtrace,\sbirvaluation) = \zeta( {\mixedbirtrace}', {\randomvals}'_{\imlc{k}} )$.
	Therefore, we get that $ ( \mixedbirtrace, \randomvals_{\imlc{k}} ) = \sbirvaluation(\mixedsbirtrace) = ( {\mixedbirtrace}', {\randomvals}'_{\imlc{k}} )$.
	Hence, there exist two cases:

	\begin{enumerate}
		\item $ \randomvals_{\imlc{k}} \neq {\randomvals}'_{\imlc{k}} $ \\
		We assume that there exists a random value $ \imlvar{x_i} \in  \randomvals_{\imlc{k}} $ and a random value $ \imlvar{x_i}' \in  {\randomvals}'_{\imlc{k}} $ for $ 0 < i \leq k $ such that $ \imlvar{x_i} \neq \imlvar{x_i}' $.
		Based on $ \simreltraces{\mixedbirexecution}{\sbirvaluation}{k}{\mixedsbirexecution} $, we have $	\symfreshv{\symvar{x_i}} \in$ $ \mixedsbirtrace $ at the position $ j $ and $ \freshv{\imlvar{x_i}} \in \mixedbirtrace$ at the same position $ j $ such that $ \sbirvaluation(\symvar{x_i}) = \imlvar{x_i}$.
		Based on $ \simreltraces{{\mixedbirexecution}'}{\sbirvaluation}{k}{\mixedsbirexecution} $, we have $	\symfreshv{\symvar{x_i}} \in$ $ \mixedsbirtrace $ at the position $ j $ and $ \freshv{\imlvar{x_i}'} \in {\mixedbirtrace}'$ at the same position $ j $ such that $ \sbirvaluation(\symvar{x_i}) = \imlvar{x_i}'$.
		Therefore, we get that $ \imlvar{x_i} = \sbirvaluation(\symvar{x_i}) = \imlvar{x_i}' $. 
		Since $ \randomvals_{\imlc{k}} \neq {\randomvals}'_{\imlc{k}}  $ gives a contradiction, we deduce that $ \randomvals_{\imlc{k}} = {\randomvals}'_{\imlc{k}}  $.\\
		
		\item $ \mixedbirtrace \neq {\mixedbirtrace}' \land \randomvals_{\imlc{k}} = {\randomvals}'_{\imlc{k}} $\\
		Let $ \miximlbirevent $ and $ {\miximlbirevent}' $ be the earliest mixed $ \imlsymb $ and $\birsymb$ events where $ \mixedbirtrace $ and $ {\mixedbirtrace}' $ are different. 
		Based on $ \simreltraces{\mixedbirexecution}{\sbirvaluation}{k}{\mixedsbirexecution} $, we have $	\miximlsbirevent \in \mixedsbirtrace $ at the position $ j $ and $ \miximlbirevent \in \mixedbirtrace$ at the same position $ j $ such that $ \sbirvaluation(\miximlsbirevent) = \miximlbirevent$.
		Based on $ \simreltraces{{\mixedbirexecution}'}{\sbirvaluation}{k}{\mixedsbirexecution} $, we have $	\miximlsbirevent \in \mixedsbirtrace $ at the position $ j $ and $ {\miximlbirevent}' \in {\mixedbirtrace}'$ at the same position $ j $ such that $ \sbirvaluation(\miximlsbirevent) = {\miximlbirevent}'$.
		Therefore, we get that $ \miximlbirevent = \sbirvaluation(\miximlsbirevent) = {\miximlbirevent}'$. 
		Since $\mixedbirtrace \neq {\mixedbirtrace}'$ gives a contradiction, we deduce that $ \mixedbirtrace = {\mixedbirtrace}'  $.
	\end{enumerate}

		Therefore, we can conclude that the function $\zeta$ is injective.
		Since  $ \simreltraces{\mixedbirexecution}{\sbirvaluation}{k}{\mixedsbirexecution} $, we know that
		$ \symfreshv{\symvar{x_i}} $ or $ \imlfreshev{\imlvar{b}} $ steps in $ \mixedsbirexecution $ maps to $ \freshv{\sbirvaluation(\symvar{x_i})} $ or $ \imlfreshev{\imlvar{b}} $ steps in $ \mixedbirexecution $, respectively, for $ i \leq k $. 
		The first statement is as follows:
		
		Based on the rule \textsc{INew} in the operational semantics of $ \imlsymb $ output processes~\cite[p.~23]{aizatulin2015verifying}, we get that the probability of generating the random number $ \imlvar{b} $ based on the $ \imlsymb $ transition relation (i.e. $\imltrans{}{\probdist{n}}{}{} $ semantics) with respect to security parameter $ n $ is $ \probdist{n} $. 
		Therefore, the $ \imlfreshev{\imlvar{b}} $ step has the probability $ \probdist{n} $ in both $ \mixedbirtrace $ and $ \mixedsbirtrace $.
		Hence, for $ \imlfreshev{\cdot} $ steps, we have $ \imlpr(\mixedbirtrace) = \imlpr(\mixedsbirtrace) $.
		
		Based on the rule RNG($n$) in the semantics of \imlsbir{},~\autoref{fig:mixedsymbiml}, we get that the probability of generating the symbolic random number $ \symvar{x} $ based on the mixed $ \imlsymb $ and symbolic transition relation (i.e., $\miximltrans{}{1}{\sbirvaluation}{}{}$ semantics) with respect to security parameter $ n $ is $ 1 $ but we map $ \symfreshv{\symvar{x}} $ to $ \freshv{\sbirvaluation(\symvar{x})} $ using interpretation $ \sbirvaluation $ which have the probability $ 1 \cdot \probdist{n} $.
		The number of $ \symfreshv{\symvar{x}} $ steps in $ \mixedsbirtrace $ are $ k $, hence, the probability of $ \symfreshv{\symvar{x}} $ steps is $ \imlpr(\mixedsbirtrace) \cdot 2^{-n \cdot k }$.

		Based on the rule normal in the mixed semantics of $ \imlsymb $ and $ \birsymb $~\autoref{fig:mixedbiriml} and ~\autoref{rng-fun}, we get that the probability of generating the random number $ \var{x} $ based on the mixed $ \imlsymb $ and $\birsymb$ transition relation (i.e., $\imltrans{}{1}{}{}$ semantics) with respect to security parameter $ n $ is $ 1 $ but we extracting $ \var{x} $ from the random memory $\randommem$ with length $ n $ which have the probability $ 1 \cdot \probdist{n} $.
		The number of $ \freshv{\imlvar{x}} $ steps in $ \mixedbirtrace $ are $ k $, hence, the probability of $ \freshv{\imlvar{x}} $ steps is $ \imlpr(\mixedbirtrace) \cdot 2^{-n \cdot k }$.
		
		Therefore, we can conclude that $ \imlpr(\mixedbirtrace) \cdot 2^{-n \cdot k } = \imlpr(\mixedsbirtrace) \cdot 2^{-n \cdot k }$.

\end{proof}

\subsection{Extension for Multi-Programs}
\label{multi-programs}

In this section, we extend our results for multiple programs by establishing  theorems~\ref{thm:iml:traceeqm} to~\ref{thm:soundnessm}.
Given an $\imlsymb$ process $ \imlprog $ and implementations $\birprogset$ of protocol participants in $ \birsymb $, we denote $ \mixedbirprogs $ which parties running in parallel with an $\imlsymb$ attacker and communicate through a channel.
For $\birprogset$ which are symbolically executed and translated into $\imlsymb$ processes $ \sbirtoimlprogset $, the $\imlsymb$ process $ \imltransprogs $ describes the parallel composition of $ m $ parties in the presence of an attacker.
The following theorem indicates that  \imlsbir{} and $\imlsymb$ preserve the simulation relation for $ m $ programs. 

\begin{theorem}[\imlsbir{}-$\imlsymb$ Trace Inclusion$ ^* $]
	\label{thm:iml:traceeqm}
	Let $\birprogset$ be $\birsymb$ programs,
	$\imlprog$ be an $\imlsymb$ process and
	$ k \in \naturalnum $ is any upper bound on RNG steps, then,
	for all mixed $\imlsymb$ and symb. execution traces
	$\mixedsbirexecution\in\mixedsbirexecutions(\mixedbirprogs, \sbirenv_{\sbirc{0}}[ \randommem \mapsto \randomsymvals_{\sbirc{k}}  , \randommemidx \mapsto 0])$
	such that 
	$ \rng{\mixedsbirexecution} \leq k $,
	there exist an $\imlsymb$ trace 
	$\imlexecution \in \imlexecutions(\imltransprogs) $
	and an interpretation $\sbirvaluation$ such that $\simreltratraces{\mixedsbirexecution}{\sbirvaluation}{\sbirtoiml{.}}{k}{\imlexecution}$.
\end{theorem}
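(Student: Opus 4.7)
The plan is to mirror the structure of the proof of Theorem~\ref{thm:iml:traceeq}, i.e., to proceed by induction on the length of the mixed execution trace $\mixedsbirexecution$, after first lifting the simulation relation $\simreltra{}{\sbirvaluation}{\sbirtoiml{.}}{}$ from a single $\birsymb$ program to a family $\birprog_{\birc{1}},\dots,\birprog_{\birc{m}}$. The key observation is that the label sets of $\birprog_{\birc{1}},\dots,\birprog_{\birc{m}}$ are pairwise disjoint (as assumed earlier in Sec.~\ref{sec:imlsoundness}), so at any \imlsbir{} state the active $\birsymb$ component is uniquely determined by the current $\birpc$. This means that a mixed state of $\mixedbirprogs$ decomposes into (i) the $\imlsymb$ environment and process multiset, and (ii) one $\sbirsymb$-state per participant, each sitting at some node of its own symbolic execution tree $\tree_{\sbirc{j}}$, with the translation $\sbirtoiml{\tree_{\sbirc{j}}}$ appearing as a distinguished subprocess in the $\imlsymb$ multiset.

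I would first restate and prove the base case (analog of Lem.~\ref{lem:iml:stateeventeq:step0}) for $m$ programs: the initial state $\mixedexec{\mixediml}{\sbirstate{0}}$ of $\mixedbirprogs$ contains $m$ copies of $\tuple{True,\sbirenv_{\sbirc{0}}[\randommem\mapsto\randomsymvals_{\sbirc{k}},\randommemidx\mapsto 0],\birpc_{\birc{0}}^{\birc{j}}}$, and by construction of $\tree_{\sbirc{j}}$ and the translation $\sbirtoiml{\cdot}$ we have $\sbirtoiml{\tree_{\sbirc{j}}} = \sbirtoiml{\birprog_{\birc{j}}}$, so $\imlstate{0}.\imlmultisetprocs$ contains exactly the translated processes in $\imltransprogs$ and the simulation relation, extended componentwise, holds for any $\sbirvaluation$.

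For the inductive step I would prove the analog of Lem.~\ref{lem:iml:stateeventeq} by a case split on the label class of the active $\birpc$ of the transition $\miximltrans{\miximlsbirevent}{p}{\sbirvaluation}{\mixedexec{\mixediml}{\sbirstate{i}}}{\mixedexec{\mixediml}{\sbirstate{j}}}$. For the local rules (\textsc{normal}, \textsc{event}, \textsc{library}, \textsc{RNG}, \textsc{loop}, branching), only one component $\birprog_{\birc{j}}$ moves and the argument is literally the single-program proof, because disjointness of label sets means the other components' $\sbirsymb$ subprocesses in $\imlstate{i}.\imlmultisetprocs$ are untouched; we only have to observe that the simulation relation is preserved componentwise. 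For the communication rules $\imlc{I}to\sbirc{SB}$ and $\sbirc{SB}to\imlc{I}$, the single-program proof already locates a matching $\imlc{in}/\imlc{out}$ process in the $\imlsymb$ multiset via $\mathit{reduce}$, and in the multi-program setting this partner may now be either the $\imlsymb$ attacker or the translation $\sbirtoiml{\tree_{\sbirc{j'}}}$ of another honest participant $\birprog_{\birc{j'}}$ ($j'\neq j$); in either case the matching is unique by the channel-identifier side-condition, and the resulting $\imlsymb$ step is admitted by the operational semantics of $\imlsymb$~\cite[p.~23]{aizatulin2015verifying}. Thus the relation is preserved and the events agree up to $\sbirvaluation'$.

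Having these two lemmas, the theorem follows by a straightforward induction on trace length identical to the proof of Theorem~\ref{thm:iml:traceeq}: the base case invokes the generalized Lem.~\ref{lem:iml:stateeventeq:step0}, and the inductive step applies the generalized Lem.~\ref{lem:iml:stateeventeq} to extend both the interpretation $\sbirvaluation\subseteq\sbirvaluation'$ and the $\imlsymb$ trace by one step while preserving $\simreltratraces{}{\sbirvaluation'}{\sbirtoiml{.}}{k}{}$ and $\miximlsbirevent=_{\sbirvaluation'}\imlevent$. The only genuinely new bookkeeping over the single-program proof is the case analysis on communication: making precise that the $\imlsymb$ partner of a $\sbirsymb$ output/input may be another translated binary rather than the attacker. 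I expect this to be the main, though still routine, obstacle, since it requires carefully spelling out that $\sbirtoiml{\cdot}$ distributes over parallel composition of the $\birprog_{\birc{j}}$ and that the disjointness of label sets is preserved along any mixed step, both of which are immediate from the definitions of $\sbirtoiml{\cdot}$ in Fig.~\ref{fig:sbirtoiml} and of the label partition in Sec.~\ref{sec:cryptoawarebir}.
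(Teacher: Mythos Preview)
Your proposal is correct but takes a different route from the paper. You redo the induction on trace length from scratch, lifting Lemmas~\ref{lem:iml:stateeventeq:step0} and~\ref{lem:iml:stateeventeq} to the $m$-program setting and adding an explicit case for communication between two translated participants. The paper instead argues compositionally: it invokes Theorem~\ref{thm:iml:traceeq} as a black box for each $\birprog_{\birc{i}}$ separately, obtaining interpretations $\sbirvaluation_i$, and then asserts that these combine into a single $\sbirvaluation \supseteq \sbirvaluation_i$ witnessing the simulation for the whole system $\mixedbirprogs$. What the paper's approach buys is brevity and reuse of the single-program result; what yours buys is a self-contained argument that actually discharges the new case (cross-participant communication) which the paper's compositional step leaves implicit. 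In particular, your observation that in the multi-program setting the $\imlsymb$ partner of an $\sbirc{SB}to\imlc{I}$ or $\imlc{I}to\sbirc{SB}$ step may be another $\sbirtoiml{\tree_{\sbirc{j'}}}$ rather than the attacker is exactly the content that the paper's ``then we can conclude'' hides; your approach makes this explicit, at the cost of repeating the per-label case analysis.
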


\begin{proof}
	By~\autoref{thm:iml:traceeq}, for each $ 0 < i \leq m $, we have that for all mixed $\imlsymb$ and symb. execution traces
	$\mixedsbirexecution_i \in\mixedsbirexecutions_i(\mixedbirprogidx{\birprog_{\birc{i}}}, \sbirenv_{\sbirc{0}}[ \randommem \mapsto \randomsymvals_{\sbirc{k}} , \randommemidx \mapsto 0])$, exist an $\imlsymb$ trace 
	$\imlexecution_{\imlc{i}} \in \imlexecutions_{\imlc{i}}(\imltransprogidx{\birprog_{\birc{i}}}) $
	and an interpretation $\sbirvaluation_i$ s.t. $\simreltratraces{\mixedsbirexecution_i}{\sbirvaluation_i}{\sbirtoiml{.}}{k}{\imlexecution_{\imlc{i}}}$.
	Then, we can conclude that for all mixed $\imlsymb$ and symb. execution traces
	$\mixedsbirexecution\in\mixedsbirexecutions(\mixedbirprogs, $  $\sbirenv_{\sbirc{0}}[ \randommem \mapsto \randomsymvals_{\sbirc{k}}  , \randommemidx \mapsto 0])$,
	there exist an $\imlsymb$ trace $\imlexecution \in \imlexecutions(\imlprog \{\sbirtoiml{\birprog_{\birc{1}}} , ... , $  $\sbirtoiml{\birprog_{\birc{m}}} 
	\}) $
	and a $\sbirvaluation$ such that $\simreltratraces{\mixedsbirexecution}{\sbirvaluation}{\sbirtoiml{.}}{k}{\imlexecution}$ and $ \sbirvaluation_i \subseteq \sbirvaluation $ for $0 < i \leq m $.
\end{proof}

\autoref{thm:iml:traceeqm} proves that the $\imlsymb$ model resulting from the translation of $ m $ programs covers all behaviors in the mixed $\imlsymb$ and symbolic execution semantics.
To ensure that the extracted $\imlsymb$ model for $ m $ protocol parties preserves the semantics of their implementations in binary, we have to show~\autoref{thm:bir:traceeqm}.

\begin{theorem}[\imlbir{}-\imlsbir{} Trace Inclusion$ ^* $]
	\label{thm:bir:traceeqm}
	Let $\birprogset$ be $\birsymb$ programs,
	$\imlprog$ be an $\imlsymb$ process and
	$ k \in \naturalnum $ is any upper bound on RNG steps, then,
    for all mixed $ \imlsymb $ and $ \birsymb $ traces $\mixedbirexecution\in\mixedbirexecutions(\mixedbirprogs, \birenv_{\birc{0}}[ \randommem \mapsto \randomvals_{\imlc{k}} , \randommemidx \mapsto 0])$ such that $ \rng{\mixedbirexecution} \leq k $, there exist a mixed $ \imlsymb $ and $ \sbirsymb $ trace $\mixedsbirexecution\in\mixedsbirexecutions(\mixedbirprogs,$ $ \sbirenv_{\sbirc{0}}[ \randommem \mapsto \randomsymvals_{\sbirc{k}}  , \randommemidx \mapsto 0])$ and an  $\sbirvaluation$ such that $	\simreltraces{\mixedbirexecution}{\sbirvaluation}{k}{\mixedsbirexecution}$.
\end{theorem}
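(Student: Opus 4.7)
The plan is to prove Theorem~\ref{thm:bir:traceeqm} by induction on the length of the \imlbir{} trace, following the same structure as the proof of Theorem~\ref{thm:bir:traceeq} but carrying an index $i\in\{1,\dots,m\}$ that records which $\birsymb$ component is active at each step. By the standing assumption that the programs $\birprog_{\birc{1}},\dots,\birprog_{\birc{m}}$ have pairwise disjoint program-counter sets and interact only through the shared $\imlsymb$ channel (plus, via the $\imlc{run}$ construct, disjoint initial slices of the random tape), every transition in $\mixedbirexecution$ can be uniquely attributed to either the $\imlsymb$ process $\imlprog$, to one specific $\birprog_{\birc{k}}$, or to a communication step linking the two.

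For the base case, I would generalize Lemma~\ref{lem:bir:stateeventeq:step0} to the multi-program initial configuration: the state $\mixedexec{\mixediml}{\birstate{0}}$ that packages $\imlprog$ with the initial $\birsymb$ environments for each $\birprog_{\birc{k}}$ stands in the relation $\simrel{}{\sbirvaluation}{}$ with the corresponding symbolic configuration $\mixedexec{\mixediml}{\sbirstate{0}}$, for any interpretation $\sbirvaluation$ mapping the symbolic random values $\randomsymvals_{\sbirc{k}}$ to the concrete tape $\randomvals_{\imlc{k}}$. This is a direct lift of the single-program argument, because the disjointness of per-program random slices and PC sets means that the per-program initial-state relations compose component-wise.

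For the inductive step, given $\miximltransmulti{\miximlbirevent}{p}{\mixedexec{\mixediml}{\birstate{i}}}{\mixedexec{\mixediml}{\birstate{j}}}$, I would case split on the active component. If the step is internal to some $\birprog_{\birc{k}}$ or is a communication between $\birprog_{\birc{k}}$ and $\imlprog$, I apply Lemma~\ref{lem:bir:stateeventeq} to that single program to obtain an extended interpretation $\sbirvaluation' \supseteq \sbirvaluation$ and a matching \imlsbir{} multi-step; the other programs are passive and remain in simulation under $\sbirvaluation'$ because their symbolic environments are untouched. If the step is internal to $\imlprog$, no symbolic extension is needed. The freshness side conditions in the rules of Fig.~\ref{fig:mixedsymbiml} ($\symvar{e}\notin\image{\sbirenv}$, $\symvar{x_i}\notin\image{\sbirenv}$, etc.) guarantee that newly introduced symbols never collide with previously introduced ones, so the extensions $\sbirvaluation'$ can always be chosen consistently across the induction, and analogously for \autoref{thm:iml:traceeqm} via \autoref{thm:iml:traceeq}.

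The main obstacle I anticipate is purely bookkeeping: tracking the interleaving of transitions across $m$ programs and the attacker, and verifying that the inductively extended interpretation never invalidates the simulation of passive programs. This is resolved by the freshness conditions, already exploited in the single-program proof, together with the disjointness of PC sets across $\birprog_{\birc{1}},\dots,\birprog_{\birc{m}}$. Putting these together yields the desired global interpretation $\sbirvaluation$ and \imlsbir{} witness trace $\mixedsbirexecution\in\mixedsbirexecutions(\mixedbirprogs,\sbirenv_{\sbirc{0}}[\randommem \mapsto \randomsymvals_{\sbirc{k}},\randommemidx \mapsto 0])$ with $\simreltraces{\mixedbirexecution}{\sbirvaluation}{k}{\mixedsbirexecution}$, exactly as in the pattern used to derive Theorem~\ref{thm:iml:traceeqm} from Theorem~\ref{thm:iml:traceeq}.
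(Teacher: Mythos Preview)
Your proposal is correct but takes a different route from the paper. The paper's proof of Theorem~\ref{thm:bir:traceeqm} does not redo the induction at the multi-program level; instead it invokes Theorem~\ref{thm:bir:traceeq} as a black box once for each component $\birprog_{\birc{i}}$ (with the same $\imlprog$), obtaining per-program interpretations $\sbirvaluation_i$, and then asserts that these combine into a single $\sbirvaluation$ with $\sbirvaluation_i\subseteq\sbirvaluation$ covering the interleaved trace. Your approach, by contrast, unfolds the induction directly on the multi-program trace and applies the single-step Lemma~\ref{lem:bir:stateeventeq} to the active component at each step, keeping the passive components in simulation via the freshness side-conditions. The paper's argument is shorter and more compositional in spirit, but it leaves the combination step (``Then, we can conclude\ldots'') essentially unjustified: it does not explain how per-$\birprog_{\birc{i}}$ trace inclusion yields inclusion for an arbitrary interleaving in $\mixedbirprogs$. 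Your argument makes that step explicit, at the cost of repeating the inductive scaffolding. In effect, you are supplying the bookkeeping that the paper's proof sketch elides; both land on the same conclusion, and your reference to the analogous lift for Theorem~\ref{thm:iml:traceeqm} is apt since the paper uses the identical pattern there.
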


\begin{proof}
	By~\autoref{thm:bir:traceeq}, for each $0 < i \leq m $, we have for all mixed $\imlsymb$ and $\birsymb$ traces
	$\mixedbirexecution_i \in\mixedbirexecutions_i(\mixedbirprogidx{\birprog_{\birc{i}}}, \birenv_{\birc{0}}[ \randommem \mapsto \randomvals_{\imlc{k}}  , \randommemidx \mapsto 0])$, there is a mixed $\imlsymb$ and symbolic execution trace
	$\mixedsbirexecution_i \in\mixedsbirexecutions_i(\mixedbirprogidx{\birprog_{\birc{i}}}, \sbirenv_{\sbirc{0}}$  $[ \randommem \mapsto \randomsymvals_{\sbirc{k}}  , \randommemidx \mapsto 0])$,
	and an interpretation $\sbirvaluation_i$ such that $\simreltraces{\mixedbirexecution_i}{\sbirvaluation_i}{k}{\mixedsbirexecution_i}$.
	Then, we can conclude that for all mixed $\imlsymb$ and $\birsymb$ traces
	$\mixedbirexecution\in\mixedbirexecutions(\mixedbirprogs, \birenv_{\birc{0}}[ \randommem \mapsto \randomvals_{\imlc{k}}  , \randommemidx \mapsto 0])$,
	there exist a mixed $\imlsymb$ and symbolic execution trace
	$\mixedsbirexecution\in\mixedsbirexecutions(\mixedbirprogs, $  $\sbirenv_{\sbirc{0}}[ \randommem \mapsto \randomsymvals_{\sbirc{k}}  , \randommemidx \mapsto 0])$ 
	and an interpretation $\sbirvaluation$ such that $	\simreltraces{\mixedbirexecution}{\sbirvaluation}{k}{\mixedsbirexecution}$ and $ \sbirvaluation_i \subseteq \sbirvaluation $ for $0 < i \leq m $.
\end{proof}

\autoref{thm:bir:traceeqm} states that symbolically executed $\birsymb$ programs $\birc{\birprog_1} , ... ,$  $\birc{\birprog_m}$ preserve all behaviors of the same programs in the concrete execution for an appropriately chosen interpretation and random memory.
In the following, we measure the success probability of the attacker for $\birsymb$ programs in \imlbir{} execution semantics $ \mixedbirprogs $ and extracted $\imlsymb$ process $ \imltransprogs $.
Then we show by~\autoref{thm:soundnessm} that $ \mixedbirprogs $ is at least as secure as $ \imlprog \{\sbirtoiml{\birprog_{\birc{1}}} , ... , $  $\sbirtoiml{\birprog_{\birc{m}}} 
\} $ with respect to any trace property $ \traceproperty$, security parameter $n$ and upper bound $k$ on the number of \textsc{RNG} steps.

\begin{restatable}[Translation preserves attacks$ ^* $]{theorem}{thmsoundnessm}\label{thm:soundnessm}
	Given an $\imlsymb$ process $\imlprog$,
	$\birsymb$ programs $\birprogset$,
	a security parameter $n \in \naturalnum$,
	a trace property $ \traceproperty$
	and
	an upper bound $k\in \naturalnum$
	on the number of \textsc{RNG} steps in $\mixedbirtraces(\mixedbirprogs, \birenv_{\birc{0}}[ \randommem \mapsto  \randomvals_{\imlc{k}}  , \randommemidx \mapsto 0])$, 
	we get that
	$
	\insecurity{\mixedbirprogs,n,k}{\traceproperty} \le
	\insecurity{\imltransprogs, n}{ \traceproperty}
	$
\end{restatable}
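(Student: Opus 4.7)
The plan is to follow exactly the structure of the proof of Theorem~\ref{thm:soundness}, replacing every appeal to the single-program trace inclusion theorems with their multi-program counterparts (Theorems~\ref{thm:iml:traceeqm} and~\ref{thm:bir:traceeqm}), and lifting the auxiliary lemmas of Appendix~\ref{lem:thm4lems} from the single-program to the multi-program setting. Since the mixed execution semantics \imlsbir{} and \imlbir{} already treats several $\birsymb$ programs as independent entities running in parallel with the $\imlsymb$ process (communication is channel-based, memory regions are disjoint, and the random tape $\randommem$ is shared but read deterministically via $\randommemidx$), all the relational invariants that support the single-program proof carry over verbatim once the simulation relations $\simrel{}{\sbirvaluation}{}$ and $\simreltra{}{\sbirvaluation}{\sbirtoiml{\cdot}}{}$ are extended pointwise over the family $\birprogset$.

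Concretely, I would first state and prove the multi-program versions of Lemma~\ref{lem:traceinclusion}, Lemma~\ref{lem:same-random-steps}, Lemma~\ref{lem:extra-randomness}, and Lemma~\ref{lem:traceinclusion-bir}. Each of these is obtained by skolemizing Theorem~\ref{thm:iml:traceeqm} (resp.\ Theorem~\ref{thm:bir:traceeqm}) in place of its single-program counterpart; the injectivity arguments in the proofs of Lemmas~\ref{lem:traceinclusion} and~\ref{lem:traceinclusion-bir} rely only on the canonical generation of symbolic values and on the fact that $\randomsymvals_{\sbirc{k}}$ is consumed in order along any execution, both of which are unchanged when we have several $\birsymb$ programs. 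Similarly, Lemmas~\ref{lem:same-random-steps} and~\ref{lem:extra-randomness} are purely about probabilities assigned to \textsc{INew}/\textsc{RNG} transitions, which are generated identically regardless of which of the $m$ programs issues the call.

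Then I would instantiate the chain of inequalities from the proof of Theorem~\ref{thm:soundness}: starting from $\insecurity{\imltransprogs,n}{\traceproperty} = \sum_{\imltrace \in \imltraces(\imltransprogs,n)\cap \tracepropertyneg} \imlpr(\imltrace)$, use the multi-program versions of Lemmas~\ref{lem:traceinclusion} and~\ref{lem:same-random-steps} to rewrite the sum over $\imlsymb$ traces as a weighted sum over \imlsbir{} traces of $\mixedbirprogs$ together with interpretations of the consumed random tape, apply the extra-randomness lemma to normalize the range of $\sbirvaluation$ to all of $\randomsymvals_{\sbirc{k}}$ (picking up the factor $2^{-nk}$), and finally apply the multi-program version of Lemma~\ref{lem:traceinclusion-bir} to pass from \imlsbir{} traces to \imlbir{} traces indexed by concrete random tapes $\randomvals_{\imlc{k}} \in \imlvals_n^k$. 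This last step yields exactly $\insecurity{\mixedbirprogs,n,k}{\traceproperty}$ by Definition~\ref{def:BIR-insec}.

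The main obstacle is bookkeeping rather than conceptual: one has to verify that in the mixed semantics of $ m $ parallel programs the $\randommemidx$ counter and the fresh-name generation remain coherent across parties, so that the injection produced by skolemizing Theorem~\ref{thm:bir:traceeqm} is still injective when applied trace-by-trace to a composed system. Because Theorems~\ref{thm:iml:traceeqm} and~\ref{thm:bir:traceeqm} are proved by reduction to the single-program theorems with $\sbirvaluation_i \subseteq \sbirvaluation$, the required interpretation can be assembled as the union of the per-program interpretations, and the injectivity argument inside each reduces to its single-program instance. No new calculation is needed; the proof reduces to ``same as Theorem~\ref{thm:soundness}, but with starred lemmas,'' and the case analysis in the supporting lemmas does not require revisiting.
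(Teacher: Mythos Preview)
Your proposal is correct but takes a genuinely different route from the paper.  The paper does \emph{not} lift the auxiliary lemmas to the multi-program setting; instead, it decomposes the insecurity of the composed system into a sum of per-program insecurities,
\[
\insecurity{\imltransprogs, n}{\traceproperty}
=\sum_{0<i\le m}\insecurity{\imltransprogidx{\birprog_{\birc{i}}}, n}{\traceproperty},
\]
applies the single-program Theorem~\ref{thm:soundness} as a black box to each summand, and then reassembles the $m$ terms back into $\insecurity{\mixedbirprogs,n,k}{\traceproperty}$ via the analogous decomposition on the $\birsymb$ side.  Theorems~\ref{thm:iml:traceeqm} and~\ref{thm:bir:traceeqm} are invoked only to justify that the decomposed sums recombine correctly (the ``$\sbirvaluation(\mixedsbirtrace)\in\tracepropertyneg\iff\imltrace\in\tracepropertyneg$'' side condition).

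What the paper's approach buys is modularity: no lemma needs to be re-proved, and the multi-program result is literally a sum of $m$ instances of the single-program result.  What your approach buys is that it never relies on the additivity of insecurity across parallel components, which is the step in the paper's chain that carries the real content and would need the most care to justify for a general trace property $\traceproperty$.  Your route is longer but structurally cleaner; the paper's is shorter but leans on a decomposition identity that is asserted rather than argued.
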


\begin{proof}
	\newcommand{\mDef}[1]{\text{Def.~\ref{#1}}}
	\newcommand{\mLem}[1]{\text{Lem.~\ref{#1}}}
	\newcommand{\mThm}[1]{\text{Thm.~\ref{#1}}}
	
	The insecurity of $\imltransprogs$ w.r.t. $\traceproperty$ is as follows:
	
	$\insecurity{\imltransprogs, n}{ \traceproperty}$
	\begin{align*}
		\stackrel{}{=} &
		\sum_{\imltrace \in \imltraces(\imltransprogs,n)\cap \tracepropertyneg } \imlpr(\imltrace)
		\\
		\intertext{Note that by~\autoref{thm:iml:traceeqm}, we have $ \simreltratraces{\mixedsbirexecution}{\sbirvaluation}{k}{\sbirtoiml{\cdot}}{\imlexecution}$
			which implies $\sbirvaluation(\mixedsbirtrace) \in \tracepropertyneg \iff \imltrace \in \tracepropertyneg$}
		\stackrel{}{=} &
		\sum_{\substack{0 < i \leq m \\}}
		\sum_{
			\substack{ \\ \\
				\imltrace_\imlc{i} \in  \\ \imltraces_\imlc{i}((\imltransprogidx{\birprog_{\birc{i}}},n) \cap \tracepropertyneg\\
			}
		}  \imlpr(\imltrace_\imlc{i})
		\\ \\
		\stackrel{\mDef{def:IML-insec}}{=} &
		\sum_{\substack{0 < i \leq m \\}} \insecurity{\imltransprogidx{\birprog_{\birc{i}}}, n}{ \traceproperty}
		\\ \\
		\stackrel{\mThm{thm:soundness}}{\ge} &
		\sum_{\substack{0 <  i \leq m \\}}
		\insecurity{\mixedbirprogidx{\birprog_{\birc{i}}},n,k}{\traceproperty}\\ \\
		\stackrel{\mDef{def:BIR-insec}}{=} &
		\sum_{\substack{0 <  i \leq m \\}}
		2^{-n \cdot k} \cdot \sum_{\substack{
				\randomvals_{\imlc{k}} \in \imlvals_n^k \\
				\mixedbirtrace_i \in  \tracepropertyneg \cap \\
				\mixedbirtraces_i(\mixedbirprogidx{\birprog_{\birc{i}}}, \birenv_{\birc{0}}[ \randommem \mapsto  \randomvals_{\imlc{k}}  , \randommemidx \mapsto 0]) }}
		\imlpr(\mixedbirtrace_i)
		\\
		\intertext{Note that by~\autoref{thm:bir:traceeqm}, we have $ \simreltraces{\mixedbirexecution}{\sbirvaluation}{k}{\mixedsbirexecution}$
			which implies $\sbirvaluation(\mixedsbirtrace) \in \tracepropertyneg \iff \mixedbirtrace \in \tracepropertyneg$}
		= \ &
		2^{-nk} \cdot
		\sum_{\substack{
				\randomvals_{\imlc{k}} \in \imlvals_n^k \\
				\mixedbirtrace\in\mixedbirtraces \cap \tracepropertyneg
		}} \imlpr(\mixedbirtrace)
		\\ \\
		= \ &
		\insecurity{\mixedbirprogs,n,k}{\traceproperty}
	\end{align*}
\end{proof}
\end{full}
\end{document}